\newtheorem{theorem}{Theorem}[section]
\newtheorem{lemma}[theorem]{Lemma}
\newtheorem{corollary}[theorem]{Corollary}
\newtheorem{proposition}[theorem]{Proposition}
\newtheorem{problem}{Problem}
\newtheorem{remark}[theorem]{Remark}
\newtheorem{assumption}{Assumption}
\DeclareMathAlphabet{\mymathbb}{U}{BOONDOX-ds}{m}{n}
\newcommand{\setdef}[2]{\{#1 \; : \; #2\}}
\newcommand{\Ac}{\mathcal{A}}
\newcommand{\Rc}{\mathcal{R}}
\newcommand{\Ic}{\mathcal{I}}
\newcommand{\Jc}{\mathcal{J}}
\newcommand{\Sc}{\mathcal{S}}
\newcommand{\Var}{\mathrm{Var}}
\newcommand{\VaR}{\operatorname{Var}}
\newcommand{\real}{\mathbb{R}}
\newcommand{\Cc}{\mathcal{C}}
\newcommand{\Nc}{\mathcal{N}}
\newcommand{\argmin}[2] {\mathrm{arg}\min_{#1}#2}
\DeclareSymbolFont{bbold}{U}{bbold}{m}{n}
\DeclareSymbolFontAlphabet{\mathbbold}{bbold}
\newcommand{\norm}[1]{\lVert#1\rVert}
\newcommand{\Ite}{\operatorname{It}_{\epsilon}}
\newcommand\oprocendsymbol{\hbox{$\bullet$}}
\newcommand\oprocend{\relax\ifmmode\else\unskip\hfill\fi\oprocendsymbol}
\newcommand*{\QEDA}{\hfill\ensuremath{\blacksquare}}%
\newcommand\xqed[1]{%
  \leavevmode\unskip\penalty9999 \hbox{}\nobreak\hfill
  \quad\hbox{#1}}
\newcommand\demo{\xqed{$\bullet$}}
\newcounter{countitems}
\newcounter{nextitemizecount}
\newcommand{\setupcountitems}{%
  \stepcounter{nextitemizecount}%
  \setcounter{countitems}{0}%
  \preto\item{\stepcounter{countitems}}%
}
\newcommand{\computecountitems}{%
  \edef\@currentlabel{\number\c@countitems}%
  \label{countitems@\number\numexpr\value{nextitemizecount}-1\relax}%
}
\newcommand{\nextitemizecount}{%
  \getrefnumber{countitems@\number\c@nextitemizecount}%
}
\newcommand{\previtemizecount}{%
  \getrefnumber{countitems@\number\numexpr\value{nextitemizecount}-1\relax}%
}
\computecountitems\ifnumcomp{\previtemizecount}{>}{4}{\end{multicols}}{}}
\newcommand{\longthmtitle}[1]{\mbox{}\emph{(#1):}}
\newcommand{\comment}[1]{} 
\newcolumntype{P}[1]{>{\centering\arraybackslash}p{#1}}
\begin{document}
\title{\Large \bf 
  Off-Policy Reinforcement Learning with Anytime Safety Guarantees via
  Robust Safe Gradient Flow\thanks{A preliminary version of this work
    appeared as \cite{PM-AM-JC:25-l4dc} at the 2025 Learning for
    Dynamics and Control Conference. This work was partially supported
    by ONR Award N00014-23-1-2353.}}













\author{Pol Mestres \qquad Arnau Marzabal \qquad Jorge Cort{\'e}s
  \thanks{The authors are with the Department of Mechanical and
    Aerospace Engineering, University of California, San Diego,
    \{pomestre,amarzabal,cortes\}@ucsd.edu}%
}

\maketitle

\begin{abstract}
  This paper considers the problem of solving constrained
  reinforcement learning (RL) problems with anytime guarantees,
  meaning that the algorithmic solution must yield a
  constraint-satisfying policy at every iteration of its evolution.
  Our design is based on a discretization of the Robust Safe Gradient
  Flow (RSGF), a continuous-time dynamics for anytime constrained
  optimization whose forward invariance and stability properties we
  formally characterize.  The proposed strategy, termed RSGF-RL, is an
  off-policy algorithm which uses episodic data to estimate the value
  functions and their gradients and updates the policy parameters by
  solving a convex quadratically constrained quadratic program.  Our
  technical analysis combines statistical analysis, the theory of
  stochastic approximation, and convex analysis to determine the
  number of episodes sufficient to ensure that safe policies are
  updated to safe policies and to recover from an unsafe policy, both
  with an arbitrary user-specified probability, and to establish the
  asymptotic convergence to the set of KKT points of the RL problem
  almost surely.  Simulations on a navigation example and the
  cart-pole system illustrate the superior performance of RSGF-RL with
  respect to the state of the art.
\end{abstract}

\section{Introduction}
Reinforcement learning (RL) seeks to find an optimal decision policy
by having an agent interact with its environment through trial and
error.  At any given state, an action taken by the agent makes them
transition to a new state with some probability, after which they
incur an associated reward.  The optimal policy is that which
maximizes a prespecified long-horizon cumulative reward.  Today,
RL-based methods are pervasive in a wide range of technological
applications of machine learning and artificial intelligence in
society.  However, the use of RL in safety-critical applications
(e.g., autonomous driving, healthcare, or energy management) requires
additional precautions, because the process of trial-and-error
can lead the agent towards unsafe regions,
with potentially catastrophic consequences.  This has sparked the
development of safe RL techniques that seek to find optimal policies
meeting desired safety specifications.  In this paper, we design an
algorithm to solve constrained RL problems in an anytime fashion,
meaning that the algorithm satisfies the constraints at every iterate.

\subsubsection*{Literature Review}

Safe RL has been actively pursued in recent years,
see~\cite{LB-MG-AWH-ZY-SZ-JP-APS:22,YL-AH-XL:21,JG-FF:15,SG-LY-YD-GC-FW-JW-AK:24}
for comprehensive surveys on the subject. Here, we discuss the works
best aligned with the approach to safe RL taken here. Safety
constraints in RL are often expressed as \textit{cumulative
  constraints}, which require the expected value of a sum of costs
over a given time horizon to be kept below a certain
threshold~\cite{EA:99,JA-DH-AT-PA:17,YC-MG-LJ-MP:17,SP-LC-MCF-AR:19}.
Markov Decision Processes with such type of constraints are referred
to as Constrained Markov Decision Processes (CMDPs).  A standard
approach to solve CMDPs are primal-dual
methods~\cite{SP-MCF-LFOC-AR:23,DD-KZ-TB-MJ:20}, which take a gradient
ascent step in the primal variable and a gradient descent step in the
dual variable.  For finite state and action MDPs with a special type
of transition functions,~\cite{DD-XW-ZY-ZW-MJ:21} shows that such
primal-dual scheme converges to the optimal policy.  Similarly, for
continuous state and action spaces,~\cite{SP-LC-MCF-AR:19}
also provides a primal-dual scheme that provably converges to the
optimal policy.  However, these guarantees require solving an
unconstrained RL algorithm at every iteration, which makes the
algorithm computationally hard to execute
%
%
(although practical implementations are given
in~\cite{SP-MCF-LFOC-AR:23}).  Furthermore, primal-dual schemes can
lead to safety violations during the training process, which
compromises their implementation in physical domains.
Although there exist implementations of primal-dual methods that
guarantee safety during training, these are either limited to
particular policy parametrizations~\cite{SZ-TTD-JR:22} or solve a
relaxed version of the problem and hence introduce an optimality
gap~\cite{QB-ASB-MA-AK-VA:22}.
Beyond primal-dual methods, there exist other
algorithms in the literature whose goal is to provide safety
guarantees during training.  For example,~\cite{JA-DH-AT-PA:17}
proposes CPO, an algorithm that is solely based on primal updates and
that enjoys safety guarantees at every iteration.  However, performing
the exact policy update is computationally intensive, and the proposed
practical implementations employs a first-order approximation of the
objective and constraints that might violate the safety constraints
during training.  On the other hand,~\cite{YL-JD-XL:20} introduces
IPO, another primal method that includes the safety constraints as
penalty terms in the objective function, and also guarantees the
satisfaction of the safety constraints during training. However, this
algorithm presupposes the existence of a safe initial policy and its
convergence properties are not studied.  The method proposed
in~\cite{YC-ON-ED-MG:18} leverages Lyapunov functions to guarantee the
satisfaction of constraints during training.  However, the method
proposed to search for such Lyapunov functions might be
computationally intensive, and it is only shown to converge for a
limited class of problems.  On the other hand,~\cite{JM-XZ:24}
introduces an algorithm for finite state and action CMDPs that
guarantees that trajectories satisfy budget constraints at all times.
It is also possible to optimize over a class of truncated policies so
that unsafe actions have probability zero, as
in~\cite{WS-VKS-KCK-SS-JL-VG-BMS:24}, but such restrictions also
introduce an optimality gap, which is not formally quantified.

The methods described above are all on-policy, i.e., they rely on
trajectories from the current policy iterate to generate the estimates
needed for the algorithm execution.  Instead, here we pursue the
design of off-policy methods, where trajectories from other policies
can be used to generate the estimates.  Such methods enable the use of
datasets of trajectories obtained offline or in previous iterations,
significantly enhancing the efficiency of the algorithm implementation.


\subsubsection*{Statement of Contributions}

The paper contributions are:
\begin{enumerate}
\item we introduce a continuous-time algorithm for anytime constrained
  optimization termed Robust Safe Gradient Flow (RSGF).  We identify
  conditions under which the RSGF is well defined and locally
  Lipschitz. We also establish the equivalence between its equilibria
  and the KKT points of the constrained optimization problem, and show
  forward invariance of the constraint set and convergence to the set
  of KKT points;
\item we define estimates for the value functions defining the
  constrained RL problem as well as their gradients. These estimates
  are off-policy, i.e. the estimates of any given policy
  can be constructed using trajectories generated by other policies.
  We establish a range of statistical properties of these estimates,
  including their mean and bounds on their variance and tail probabilities;
\item we combine (i) and (ii) to introduce the off-policy Robust Safe
  Gradient Flow-based Reinforcement Learning (RSGF-RL). This algorithm
  is based on a discretization of RSGF and employs the off-policy
  estimates of the value function and their gradients.  By leveraging
  the statistical properties of the latter, we determine a sufficient
  number of episodes such that RSGF-RL updates safe (and unsafe but
  close to safe) policies to safe policies for any prescribed
  confidence.  Combining the properties of RSGF with the theory of
  stochastic approximation~\cite{HJK-DSC:78,VSB:08}, we also show that
  the iterates of RSGF-RL asymptotically converge to a KKT point almost surely, 
  and characterize its rate of convergence.
\item we illustrate the performance of RSGF-RL on a navigation example
  and the cart-pole system, and compare it against the state of the
  art.
\end{enumerate}
Preliminary results were presented in the conference
article~\cite{PM-AM-JC:25-l4dc}, whose focus was restricted to
on-policy data and a single safety constraint. Furthermore, the
convergence to the set of KKT points was only ensured in
expectation.  All of these are special cases of the present work.  The
generalization here from on-policy to off-policy data and the
establishment of almost sure convergence, along with the novel
technical treatment based on the dynamical properties of RSGF and the
theory of stochastic approximation, are instrumental in expanding the
applicability of the proposed framework.


\section{Preliminaries}

We introduce here the notation and basic notions on stability of
dynamical systems, Markov decision processes, and constraint
qualification in nonlinear programming.

\emph{Notation:} We denote by $\mathbb{Z}_{>0}$, $\real$, and
$\real_{\geq0}$ the set of positive integers, real, and nonnegative
real numbers, respectively. Given $N \in\mathbb{Z}_{>0}$, we let
$[N]=\{1,2,\hdots,N \}$.  For $N_1, N_2\in\mathbb{Z}$, we let
$[N_1:N_2] = \{ N_1+1, N_1+2, \hdots, N_2 \}$.  For $x\in\real^n$,
$\norm{x}$ denotes its Euclidean norm, and for $l\in[n]$, $x^{(l)}$ is
its $l$-th component.  Given a set $\Cc\subset\real^n$,
$\mathbbm{1}_{\Cc}$ is the indicator function of $\Cc$, which is such
that $\mathbbm{1}_{\Cc}(x) = 1$ if $x\in\Cc$ and
$\mathbbm{1}_{\Cc}(x) = 0$ otherwise.  We let $\mathbf{I}_n$ be the
$n$-dimensional identity matrix. 
Given a function $V:\real^n\to\real^m$,
we let
$\text{Im}(V) = \setdef{ V(\theta)\in\real^m}{ \theta\in\real^n }$
denote its image.  Given a random variable $X$ taking scalar values,
$X\sim \eta$ indicates $X$ is distributed according to a probability
distribution~$\eta$, $\mathbb{E}[X]$ denotes its expectation, and
$\VaR(X) = \mathbb{E}[(X-\mathbb{E}[X])^2]$ its variance.  Given
a set $S$, $P(S)$ denotes its power set, i.e., the collection of all
subsets of~$S$. The collection $\Sigma \subset P(S)$ is a
$\sigma$-algebra if and only if: (i) $S$ is in $\Sigma$, (ii) if
$A\in\Sigma$, the complement of $A$ is also in $\Sigma$, (iii) if
$\{ A_i \}_{i\in\mathbb{Z}>0}$ is a countable union of sets in
$\Sigma$, then $\bigcup_{i\in\mathbb{Z}_{>0}}A_i\in\Sigma$.

\subsubsection*{Stability of Dynamical Systems}
We recall here concepts on stability of dynamical systems
following~\cite{HK:02}.  Let $F:\real^n\to\real^n$ be a locally
Lipschitz vector field and consider the dynamical system
$\dot{z} = F(z)$. Local Lipschitzness ensures that for every initial
condition $x\in\real^n$ there exists $T>0$ and a unique trajectory
$z:[0,T]\to\real^n$ such that $z(0) = x$ and $\dot{z}(t) = F(z(t))$.
If the solution is defined for all $t\geq0$, then it is
\textit{forward complete}.  If every solution is forward complete, for
each $t \geq 0$, the \textit{flow map} is defined by the function
$\Phi_t:\real^n\to\real^n$ such that $\Phi_t(x) = z(t)$.  A set
$\mathcal{K}\subset\real^n$ is forward invariant
if, for every initial condition $x\in\mathcal{K}$, the trajectory with
initial condition at $x$ is forward complete and
$\Phi_{t}(x)\in\mathcal{K}$ for all $t\geq0$.

\subsubsection*{Constrained Markov Decision Processes}
Here we recall concepts on Constrained Markov Decision Processes
(CMDP) following~\cite{EA:99,RSS-AGB:18}.  A CMDP is given by a tuple
$ \mathcal{M} = (\Sc,\Ac,P,R_0,\{ R_j \}_{j=1}^q )$, with $q\in\mathbb{Z}_{>0}$. 
Here, $\Sc$ is a set of states,
$\Ac$ is a set of actions, and $P:\Sc\times\Ac\times\Sc\to[0,1]$ is a
probability transition function, where $P(s, a, s^{\prime})$
represents the probability that the agent transitions to state
$s^{\prime}\in\Sc$ given that it is at state $s\in\Sc$ and takes
action $a\in\Ac$.  Further, $R_0:\Sc\times\Ac\times\Sc\to\real$ and
$R_j:\Sc\times\Ac\times\Sc\to\real$ for $j\in[q]$ are functions: for every
$s\in\Sc$, $a\in\Ac$, and $s^{\prime}\in\Sc$, $R_0(s,a,s^{\prime})$ is
the reward associated with completing a task when an agent is at state
$s$, takes action $a$, and transitions to state $s^{\prime}$.
Instead, $R_j(s,a,s^{\prime})$ is the cost associated with a safety
constraint when an agent is at state $s$, takes action $a$, and
transitions to state $s^{\prime}$.  A policy $\pi$ for the CMDP is a
function that maps every state $s\in\Sc$ to a distribution over $\Ac$,
denoted as $\pi(\cdot|s)$:
here, $\pi(a|s)$ is the probability of taking action $a\in\Ac$ at
state $s\in\Sc$.

\emph{Constraint Qualifications in Nonlinear Programming:}
We summarize here various constraint qualification conditions from
nonlinear programming following~\cite{DPB:99,SB-LV:09,JL:95}. Let
$f, g_1 \hdots, g_q: \real^d\to\real$ be differentiable functions, and
consider a nonlinear optimization problem of the form
\begin{align}\label{eq:optimization-problem}
  \notag
  &\min\limits_{x\in\real^d} f(x)
  \\
  &\ \text{s.t.}\ g_j(x) \leq 0, \ j\in[q],
\end{align}
%
%
where $f, g_1, \hdots, g_q$ are continuously differentiable. Let the
active and inactive constraint sets be defined by
$ I_0(x) = \setdef{j\in[q]}{g_j(x) = 0}$ and
$ I_{-}(x) = \setdef{j\in[q]}{g_j(x) < 0}$. Then,
\begin{itemize}
\item Slater's condition (SC) holds
  for~\eqref{eq:optimization-problem} if there exists $x\in\real^d$
  such that $g_j(x) < 0$ for all $j\in[q]$;
\item the Mangasarian-Fromovitz Constraint Qualification (MFCQ) holds
  for~\eqref{eq:optimization-problem} at $x\in\real^d$ if there exist
  $\xi\in\real^d$ such that $\nabla g_j(x)^\top \xi < 0$ for all
  $j\in I_0(x)$;
\item the constant-rank condition (CRC) holds
  for~\eqref{eq:optimization-problem} at $x\in\real^d$ if there exists
  a neighborhood $\Nc$ of $x$ such that for all $I \! \subset \! I_0(x)$,
  $\text{rank}( \{ \nabla g_j(\bar{x}) \}_{j\in I})$ is constant for all
  $\bar{x}\in \Nc$.
\end{itemize}
If $x^*$ is a local minimizer of~\eqref{eq:optimization-problem}, and
MFCQ or CRC
hold at $x^*$, then there exist $u^*\in\real^q$ (which we refer to as
a Lagrange multiplier vector) such that the
\textit{Karush-Kuhn-Tucker} (KKT) conditions hold:
\begin{subequations}
  \begin{align}
    & \nabla f(x^*) + \sum_{i=1}^q u_j^* \nabla g_j(x^*) =
      0,
    \\
    &g_j(x^*) \leq 0, \ u_j^* \geq 0, \ u_j^* g_j(x^*) = 0 , \quad j \in [q] .
  \end{align}
  \label{eq:kkt-equations}
\end{subequations}
If~\eqref{eq:optimization-problem} is convex, $x^*$ is a local
minimizer, and SC holds, then the KKT
conditions~\eqref{eq:kkt-equations} also hold.  Any $x^*\in\real^d$
for which there exists $u^*\in\real^q$
satisfying~\eqref{eq:kkt-equations} is referred to as a KKT point
of~\eqref{eq:optimization-problem}.  We note that $u_j^*$ is the
Lagrange multiplier associated with the $j$-th constraint.

Given differentiable functions
$\tilde{f},\tilde{g}_1,\hdots,\tilde{g}_q:\real^d\times\real^c
\rightarrow \real$, consider the parametric nonlinear optimization
problem
\begin{align}\label{eq:parametric-optimization-problem}
  \notag
  &\min\limits_{x\in\real^d} \tilde{f}(x,z) \\
  &\ \text{s.t.} \ \tilde{g}_j(x,z) \leq 0, \ j\in[q].
\end{align}
Let 
$\tilde{I}_0(x,z) = \setdef{ j\in[q] }{ g_j(x,z) = 0 }$
be the set of active constraints.
We say that the constant-rank condition (CRC) holds
for~\eqref{eq:parametric-optimization-problem} at
$(x_0,z_0)\in\real^d\times\real^c$ if there exists a neighborhood
$\Nc$ of $(x_0,z_0)$ such that for any
$\tilde{I}\subset\tilde{I}_0(x_0,z_0)$ and $(x,z)\in\Nc$,
$\{ \nabla_x g_j(x,z) \}_{j\in\tilde{I}}$ has a constant rank.

\section{Problem Statement}\label{sec:problem-statement}
In this section we formalize the problem of solving constrained
reinforcement learning (RL) problems in an anytime fashion.  Given a
CMDP
$\mathcal{M} = (\mathcal{S},\mathcal{A},P,R_0, \{ R_j \}_{j=1}^q )$,
the goal is to maximize the cumulative reward while keeping the
cumulative costs below a threshold.  We consider a parametric
class of policies indexed by a vector $\theta\in\real^d$. We denote
the policy associated with $\theta$ as $\pi_{\theta}$.  Given a
distribution $\eta$ of initial states, a discount factor
$\gamma\in(0,1)$, and a time horizon $T\in\mathbb{Z}_{>0}$, we
consider the following problem:
\begin{align}\label{eq:safe-RL-problem-parameterized-policies}
  &\min_{\theta\in\real^d} \ V_0(\theta) \!=\! \mathbb{E}_{ \substack{a \sim
    \pi_{\theta}(\cdot|s) \\ \hspace{-0.7cm} s_0 \sim \eta} }
  \biggl[ \sum_{k=0}^{T} 
  -\gamma^k R_0(s_k, a_k, s_{k+1})
  \biggr] 
  \\  
  & \; \text{s.t.} \quad V_j(\theta) \!=\! \mathbb{E}_{ \substack{a \sim
    \pi_{\theta}(\cdot|s) \\ \hspace{-0.7cm} s_0 \sim \eta} }
  \biggl[ \sum_{k=0}^{T} 
  \gamma^k R_j(s_k, a_k, s_{k+1}) \biggr] \! \leq \!
  0, \
  j\in[q]. 
  \notag
\end{align}
Problem~\eqref{eq:safe-RL-problem-parameterized-policies} seeks to
find the policy $\pi_{\theta}$ that maximizes the expected cumulative
reward given by $R_0$ (for convenience, we have changed the
sign of $R_0$ to
turn~\eqref{eq:safe-RL-problem-parameterized-policies} into a
minimization problem) over $T$ time steps and also maintains the
expected cumulative costs given by $R_j$ for all $j\in[q]$ 
over $T$ time steps below zero.  Throughout the paper, we refer to 
$V_0, \hdots, V_q$ as \textit{value functions}. 
The discount factor~$\gamma$ determines how much future rewards are valued
compared to immediate rewards.

\begin{remark}\longthmtitle{Ensuring safety of  state
    trajectories}\label{rem:interpretation-safety} {\rm Throughout the
    paper, the notion of \emph{safety} refers to the satisfaction of
    the constraints
    in~\eqref{eq:safe-RL-problem-parameterized-policies}, and
    therefore pertains the policy parameter~$\theta$.  Interestingly,
    with an appropriate selection of the cost function $R_j$, this
    safety guarantee implies the forward invariance of a desired set
    $\Cc_j\subset \Sc$ with a prescribed confidence. In fact, let
    \[
      R_j(s_t) = 1 - \mathbbm{1}_{\Cc_j}(s_t) + \frac{\gamma^T
        \delta_j}{\sum_{t=0}^{T-1} \gamma^t},
    \]
    where
    $0 < \delta_j < 1$, for all $j\in [q]$, are prescribed confidence
    levels. According to~\cite[Theorems~1 and~2]{SP-MCF-LFOC-AR:23},
    the satisfaction of the cumulative constraints
    in~\eqref{eq:safe-RL-problem-parameterized-policies} implies
    that
    \[
      \mathbb{P} \Big( \bigcap_{t=0}^{T-1} \{ s_t \in \Cc_j \} \Big)
      \geq 1 - \delta_j, \quad \forall j \in [q],
    \]
    i.e., the probability that the states remain within $\Cc_j$ in the
    next $T$ timesteps is at least $1 - \delta_j$.  \demo}
\end{remark}

%
%

The functions $\{ V_i \}_{i=0}^q$ are in general non-convex, and this
makes solving~\eqref{eq:safe-RL-problem-parameterized-policies}
NP-hard.  Therefore, we aim to find local minimizers (or, more
generally, KKT points)
of~\eqref{eq:safe-RL-problem-parameterized-policies}.
Additionally, because of their definition, the values of
$V_0, \hdots, V_q$ and their gradients at arbitrary $\theta\in\real^d$
are not readily available, and instead need to be estimated through
episodic data (i.e., trajectories generated by the policy
$\pi_{\theta}$) of the CMDP.
%
%

Formally, we seek to solve the following problem.

\begin{problem}\label{problem:problem}
  Develop an RL algorithm that,
  \begin{itemize}
  \item converges to a KKT point
    of~\eqref{eq:safe-RL-problem-parameterized-policies};
  \item is anytime, meaning that at every iteration, the constraints
    of~\eqref{eq:safe-RL-problem-parameterized-policies} are satisfied.
  \end{itemize}
\end{problem}

Due to the probabilistic nature of the CMDP dynamics,
Problem~\ref{problem:problem} can only be solved in a probabilistic
sense, i.e., given a finite number of available episodes, one can only
expect to obtain convergence and constraint satisfaction results that
hold in probability.  As the number of available episodes grows, one
can also expect that the convergence and constraint satisfaction
guarantees hold with arbitrarily high probability.
%
%

%
%
%

\section{The Robust Safe Gradient Flow}\label{sec:robust-sgf}

Here we describe the Robust Safe Gradient Flow (RSGF), a
continuous-time anytime algorithm for constrained optimization that is
a variation of the \textit{Safe Gradient
  Flow}~\cite{AA-JC:24-tac,AA-JC:24-unpublished}.
%
%
We later rely on the RSGF to design our solution to
Problem~\ref{problem:problem}. Even though our proposed RL algorithm
will eventually be defined in discrete time, the properties of the
continuous-time flow established here are key, as we will leverage
them using the theory of stochastic approximation,
cf.~\cite{HJK-DSC:78,VSB:08}.

Let $V_0, \hdots, V_{\tilde{q}} :\real^d \to\real$ be continuously
differentiable functions and consider the constrained optimization
problem
\begin{align}\label{eq:optimization-problem-V0-to-Vqtilde}
  &\min_{\theta\in\real^d} \ V_0(\theta)
  \\  
  & \; \text{s.t.} \quad V_j(\theta) \leq 0, \
  j\in[\tilde{q}]. 
  \notag
\end{align}

We let
$\Cc = \setdef{\theta\in\real^d}{V_j(\theta)\leq0, \ \forall
  j\in[\tilde{q}]}$ denote the feasible set. Given $\alpha>0$ and a
continuously differentiable function $\beta:\real^d\to\real_{>0}$,
let  $\Rc_{\alpha,\beta}:\real^d\to\real^d$ be defined~by
\begin{align}\label{eq:R-definition}
  \Rc_{\alpha,\beta}(\theta)
  &=
    \argmin{\xi\in\real^d}{\frac{1}{2}\norm{\xi+\nabla
    V_0(\theta)}^2 }
  \\
  &\text{s.t.} \ \alpha V_j(\theta) + \nabla V_j(\theta)^\top \xi +
    \frac{\beta(\theta)}{2}\norm{\xi}^2 \leq 0, \ j\in[\tilde{q}].
      \notag
\end{align}
We note that if $\beta \equiv 0$, this definition recovers the Safe
Gradient Flow~\cite{AA-JC:24-tac}.  We study the properties of the
flow
\begin{align}\label{eq:ode-robust-safe-gradient-flow}
  \dot{\theta} = \Rc_{\alpha,\beta}(\theta), 
\end{align}
which we refer to as the Robust Safe Gradient Flow (RSGF). In
particular, we seek to determine conditions under which the dynamics
is well-posed and characterize the transient and asymptotic behavior
of its trajectories.

\subsection{Well-Posedness and Regularity Properties}

We start by introducing some regularity and constraint qualification
assumptions regarding the optimization
problem~\eqref{eq:optimization-problem-V0-to-Vqtilde}.

\begin{assumption}\longthmtitle{Regularity}\label{as:regularity-functions-Vi}
  The functions $V_0, \hdots, V_{\tilde{q}} :\real^d \to\real$, and
  $\beta:\real\to\real$ are twice continuously differentiable.
\end{assumption}

\begin{assumption}\longthmtitle{Constraint qualifications in the
    feasible set}\label{as:constraint-qualifications-safe-set}
  For all
  $\theta\in\Cc$,~\eqref{eq:optimization-problem-V0-to-Vqtilde}
  satisfies MFCQ.  Additionally, for each $\theta\in\Cc$, the
  parametric problem~\eqref{eq:R-definition} satisfies CRC at
  $(\theta,\Rc_{\alpha,\beta}(\theta))$.
\end{assumption}

\begin{assumption}\longthmtitle{Constraint qualifications outside the
    feasible set}\label{as:constraint-qualifications-outside-safe-set}
  For all $\theta\in\real^d\backslash\Cc$, Slater's condition holds
  for~\eqref{eq:R-definition} and the parametric
  problem~\eqref{eq:R-definition} satisfies CRC at
  $(\theta,\Rc_{\alpha,\beta}(\theta))$.
\end{assumption}

Assumption~\ref{as:regularity-functions-Vi} is standard in the
literature~\cite{KZ-AK-HZ-TB:20} and is satisfied by considering
smooth policies $\pi_{\theta}$.  MFCQ and CRC in
Assumptions~\ref{as:constraint-qualifications-safe-set},~\ref{as:constraint-qualifications-outside-safe-set}
are standard constraint qualification conditions for constrained
optimization problems
%
%
such as~\eqref{eq:optimization-problem-V0-to-Vqtilde}
and~\eqref{eq:R-definition}, and ensure that $\Rc_{\alpha,\beta}$
enjoys good regularity properties, as we establish in the sequel.
Lemma~\ref{lem:Slaters} provides conditions under which Slater's
condition holds for~\eqref{eq:R-definition} for each
$\theta\in\real^d\backslash\Cc$, and Lemma~\ref{lem:crc} provides
conditions under which CRC holds for~\eqref{eq:R-definition} at
$(\theta,\Rc_{\alpha,\beta}(\theta))$ for some $\theta\in\Cc$.
%
%
%
%

The next result provides a closed-form expression for
$\Rc_{\alpha,\beta}$ in terms of the Lagrange multipliers
of~\eqref{eq:R-definition}.

\begin{lemma}\longthmtitle{Alternative
    expression for RSGF}\label{lem:alternative-form}
  Let $u_j:\real^d \to \real$ map $\theta\in\real^d$ to the Lagrange
  multiplier associated with the $j$-th constraint
  of~\eqref{eq:R-definition}.  If MFCQ holds
  for~\eqref{eq:optimization-problem-V0-to-Vqtilde} at $\theta\in\Cc$,
  \begin{align}\label{eq:R-expression}
    \Rc_{\alpha,\beta}(\theta) = -\frac{\nabla V_0(\theta) +
    \sum_{j=1}^{\tilde{q}} u_j(\theta)\nabla
    V_j(\theta)}{1+\beta(\theta)\sum_{j=1}^{\tilde{q}} u_j(\theta)}.
  \end{align}
\end{lemma}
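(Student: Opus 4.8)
The plan is to derive the closed-form expression by writing down the KKT conditions for the parametric quadratic program~\eqref{eq:R-definition} and solving the stationarity condition explicitly for the optimal $\xi$. First I would verify that a Lagrange multiplier vector exists: since MFCQ holds for~\eqref{eq:optimization-problem-V0-to-Vqtilde} at $\theta\in\Cc$, and the constraints in~\eqref{eq:R-definition} are convex in $\xi$ (each is a strictly convex quadratic in $\xi$ because $\beta(\theta)>0$), the optimization problem~\eqref{eq:R-definition} is a convex program in $\xi$; MFCQ at $\theta$ together with the feasibility $\xi=0$ at points of $\Cc$ should guarantee that Slater's condition holds, so KKT conditions are both necessary and sufficient for the minimizer $\Rc_{\alpha,\beta}(\theta)$.

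Next I would write the Lagrangian of~\eqref{eq:R-definition} in the variable $\xi$, namely
\begin{align*}
  \mathcal{L}(\xi,u) &= \tfrac{1}{2}\norm{\xi+\nabla V_0(\theta)}^2
  \\
  &\quad + \sum_{j=1}^{\tilde{q}} u_j\Bigl(\alpha V_j(\theta) + \nabla V_j(\theta)^\top \xi + \tfrac{\beta(\theta)}{2}\norm{\xi}^2\Bigr),
\end{align*}
and impose stationarity $\nabla_\xi \mathcal{L} = 0$. Computing this gradient at the optimizer $\xi=\Rc_{\alpha,\beta}(\theta)$ gives
\begin{align*}
  \bigl(\Rc_{\alpha,\beta}(\theta)+\nabla V_0(\theta)\bigr) + \sum_{j=1}^{\tilde{q}} u_j(\theta)\bigl(\nabla V_j(\theta) + \beta(\theta)\Rc_{\alpha,\beta}(\theta)\bigr) = 0,
\end{align*}
where $u_j(\theta)$ are the optimal multipliers. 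The term $\Rc_{\alpha,\beta}(\theta)$ appears linearly, and I would collect the coefficients of $\Rc_{\alpha,\beta}(\theta)$: grouping yields $\bigl(1+\beta(\theta)\sum_{j=1}^{\tilde q} u_j(\theta)\bigr)\Rc_{\alpha,\beta}(\theta) = -\bigl(\nabla V_0(\theta)+\sum_{j=1}^{\tilde q}u_j(\theta)\nabla V_j(\theta)\bigr)$. Dividing through by the scalar factor produces exactly~\eqref{eq:R-expression}.

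The main obstacle is justifying the final division, i.e.\ that the denominator $1+\beta(\theta)\sum_{j=1}^{\tilde q} u_j(\theta)$ is nonzero. Since $\beta(\theta)>0$ by hypothesis and each multiplier satisfies $u_j(\theta)\ge 0$ by the KKT sign conditions, the sum $\sum_j u_j(\theta)\ge 0$, so the denominator is at least $1$ and the division is valid; I would make this sign argument explicit. A secondary technical point is that the expression presupposes the multipliers $u_j(\theta)$ are well defined as a function of $\theta$, which requires uniqueness of the multiplier vector; this is where MFCQ (guaranteeing a bounded multiplier set) together with the strict convexity of the objective in $\xi$ (guaranteeing a unique minimizer $\Rc_{\alpha,\beta}(\theta)$) are needed, and I would invoke these assumptions to conclude that $u_j(\theta)$ is a genuine single-valued map, consistent with the statement's setup.
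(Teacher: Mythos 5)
Your proposal is correct and follows essentially the same route as the paper: establish Slater's condition for~\eqref{eq:R-definition} from MFCQ of~\eqref{eq:optimization-problem-V0-to-Vqtilde}, then write the KKT stationarity equation and solve for $\xi$. The only step you leave under-specified is the one the paper spends most of its effort on---the MFCQ direction must be scaled by a sufficiently small $\epsilon>0$ so that the quadratic term $\frac{\beta(\theta)}{2}\norm{\xi}^2$ does not destroy strict feasibility of the active constraints (with a separate, smaller choice of $\epsilon$ handling the inactive ones)---while your explicit observation that the denominator is at least $1$ because $u_j(\theta)\geq 0$ is a point the paper leaves implicit.
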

\begin{proof}
  Note that since $\theta\in\Cc$,
  $[\tilde{q}]=I_0(\theta)\cup I_{-}(\theta)$.  
  Since MFCQ holds
  for~\eqref{eq:optimization-problem-V0-to-Vqtilde} at $\theta$, there
  exists $\xi\in\real^d$ such that $\nabla V_j(\theta)^\top \xi < 0$
  for all $j\in I_0(\theta)$.  Hence, by taking
  $\epsilon_j < \frac{2|\nabla V_j(\theta)^\top \xi|}{\beta(\theta)
    \norm{\xi}^2}$ and $\hat{\xi} = \epsilon \xi$ with
  $\epsilon\in(0,\min\limits_{j\in I_0(\theta)} \epsilon_j )$,
  \begin{align*}
    \alpha V_j(\theta) + \nabla V_j(\theta)^\top \hat{\xi} +
    \frac{\beta(\theta)}{2} \norm{\hat{\xi}}^2 < 0, \quad \forall j\in
    I_0(\theta). 
  \end{align*}
  On the other hand, for every $j\in I_{-}(\theta)$, let $\epsilon_j$
  be sufficiently small so that
  $\alpha V_j(\theta) + \epsilon_j \nabla V_j(\theta)^\top \hat{\xi}
    + \epsilon_j^2 \frac{\beta(\theta)}{2}\norm{\hat{\xi}}^2 < 0$.
  Now, taking $\epsilon\in(0,\min\limits_{j\in[\tilde{q}] } \epsilon_j)$ and
  $\tilde{\xi} = \epsilon \xi$, we conclude
  $ \alpha V_j(\theta) + \nabla V_j(\theta)^\top \tilde{\xi} +
  \frac{\beta(\theta)}{2}\norm{\tilde{\xi}}^2 < 0$, for all
  $ j\in[\tilde{q}]$, and hence Slater's condition holds
  for~\eqref{eq:R-definition}.
  Since~\eqref{eq:R-definition} is convex,
  this means that $\Rc_{\alpha,\beta}$ satisfies the KKT equations
  associated to~\eqref{eq:R-definition}.
  Hence,
  \begin{align*}
    \Rc_{\alpha,\beta}(\theta) \!+\! \nabla V_0(\theta) \!+\! \sum_{j=1}^{\tilde{q}}
    u_j(\theta) \Big( \nabla V_i(\theta) +
    \beta(\theta)\Rc_{\alpha,\beta}(\theta) \Big) \! =\! 0, 
  \end{align*}
  from where the expression~\eqref{eq:R-expression} follows.
\end{proof}

The next result provides conditions under
which~\eqref{eq:R-definition} is feasible and locally Lipschitz.

\begin{lemma}\longthmtitle{Feasibility and
    Lipschitzness}\label{lem:feas-Lipschitzness-G}
  Suppose Assumption~\ref{as:regularity-functions-Vi}
  holds. Then,
  \begin{enumerate}
  \item\label{it:feas-lipsch-C} under
    Assumption~\ref{as:constraint-qualifications-safe-set},
    $\Rc_{\alpha,\beta}$ is well-defined and locally Lipschitz on an
    open neighborhood containing $\Cc$;
  \item\label{it:feas-lipsch-global} under
    Assumptions~\ref{as:constraint-qualifications-safe-set}
    and~\ref{as:constraint-qualifications-outside-safe-set},
    $\Rc_{\alpha,\beta}$ is well-defined and locally Lipschitz on
    $\real^d$.
  \end{enumerate}
\end{lemma}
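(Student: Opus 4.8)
The plan is to treat both claims together by first establishing, at each fixed $\theta$, that the convex program \eqref{eq:R-definition} admits a unique solution, and then upgrading this pointwise well-definedness to local Lipschitzness of the solution map through a parametric-optimization (sensitivity) argument. For existence and uniqueness I would observe that, since $\beta(\theta)>0$, each constraint in \eqref{eq:R-definition} is a convex quadratic in $\xi$, so the feasible set is closed and convex, while the objective $\tfrac12\norm{\xi+\nabla V_0(\theta)}^2$ is strictly convex and coercive in $\xi$ (its Hessian is the identity). Hence, whenever the feasible set is nonempty, the minimum is attained at a unique point and $\Rc_{\alpha,\beta}(\theta)$ is single-valued. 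Nonemptiness is precisely a Slater/feasibility statement: for $\theta\in\Cc$ it follows from the argument in the proof of Lemma~\ref{lem:alternative-form}, where MFCQ for \eqref{eq:optimization-problem-V0-to-Vqtilde} (Assumption~\ref{as:constraint-qualifications-safe-set}) is shown to imply Slater's condition for \eqref{eq:R-definition}; for $\theta\in\real^d\setminus\Cc$ it is assumed directly in Assumption~\ref{as:constraint-qualifications-outside-safe-set}.

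For the Lipschitz part I would view \eqref{eq:R-definition} as an instance of the parametric problem \eqref{eq:parametric-optimization-problem}, with decision variable $\xi$ and parameter $\theta$, whose data are twice continuously differentiable in $(\xi,\theta)$ by Assumption~\ref{as:regularity-functions-Vi}. Having established that the minimizer is unique and that Slater's condition together with CRC hold at $(\theta,\Rc_{\alpha,\beta}(\theta))$, I would invoke a standard sensitivity result for parametric nonlinear programs asserting that, under $C^2$ data, uniqueness of the minimizer, and the constraint qualifications (MFCQ/Slater and CRC), the minimizer depends locally Lipschitz-continuously on the parameter. To obtain claim~\ref{it:feas-lipsch-C}, i.e. an open neighborhood of the \emph{whole} set $\Cc$, I would use that Slater's condition and CRC are stable under small perturbations of $\theta$: a Slater point for $\theta_0$ remains strictly feasible, and CRC remains valid, on an open neighborhood $U_{\theta_0}$ of each $\theta_0\in\Cc$; setting $U=\bigcup_{\theta_0\in\Cc}U_{\theta_0}$ then yields an open set containing $\Cc$ on which the hypotheses hold, so $\Rc_{\alpha,\beta}$ is well-defined and locally Lipschitz there. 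Claim~\ref{it:feas-lipsch-global} follows by combining this neighborhood with Assumption~\ref{as:constraint-qualifications-outside-safe-set}, which supplies Slater and CRC at every $\theta\in\real^d\setminus\Cc$, so the hypotheses of the sensitivity result hold on all of $\real^d$.

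The main obstacle is the Lipschitz step, and specifically the fact that the active set of \eqref{eq:R-definition} can change as $\theta$ varies, which precludes a direct implicit-function-theorem argument on the KKT system of \eqref{eq:R-definition}: the multipliers $u_j$ appearing in \eqref{eq:R-expression} need not be unique or differentiable. This is exactly the role of CRC, which controls the rank of the active constraint gradients across all subsets of the active set on a neighborhood and is what yields Lipschitzness of the solution map despite a possibly nonsmooth multiplier map. A secondary point to verify carefully is the stability (openness) of the constraint qualifications needed to pass from the pointwise conditions in Assumption~\ref{as:constraint-qualifications-safe-set} to an open neighborhood of $\Cc$, together with the matching of the CRC condition as stated for \eqref{eq:parametric-optimization-problem} with its instance for \eqref{eq:R-definition} evaluated at $(\theta,\Rc_{\alpha,\beta}(\theta))$.
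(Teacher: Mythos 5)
Your proposal is correct and follows essentially the same route as the paper: establish Slater's condition for~\eqref{eq:R-definition} at each $\theta\in\Cc$ via the argument from Lemma~\ref{lem:alternative-form}, use continuity of $V_j$, $\nabla V_j$, $\beta$ to propagate strict feasibility to a neighborhood $U_\theta$ of each point and take the union to cover $\Cc$, and then invoke a parametric-sensitivity theorem under Slater/MFCQ plus CRC (the paper cites~\cite[Theorem 3.6]{JL:95}) to get local Lipschitzness, with part~\ref{it:feas-lipsch-global} handled pointwise from Assumption~\ref{as:constraint-qualifications-outside-safe-set}. The only cosmetic differences are that you spell out uniqueness of the minimizer via strict convexity and coercivity (left implicit in the paper) and you appeal to openness of CRC, which is not actually needed since the sensitivity theorem already yields Lipschitzness on a neighborhood of each point of $\Cc$.
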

%
%
\begin{proof}
  \ref{it:feas-lipsch-C}: By the argument employed in the proof of
  Lemma~\ref{lem:alternative-form}, Slater's condition holds
  for~\eqref{eq:R-definition} at any $\theta\in\Cc$.  This means that
  there exists $\xi\in\real^d$ such that
  $\alpha V_j(\theta) + \nabla V_j(\theta)^\top \xi +
  \frac{\beta(\theta)}{2}\norm{\xi}^2 < 0$ for all $j\in[q]$.  Since
  $V_j, \nabla V_j$ and $\beta$ are continuous, there exists a
  neighborhood $U_{\theta}$ of $\theta$ such that
  $\alpha V_j(\bar{\theta}) + \nabla V_j(\bar{\theta})^\top \xi +
  \frac{\beta(\bar{\theta})}{2}\norm{\xi}^2 < 0$ for all
  $\bar{\theta}\in U_{\theta}$.  In particular, the constraints in the
  definition of $\Rc_{\alpha,\beta}$ are feasible at all points in
  $U_{\theta}$ and hence $\Rc_{\alpha,\beta}$ is well-defined at all
  points in $U_{\theta}$.  Hence $\Rc_{\alpha,\beta}$ is well-defined
  in the open set $\cup_{\theta\in\Cc} U_{\theta}$ containing $\Cc$.
  Since SC implies MFCQ for convex problems~\cite[Proposition
  5.39]{NA-AE-MP:20}, the functions $V_0,\hdots,V_q$, and $\beta$ are
  twice continuously differentiable, and for each
  $\theta\in\Cc$,~\eqref{eq:R-definition} satisfies CRC at
  $(\theta,\Rc_{\alpha,\beta}(\theta))$, $\Rc_{\alpha,\beta}$ is
  locally Lipschitz on an open neighborhood of $\Cc$,
  invoking~\cite[Theorem 3.6]{JL:95}.

  \ref{it:feas-lipsch-global}: by assumption, for any
  $\theta\in\real^d$, Slater's condition holds
  for~\eqref{eq:R-definition} and CRC holds at
  $(\theta,\Rc_{\alpha,\beta}(\theta))$ for~\eqref{eq:R-definition}.
  Hence, by~\cite[Theorem~3.6]{JL:95}, $\Rc_{\alpha,\beta}$ is locally
  Lipschitz at~$\theta$.
\end{proof}

The local Lipschitzness of $\Rc_{\alpha,\beta}$ on a neighborhood of
$\Cc$, cf. Lemma~\ref{lem:feas-Lipschitzness-G}\ref{it:feas-lipsch-C}
(resp., in all of $\real^d$, cf.
Lemma~\ref{lem:feas-Lipschitzness-G}\ref{it:feas-lipsch-global})
ensures~\eqref{eq:ode-robust-safe-gradient-flow} is well-defined and
has a unique solution for any initial condition in a neighborhood of
$\Cc$ (resp., in all of $\real^d$). We refer to~\cite{PM-AA-JC:25-ejc}
for other conditions that guarantee local Lipschitzness of parametric
optimization problems such as~\eqref{eq:R-definition}.

\subsection{Equilibria, Forward Invariance, and Stability}

Next we establish the equivalence between the equilibrium points
of~\eqref{eq:ode-robust-safe-gradient-flow} and the KKT points
of~\eqref{eq:optimization-problem-V0-to-Vqtilde}.

\begin{proposition}\longthmtitle{Equivalence between equilibria and
    KKT points}\label{prop:equivalence-equilibria-kkt-points}
  Let~\eqref{eq:R-definition} be feasible at $\theta^*\in\real^d$.  If
  $\Rc_{\alpha,\beta}(\theta^*) = 0$, then $\theta^*\in\Cc$.  If MFCQ
  holds for~\eqref{eq:optimization-problem-V0-to-Vqtilde} at
  $\theta^*\in\real^d$, then $\Rc_{\alpha,\beta}(\theta^*) = 0$ if and
  only if $\theta^*$ is a KKT point
  of~\eqref{eq:optimization-problem-V0-to-Vqtilde}.
\end{proposition}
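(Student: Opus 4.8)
The plan is to prove the three assertions by directly comparing the KKT systems of the original program~\eqref{eq:optimization-problem-V0-to-Vqtilde} and of the inner convex program~\eqref{eq:R-definition} defining $\Rc_{\alpha,\beta}$, exploiting that at $\xi=0$ the constraints of~\eqref{eq:R-definition} and their gradients collapse onto positive multiples of those of~\eqref{eq:optimization-problem-V0-to-Vqtilde}. For the first claim, that $\Rc_{\alpha,\beta}(\theta^*)=0$ implies $\theta^*\in\Cc$: if $\xi=0$ is the minimizer of~\eqref{eq:R-definition}, it must in particular be feasible, and evaluating the $j$-th constraint at $\xi=0$ leaves only the term $\alpha V_j(\theta^*)$. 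Since $\alpha>0$, this forces $V_j(\theta^*)\le 0$ for every $j\in[\tilde{q}]$, i.e., $\theta^*\in\Cc$.

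For the direction ``KKT point $\Rightarrow$ equilibrium'', I would note that a KKT point is feasible, so $\theta^*\in\Cc$, and there exist $u_j^*\ge 0$ with $\nabla V_0(\theta^*)+\sum_j u_j^*\nabla V_j(\theta^*)=0$ and $u_j^* V_j(\theta^*)=0$. I then claim that $\xi=0$ together with the multipliers $u_j^*$ satisfies the KKT conditions of~\eqref{eq:R-definition}. Indeed, the stationarity condition of~\eqref{eq:R-definition}, namely $(\xi+\nabla V_0(\theta^*))+\sum_j \lambda_j(\nabla V_j(\theta^*)+\beta(\theta^*)\xi)=0$, reduces at $\xi=0$ to exactly $\nabla V_0(\theta^*)+\sum_j u_j^*\nabla V_j(\theta^*)=0$; primal feasibility of $\xi=0$ is the condition $V_j(\theta^*)\le 0$ just established; dual feasibility is $u_j^*\ge 0$; and complementary slackness of~\eqref{eq:R-definition} at $\xi=0$ reads $u_j^*\,\alpha V_j(\theta^*)=0$, which is equivalent to $u_j^* V_j(\theta^*)=0$ because $\alpha>0$. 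Since $\beta(\theta^*)>0$ makes~\eqref{eq:R-definition} a convex program with strictly convex objective, the KKT conditions are sufficient for global optimality and the minimizer is unique, so $\Rc_{\alpha,\beta}(\theta^*)=0$.

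For the converse ``equilibrium $\Rightarrow$ KKT point'', I would first invoke the previous claim to obtain $\theta^*\in\Cc$, so that Lemma~\ref{lem:alternative-form} applies and yields the closed form~\eqref{eq:R-expression}. The denominator $1+\beta(\theta^*)\sum_j u_j(\theta^*)$ is bounded below by $1$, since $\beta>0$ and the multipliers $u_j(\theta^*)$ of the convex program~\eqref{eq:R-definition} are nonnegative, so $\Rc_{\alpha,\beta}(\theta^*)=0$ forces the numerator to vanish, giving $\nabla V_0(\theta^*)+\sum_j u_j(\theta^*)\nabla V_j(\theta^*)=0$. Taking $u_j^*:=u_j(\theta^*)$, dual feasibility $u_j^*\ge 0$ holds by construction, feasibility $V_j(\theta^*)\le 0$ holds since $\theta^*\in\Cc$, and complementary slackness of~\eqref{eq:R-definition} evaluated at $\xi=0$ again gives $u_j^* V_j(\theta^*)=0$ after cancelling $\alpha>0$; hence $\theta^*$ is a KKT point of~\eqref{eq:optimization-problem-V0-to-Vqtilde}.

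The only real subtlety, and the step I would be most careful with, is legitimizing the use of the KKT conditions for the inner program~\eqref{eq:R-definition}: its sufficiency in the ``$\Rightarrow$ equilibrium'' direction relies on convexity (guaranteed by $\beta>0$) and on uniqueness of the minimizer (from strict convexity of the objective), while its necessity in the ``$\Rightarrow$ KKT point'' direction is already packaged into Lemma~\ref{lem:alternative-form}, whose proof established Slater's condition for~\eqref{eq:R-definition} at points of $\Cc$ under MFCQ. Everything else amounts to observing that evaluating the quadratic constraints and their gradients at $\xi=0$ removes the $\beta$-term and rescales the original constraints by the positive factor $\alpha$, so that the two KKT systems coincide up to this harmless scaling.
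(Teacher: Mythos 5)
Your proposal is correct and follows essentially the same route as the paper's proof: both directions hinge on observing that the KKT system of the inner program~\eqref{eq:R-definition} evaluated at $\xi=0$ coincides (up to the positive factor $\alpha$) with the KKT system of~\eqref{eq:optimization-problem-V0-to-Vqtilde}, with Slater's condition for~\eqref{eq:R-definition} inherited from MFCQ via the argument in Lemma~\ref{lem:alternative-form} and uniqueness of the minimizer from strict convexity. Your write-up is merely more explicit than the paper's (spelling out stationarity, feasibility, dual feasibility, and complementary slackness one by one, and routing the ``equilibrium $\Rightarrow$ KKT'' direction through the closed form~\eqref{eq:R-expression} rather than directly through the KKT equations), which is a cosmetic rather than substantive difference.
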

\begin{proof}
  If~\eqref{eq:R-definition} is feasible at $\theta^*\in\real^d$ and
  $\Rc_{\alpha,\beta}(\theta^*) = 0$, then
  $ \alpha V_j(\theta^*) + \nabla V_j(\theta^*)^\top
  \Rc_{\alpha,\beta}(\theta^*) + \frac{\beta(\theta^*)}{2}\norm{
    \Rc_{\alpha,\beta}(\theta^*) }^2 = \alpha V_j(\theta^*) \leq 0 $,
  for all $j\in[\tilde{q}]$, and therefore $\theta^* \in \Cc$.
  Next, suppose MFCQ holds
  for~\eqref{eq:optimization-problem-V0-to-Vqtilde} at
  $\theta^*\in\real^d$ and $\Rc_{\alpha,\beta}(\theta^*) = 0$.  As
  shown in the proof of Lemma~\ref{lem:alternative-form}(i), Slater's
  condition holds for~\eqref{eq:R-definition}.  Hence, since
  $\Rc_{\alpha,\beta}$ is the local minimizer, it satisfies the KKT
  equations for~\eqref{eq:R-definition}. Enforcing that the solution
  is $\xi=0$, these read exactly as the KKT equations
  for~\eqref{eq:optimization-problem-V0-to-Vqtilde}.
  %
  %
  Since MFCQ holds for~\eqref{eq:optimization-problem-V0-to-Vqtilde},
  it follows that $\theta^*$ is a KKT point
  of~\eqref{eq:optimization-problem-V0-to-Vqtilde}. Conversely, if
  $\theta^*$ is a KKT point
  of~\eqref{eq:optimization-problem-V0-to-Vqtilde}, then there exist a
  Lagrange multiplier vector $ u \in \real^{\tilde{q}}$ satisfying the
  KKT equations. Since the solution of~\eqref{eq:R-definition} is
  unique because the problem is strongly convex,
  %
  %
  we conclude $\Rc_{\alpha,\beta}(\theta^*) = 0$.
\end{proof}

The next result shows that $\Cc$ is forward invariant
under~\eqref{eq:ode-robust-safe-gradient-flow}.

\begin{proposition}\longthmtitle{Safety of RSGF}\label{prop:safety-rsgf}
  Suppose  Assumptions~\ref{as:regularity-functions-Vi}
  and~\ref{as:constraint-qualifications-safe-set} hold. Then, $\Cc$ is
  forward invariant under~\eqref{eq:ode-robust-safe-gradient-flow}.
\end{proposition}
\begin{proof}
  By Lemma~\ref{lem:feas-Lipschitzness-G}(i), every solution
  of~\eqref{eq:ode-robust-safe-gradient-flow} with initial condition
  in $\Cc$ is unique and well-defined as long as it stays in a
  neighborhood of~$\Cc$.
  Due to the constraints in~\eqref{eq:R-definition},
  \begin{align}\label{eq:exp-decrease}
    \nabla V_j(\theta)^\top \Rc_{\alpha,\beta}(\theta) \leq -\alpha
    V_j(\theta)-\frac{\beta(\theta)}{2}\norm{\Rc_{\alpha,\beta}(\theta)}^2,  
  \end{align}
  and hence $j\in[\tilde{q}]$,
  $\nabla V_j(\theta)^\top \Rc_{\alpha,\beta}(\theta) \leq 0$ whenever
  $V_j(\theta) = 0$.
        %
  %
  The result then follows from Nagumo's Theorem~\cite{MN:42}.
\end{proof}

The final result of this section characterizes the convergence
properties of~\eqref{eq:ode-robust-safe-gradient-flow}.

\begin{proposition}\longthmtitle{Convergence of RSGF}\label{prop:convergence}
  Suppose Assumptions~\ref{as:regularity-functions-Vi}
  and~\ref{as:constraint-qualifications-safe-set} hold. Then,
  \begin{enumerate}
  \item\label{it:convergence-C} every bounded trajectory
    of~\eqref{eq:ode-robust-safe-gradient-flow} starting in $\Cc$
    converges to the set of KKT points
    of~\eqref{eq:optimization-problem-V0-to-Vqtilde}.
  \item\label{it:convergence-global} if
    Assumption~\ref{as:constraint-qualifications-outside-safe-set}
    holds, then every bounded trajectory
    of~\eqref{eq:ode-robust-safe-gradient-flow} converges to the set
    of KKT points of~\eqref{eq:optimization-problem-V0-to-Vqtilde}.
  \end{enumerate}
  In either case, if every KKT point is isolated, convergence is to a
  point.
\end{proposition}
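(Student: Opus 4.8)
The plan is to establish both items through the LaSalle invariance principle, using the objective $V_0$ (augmented with an exact penalty outside $\Cc$) as a Lyapunov function and identifying the equilibria with KKT points via Proposition~\ref{prop:equivalence-equilibria-kkt-points}. Throughout I use that a bounded orbit of the locally Lipschitz field $\Rc_{\alpha,\beta}$ (Lemma~\ref{lem:feas-Lipschitzness-G}) is forward complete and precompact, so its $\omega$-limit set $\Omega$ is nonempty, compact, connected, and invariant.

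For item~\ref{it:convergence-C}, I would first invoke Proposition~\ref{prop:safety-rsgf} to ensure that a trajectory $\theta(\cdot)$ starting in $\Cc$ stays in $\Cc$. The core computation shows $V_0$ is non-increasing along the flow. Writing $R = \Rc_{\alpha,\beta}(\theta)$ and using the stationarity condition from the proof of Lemma~\ref{lem:alternative-form}, namely $R + \nabla V_0(\theta) + \sum_j u_j(\theta)\bigl(\nabla V_j(\theta) + \beta(\theta)R\bigr) = 0$, together with the active-constraint identity $\nabla V_j(\theta)^\top R = -\alpha V_j(\theta) - \tfrac{\beta(\theta)}{2}\norm{R}^2$ (valid whenever $u_j(\theta) > 0$ by complementary slackness), I obtain
\begin{align*}
  \nabla V_0(\theta)^\top R = -\norm{R}^2 - \tfrac{\beta(\theta)}{2}\norm{R}^2\sum_{j} u_j(\theta) + \alpha\sum_{j} u_j(\theta) V_j(\theta).
\end{align*}
On $\Cc$ we have $V_j(\theta)\le 0$ and $u_j(\theta)\ge 0$, so the last two terms are nonpositive and $\dot V_0 = \nabla V_0(\theta)^\top R \le -\norm{R}^2 \le 0$. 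Since the orbit is bounded, $V_0$ is bounded below along it and hence converges; LaSalle then forces $\Omega$ into $\{\theta\in\Cc : R = 0\}$, which by Proposition~\ref{prop:equivalence-equilibria-kkt-points} (MFCQ holds on $\Cc$ by Assumption~\ref{as:constraint-qualifications-safe-set}) is exactly the set of KKT points of~\eqref{eq:optimization-problem-V0-to-Vqtilde}.

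For item~\ref{it:convergence-global}, the obstruction is that $V_0$ need no longer be monotone outside $\Cc$, since $\alpha\sum_j u_j(\theta)V_j(\theta)$ can be positive there. I would instead exploit that \emph{every} constraint satisfies $\nabla V_j(\theta)^\top R \le -\alpha V_j(\theta) - \tfrac{\beta(\theta)}{2}\norm{R}^2$, so $\dot V_j \le -\alpha V_j$ whenever $V_j > 0$. This motivates the exact-penalty candidate $W(\theta) = V_0(\theta) + K\sum_j V_j^+(\theta)$ with $V_j^+ := \max\{V_j,0\}$. A Dini-derivative computation combining the two displays above yields $D^+ W \le -\norm{R}^2 + \alpha\sum_{j : V_j > 0}\bigl(u_j(\theta) - K\bigr)V_j(\theta)$, so choosing $K$ larger than the supremum of the multipliers $u_j$ over the compact closure of the orbit makes $D^+ W \le -\norm{R}^2 \le 0$ globally. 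As in item~\ref{it:convergence-C}, $W$ then converges along the bounded trajectory and LaSalle places $\Omega$ inside $\{R = 0\}$; since $\Rc_{\alpha,\beta}$ is feasible everywhere under Assumption~\ref{as:constraint-qualifications-outside-safe-set} (Lemma~\ref{lem:feas-Lipschitzness-G}\ref{it:feas-lipsch-global}), the first statement of Proposition~\ref{prop:equivalence-equilibria-kkt-points} gives $\Omega\subseteq\Cc$, and the second then gives $\Omega \subseteq \{\text{KKT points}\}$.

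The main obstacle is item~\ref{it:convergence-global}: beyond the loss of monotonicity of $V_0$, the penalty construction hinges on the multipliers $u_j(\theta)$ being bounded along the orbit, which I would justify from the regularity and constraint-qualification machinery underlying Lemma~\ref{lem:feas-Lipschitzness-G} (continuity of $\Rc_{\alpha,\beta}$ and of the associated multiplier map on $\real^d$), restricted to the compact closure of the bounded trajectory. Finally, for the last claim I would use that $\Omega$ is connected: if the KKT points are isolated, a connected subset of isolated points is a singleton, whence $\theta(t)$ converges to a single KKT point in both items.
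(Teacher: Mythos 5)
Your proposal is correct and follows essentially the same route as the paper: item (i) uses $V_0$ as a LaSalle function with the identical decrease identity $\dot V_0 = -\bigl(1+\tfrac{\beta(\theta)}{2}\sum_j u_j(\theta)\bigr)\norm{\Rc_{\alpha,\beta}(\theta)}^2 + \alpha\sum_j u_j(\theta)V_j(\theta)$, and item (ii) uses the exact penalty $V_0 + \tfrac{1}{\epsilon_*}\sum_j[V_j]_+$ with $\tfrac{1}{\epsilon_*}$ (your $K$) exceeding a uniform bound on the multipliers over the compact closure of the orbit, exactly as in the paper. The only cosmetic differences are that the paper invokes a directional-derivative formula for the penalty term and cites a nonsmooth LaSalle result rather than your Dini-derivative/connected-$\omega$-limit-set phrasing.
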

%
%
\begin{proof}
  \ref{it:convergence-C}: From the proof of
  Lemma~\ref{lem:alternative-form}(i), we have that
  $\Rc_{\alpha,\beta}(\theta)$ satisfies the KKT equations
  for~\eqref{eq:R-definition},
  \begin{align*}
    &u_j(\theta) \big( \alpha V_j(\theta) \!+\! \nabla
      V_j(\theta)^\top \! \Rc_{\alpha,\beta}(\theta) \!+\!
      \tfrac{\beta(\theta)}{2} \norm{\Rc_{\alpha,\beta}(\theta)}^2
      \big) = 0 , 
    \\ 
    &u_j(\theta) \!\geq\! 0, \ \alpha V_j(\theta) \!+\! \nabla
      V_j(\theta)^\top \Rc_{\alpha,\beta}(\theta) \!+\!
      \tfrac{\beta(\theta)}{2} \norm{\Rc_{\alpha,\beta}(\theta)}^2
      \leq 0. \notag
  \end{align*}
  Hence,
  \begin{align}\label{eq:dV0-dt}
    &\frac{d}{dt}V_0(\theta) = \nabla V_0(\theta)^\top
      \Rc_{\alpha,\beta}(\theta)=
    \\  
    \notag
    &\! -\Rc_{\alpha,\beta}(\theta)^\top \Big( \! \big(1 \! + \!
      \beta(\theta) \sum_{j=1}^q u_j(\theta)\!\big) 
      \Rc_{\alpha,\beta}(\theta) \! + \! \sum_{j=1}^q u_j(\theta)
      \nabla V_j(\theta) \! \Big)
    \\
    &= -\big(1+\tfrac{\beta(\theta)}{2}\sum_{j=1}^q
      u_j(\theta)\big)\norm{\Rc_{\alpha,\beta}(\theta)}^2 + \sum_{j=1}^q
      \alpha u_j(\theta) V_j(\theta),      \notag 
  \end{align}
  where in the second equality we have used~\eqref{eq:R-expression}
  and in the third we have used the KKT equations above. Now, since
  $u_j(\theta) \geq 0$ for all $j\in[q]$, and $V_j(\theta) \leq 0$ for
  all $j\in[q]$ if $\theta\in\Cc$, we deduce that
  $\frac{d}{dt}V_0(\theta)\leq 0$ for all $\theta\in\Cc$, with
  equality if and only if $\theta$ is a KKT point
  of~\eqref{eq:optimization-problem-V0-to-Vqtilde} by
  Proposition~\ref{prop:equivalence-equilibria-kkt-points}.  The fact
  that all bounded trajectories converge to the set of KKT points
  follows then from~\cite[Proposition 5.3]{SPB-DSB:03} using $V_0$ as a
  LaSalle function.
  Convergence to a point when the KKT points are isolated follows
  from~\cite[Corollary 5.2]{SPB-DSB:03}.

  \ref{it:convergence-global}: Let $\epsilon>0$ and define
  $V_{\epsilon_*}:\real^d\to\real$,
  \begin{align*}
    V_{\epsilon_*}(\theta) = V_0(\theta) +
    \frac{1}{\epsilon_*}\sum_{j=1}^q [V_j(\theta)]_{+}.
  \end{align*}
  %
  %
  From~\cite[Proposition 3]{GdP-LG:89}, $V_{\epsilon_*}$ is
  directionally differentiable and its directional
  derivative in the direction $\xi\in\real^n$ is
  \begin{align}\label{eq:dVepsilon-dt}
    \notag
    V_{\epsilon_*}^{\prime}(\theta;\xi)
    &= \nabla V_0(\theta)^\top \xi +
      \frac{1}{\epsilon_*}\sum_{j\in
      I_{+}(\theta)}\nabla
      V_j(\theta)^\top \xi
    \\ 
    & \quad + \frac{1}{\epsilon_*} \sum_{j\in I_{0}(\theta)}[\nabla
      V_j(\theta)^\top \xi]_{+}, 
  \end{align}
  where $I_0(\theta)$ and $I_{+}(\theta)$ correspond to the
  optimization problem~\eqref{eq:optimization-problem-V0-to-Vqtilde}.
  From the KKT equations above, we have that
  $\nabla V_j(\theta)^\top \Rc_{\alpha,\beta}(\theta) \leq -\alpha
  V_j(\theta)$ for all $j\in[q]$.  Using~\eqref{eq:dV0-dt}
  in~\eqref{eq:dVepsilon-dt} for $\xi = \Rc_{\alpha,\beta}(\theta)$,
  we have
  \begin{align*}
    V_{\epsilon_*}^{\prime}(\theta;\Rc_{\alpha,\beta}(\theta))
    &\leq 
      -\big( 1+\tfrac{\beta(\theta)}{2}\sum_{j=1}^q
      u_j(\theta) \big)\norm{\Rc_{\alpha,\beta}(\theta)}^2
    \\
    &\quad + \sum_{j=1}^q \alpha u_j(\theta)
      V_j(\theta)-\frac{1}{\epsilon_*}\sum_{j\in I_{+}(\theta)}\alpha
      V_j(\theta). 
  \end{align*}
  Now, by an argument analogous to~\cite[Lemma D.1]{AA-JC:24-tac},
  for any compact set $\Omega$, there exists $B_{\Omega} > 0$ such that 
  %
  %
  $u_j(\theta)\leq B_{\Omega}$ for all $j\in[\tilde{q}]$ and
  $\theta\in\Omega$. Then, for $\epsilon_* \in (0,\frac{1}{B_{\Omega}})$, and
  since $u_j(\theta) V_j(\theta) \leq 0$ for all
  $j\in I_{0}(\theta)\cup I_{-}(\theta)$, we have
  \begin{align*}
    V_{\epsilon_*}^{\prime}(\theta;\Rc_{\alpha,\beta}(\theta)) \! \leq \!
    -\big(1+\tfrac{\beta(\theta)}{2} \! \sum_{j=1}^q u_j(\theta)\! \big)
    \norm{\Rc_{\alpha,\beta}(\theta)}^2 \! \leq \! 0, 
  \end{align*}
  for all $\theta\in\Omega$,
  where the last inequality is an equality if and only if $\theta$ is
  a KKT point of~\eqref{eq:optimization-problem-V0-to-Vqtilde}
  (cf. Proposition~\ref{prop:equivalence-equilibria-kkt-points}).  Now
  the fact that all bounded trajectories in $\Omega$ converge to the
  set of KKT points of~\eqref{eq:optimization-problem-V0-to-Vqtilde} follows
  from~\cite[Proposition 5.3]{SPB-DSB:03} using $V_{\epsilon_*}$ as a
  LaSalle function. 
  Convergence to a point when the KKT points are isolated follows
  from~\cite[Corollary 5.2]{SPB-DSB:03}.
\end{proof}

\begin{remark}\longthmtitle{Boundedness of
    trajectories}\label{rem:boundedness} {\rm Regarding
    Proposition~\ref{prop:convergence}(i), note that all trajectories
    of~\eqref{eq:ode-robust-safe-gradient-flow} starting in $\Cc$
    remain in it by Proposition~\ref{prop:safety-rsgf}, and hence are
    bounded if this set is compact.  Regarding
    Proposition~\ref{prop:convergence}(ii), if there is
    $i_*\in[\tilde{q}]$ and $c$ such that
    $\Gamma = \setdef{\theta\in\real^d}{V_{i_*}(\theta) \leq c }$ is
    compact, note that~\eqref{eq:exp-decrease} implies that this set
    is forward invariant
    under~\eqref{eq:ode-robust-safe-gradient-flow}. Therefore, all
    trajectories of~\eqref{eq:ode-robust-safe-gradient-flow} starting
    in $\Gamma$ are bounded. In particular, this holds if $V_{i_*}$ is
    radially unbounded, since all its sublevel sets are compact.  }
  \demo
\end{remark}

\begin{remark}\longthmtitle{Robustness to
    error}\label{rem:robustness} {\rm The introduction of the
    strictly positive term $\beta$ in the
    definition~\eqref{eq:R-definition} strengthens the robustness
    against errors and disturbances of the robust safe gradient flow
    (as compared, for instance, with the safe gradient
    flow~\cite{AA-JC:24-tac}, which corresponds to $\beta \equiv 0$).
    An indication of this fact can be observed, for instance, in the
    contributions of the $\beta$ term to the decrease of the LaSalle
    functions in the proof of Proposition~\ref{prop:convergence}.
    We quantify more precisely this robustness to model errors in
    Section~\ref{sec:analysis-rl-rsgf} and exploit it to handle
    imperfect knowledge of the functions $\{ V_j \}_{j=0}^{\tilde{q}}$
    and their gradients $\{ \nabla V_j \}_{j=0}^{\tilde{q}}$ in the
    algorithm implementation.
    \demo }
\end{remark}

\begin{remark}\longthmtitle{Discretization}\label{rem:discretization}
  {\rm We note that the forward-Euler discretization
    of~\eqref{eq:ode-robust-safe-gradient-flow} is equivalent to the
    discrete-time dynamics introduced in~\cite{PM-AM-JC:25-l4dc}. This
    follows by performing a change of variables
    ($\xi = \frac{y-\theta}{h}$ in the optimization problem (2)
    in~\cite{PM-AM-JC:25-l4dc}, with the variables $y$ and $h$ as
    defined therein).  This discrete-time dynamics is a special case
    of the Moving Balls Algorithm (MBA)~\cite{AA-RS-MT:10}.
    Both~\cite{PM-AM-JC:25-l4dc} and~\cite{AA-RS-MT:10} study the
    safety and convergence properties of the discrete-time dynamics
    directly, instead of their continuous-time
    counterpart~\eqref{eq:ode-robust-safe-gradient-flow}, as we have
    done here. We leverage the latter in what follows using the theory
    of stochastic approximation~\cite{HJK-DSC:78,VSB:08}.  \demo}
\end{remark}

\section{Robust Safe Gradient Flow-Based Reinforcement
  Learning}\label{sec:proposed-solution}

In this section, we introduce our algorithmic solution to
Problem~\ref{problem:problem}.  Consider the optimization
problem~\eqref{eq:safe-RL-problem-parameterized-policies} defining the
optimal policy for the CMDP~$\mathcal{M}$. Instead of dealing directly
with~\eqref{eq:safe-RL-problem-parameterized-policies}, we
consider~\eqref{eq:optimization-problem-V0-to-Vqtilde} with
$\tilde{q} = q+1$, and include the additional function
$V_{q+1}(\theta) = \norm{\theta}^2 - C$, where $C>0$ is a design
parameter. As we justify later, this has the effect of keeping the
iterates of the algorithm bounded.

Given $\alpha>0$ and $\beta:\real^d\to\real_{>0}$, let
$\Rc_{\alpha,\beta}:\real^d\to\real^d$ be defined
by~\eqref{eq:R-definition}.  To solve Problem~\ref{problem:problem},
consider the forward-Euler discretization of the
RSGF~\eqref{eq:ode-robust-safe-gradient-flow},
\begin{align}\label{eq:discrete-time-rsgf}
  \theta_{i+1} = \theta_i + h_i \Rc_{\alpha,\beta}(\theta_i) ,
\end{align}
where $\{ h_i \}_{i\in\mathbb{Z}_{>0}}$ is a sequence of
stepsizes. Note that, since closed-from expressions for the value
functions $V_0, \hdots, V_q$ are not readily available, one cannot
directly implement this iteration. Instead, our strategy consists of
relying on the robustness properties of~\eqref{eq:discrete-time-rsgf},
when viewed as a discrete-time dynamical system, and employing estimates
of $V_1,\hdots,V_q$, and $\nabla V_0,\hdots,\nabla V_q$ constructed
with episodic data, as detailed next
(note that $V_{q+1}$ and $\nabla V_{q+1}$ are known).

\subsubsection*{Episodic data available}
Let $\Lambda$ be a given set of policies for $\mathcal{M}$ and
$ \Ic_0$ a batch of episodes obtained offline with policies
from~$\Lambda$. Formally,
\begin{align*}
  \Ic_0 = \{ [ s_0^n, a_0^n, s_1^n, a_1^n,
  \hdots, s_{T}^n, a_T^n, s_{T+1}^n ] \}_{
  n\in[N_{\zeta}], \zeta\in\Lambda }, 
\end{align*}
where $N_{\zeta}$ is the number of episodes obtained with policy
$\zeta$.  Given $i\in\mathbb{Z}_{>0}$, let $\Ic_i$ be the collection
of episodes at iteration $i$ obtained using policy $\pi_{\theta_i}$
(with $N_i = |\Ic_i| $ its number).

At iteration~$i$, we construct the estimates of the value functions
and their gradients using episodes from $\cup_{j=0}^{i} \Ic_j$ as
follows. Although one could potentially use all such episodes, for
flexibility we assume that we only use a subset
$\Jc_i \subset \cup_{j=0}^{i} \Ic_j$.  We enumerate the episodes in
$\Jc_i$ as
\begin{align*}
  \Jc_i = \{ [s_0^n, a_0^n, s_1^n, a_1^n, \hdots, s_T^n, a_T^n,
  s_{T+1}^n] \}_{n=1}^{|\Jc_i|}. 
\end{align*}
For each $n\in[|\Jc_i|]$, we denote by $\zeta_n$ the policy utilized
to obtain the corresponding episode.

\begin{assumption}\label{as:importance-sampling-ratio-well-defined}
  There exists $\nu>0$ such that, for any $a\in\Ac$, $s\in\Sc$,
  $\theta \in\real^d$ and $\zeta\in\Lambda$, we have
  $\pi_{\theta}(a|s)>\nu$, $\zeta(a|s)>\nu$.
\end{assumption}

Assumption~\ref{as:importance-sampling-ratio-well-defined} is standard
in the context of importance-sampling methods in
RL~\cite{TX-ZY-ZW-YL:21,JH-NJ:22}. For any given state, it requires
that any action has a positive probability lower bounded by $\nu$ for
any policy in the parametric family $\{ \pi_{\theta} \}$ as well as
in~$\Lambda$.
%
%

\subsubsection*{Estimates of value functions and their gradients}
For each $j\in[q]\cup\{ 0 \}$, we consider the following estimate of
the value function at iteration~$i$,
\begin{multline}\label{eq:value-function-estimate}
  \widehat{V_j}(\theta_i)
  =
  \\
  \frac{\sigma_j}{|\Jc_i|}
  \bigg(
  \sum_{n=1}^{|\Jc_i|} \prod\limits_{t=0}^T \frac{
    \pi_{\theta_i}(a_t^n |s_t^n) }{ \zeta_n(a_t^n |s_t^n) }
  \sum_{t=0}^T \gamma^t R_j( s_t^n, a_t^n, s_{t+1}^n ) 
  \bigg),
\end{multline}
where $\sigma_0 = -1$, and $\sigma_j = 1$ for $j\in[q]$.
Under
Assumption~\ref{as:importance-sampling-ratio-well-defined},
$\widehat{V_j}(\theta_i)$ is well defined, because the denominator in
the ratio
$\frac{ \pi_{\theta_i}(a_t^n |s_t^n) }{\zeta_n(a_t^n |s_t^n) }$ is strictly positive.
%
%

For any $a\in\Ac$ and $s\in\Sc$, define
$\chi_{a,s}:\real^d\to\real$ as
$\chi_{a,s}(\theta) = \log\pi_{\theta}(a|s)$.  
Note that $\chi_{a,s}$ is well-defined for all $\theta\in\real^d$ under 
Assumption~\ref{as:importance-sampling-ratio-well-defined}.
Let $b:\Sc\to\real$ be
a baseline function whose absolute value is bounded by $\hat{B}>0$.
For each $j\in[q]\cup\{ 0 \}$, we consider the following estimates of
the gradients of the value functions at iteration~$i$,
%
%
\begin{subequations}\label{eq:gradient-value-function-estimate}
  \begin{multline}
  \widehat{\nabla V_j}(\theta_i) =
  \\
  \frac{\sigma_j}{ |\Jc_i| } \bigg( \sum_{n=1}^{|\Jc_i|}
  \prod\limits_{t=0}^T \frac{ \pi_{\theta_i}(a_t^n |s_t^n) }{
    {\zeta_n}(a_t^n |s_t^n) } \sum_{t=0}^T \gamma^t \nabla
  \chi_{a_t^n,s_t^n}(\theta_i) D_{j,t}^n \bigg).
\end{multline}
%
%
where
\begin{align}
  D_{j,t}^n = \sum_{t^{\prime}=t}^{T} \! \gamma^{t^{\prime}-t}
  R_j(s_{t^\prime}^n,a_{t^{\prime}}^n,s_{t^{\prime}+1}^n) \! - \! b(s_t^n).
\end{align}
\end{subequations}
Under Assumption~\ref{as:importance-sampling-ratio-well-defined},
$ \widehat{\nabla V_j}(\theta_i) $ is well defined.
Given these estimates, we define an approximated version
of~\eqref{eq:R-definition} as follows:
\begin{subequations}
  \begin{align}
    &\hat{\Rc}_{\alpha,\beta}(\theta) =
      \argmin{\xi \in\real^d}{ \frac{1}{2}\norm{\xi + \widehat{\nabla
      V_0}(\theta)}^2 } 
    \\ 
    &\quad \text{s.t.} \ \alpha \widehat{V}_j(\theta) \! + \!
      \widehat{\nabla V}_j(\theta)^\top \xi \! 
      + \! \frac{\beta(\theta)}{2}\norm{\xi}^2 \leq 0, \ j \in[q],
    \\
    &\quad \quad \ \ \alpha V_{q+1}(\theta) \! + \! \nabla
      V_{q+1}(\theta)^\top \xi \! + \!
      \frac{\beta(\theta)}{2}\norm{\xi}^2 \leq 0. \label{eq:Vq+1-Rhat}
  \end{align}
  \label{eq:Rhat-definition}
\end{subequations}
Note that this can be computed with the episodic data available to the agent.

Algorithm~\ref{alg:rl-rsgf} presents the pseudocode for our proposal
to solve Problem~\ref{problem:problem}. We refer to it as Robust Safe
Gradient Flow-based Reinforcement Learning (RSGF-RL).
%
%

\begin{algorithm}[htb!]
  \caption{\texttt{RSGF-RL}}\label{alg:rl-rsgf}
  \begin{algorithmic}[1]
    \State \textbf{Parameters}: $\alpha$, $\beta$, $k$, $m$, $\{ h_i
    \}_{i=1}^k$ $T$, $\gamma$, $\Ic_0$, $\{ N_i \}_{i=1}^k$
    \State \textbf{Initial Policy Parameter}: $\theta_1$
    \For{$i\in [k]$}
      \State Generate $N_i$ episodes of length $T+1$ using $\pi_{\theta_i}$
      \State Select the set $\Jc_i$ of episodes at iteration~$i$
      \State Compute estimates $\{\widehat{V_j}(\theta_i)\}_{j=0}^q$ 
      using~\eqref{eq:value-function-estimate}
      \State Compute estimates 
      $\{\widehat{\nabla V_j}(\theta_i)\}_{j=0}^q$
      using~\eqref{eq:gradient-value-function-estimate} 
      \State Update policy according to
      \begin{align}\label{eq:discrete-time-rsgf-estimates}
        \theta_{i+1}=\theta_i + h_i \hat{\Rc}_{\alpha,\beta}(\theta_i)
      \end{align}
    \EndFor
    \State \Return $\theta_{k+1}$
\end{algorithmic}
\end{algorithm}

In Algorithm~\ref{alg:rl-rsgf}, we do not detail a specific scheme to
select the sets of episodes $\Jc_i$ from the available ones in
$\cup_{j=0}^{i} \Ic_j$. Instead, in what follows, we study the
properties of RSGF-RL for arbitrary sets~$\Jc_i$ and provide
conditions on these sets that guarantee a desired level of algorithmic
performance.

%
%

%
%

\section{Anytime Safety and Convergence Guarantees of
  RSGF-RL}\label{sec:analysis-rl-rsgf}

In this section we present our technical analysis of RSGF-RL. We start
by establishing different statistical properties of the value function
and gradient estimates, and then characterize the safety and
convergence properties of RSGF-RL.

\subsection{Statistical Properties of Estimates
}

Here, we establish the statistical properties of the
estimates~\eqref{eq:value-function-estimate}
and~\eqref{eq:gradient-value-function-estimate} of the value functions
and their gradients, resp.  In our analysis, we make the following
assumptions.

\begin{assumption}\longthmtitle{Boundedness of reward
    functions}\label{as:boundedness-reward}
  For each $j\in[q]\cup\{ 0 \}$, there exist $B_j>0$ such that
  $|R_j(s,a,s^{\prime})|<B_j$, for all $s\in\mathcal{S}$,
  $a\in\mathcal{A}$, and $s^{\prime}\in\mathcal{S}$.
\end{assumption}

\begin{assumption}\longthmtitle{Differentiability and Lipschitzness of
    policy}\label{as:differentiability-lipschitzness-policy}
  %
  %
  %
  %
  The function $\chi_{a,s}$ is continuously differentiable and there
  exist $L>0$ and $\tilde{B}>0$ such that
  \begin{align*}
    &\norm{\nabla \chi_{a,s}(\theta)-\nabla \chi_{a,s}(\bar \theta)}
      \leq L\norm{\theta-\bar\theta},
    \\
    &\qquad \qquad \qquad \qquad \qquad \qquad  \forall 
      \theta,\bar \theta\in\real^d, a\in\Ac,s\in\Sc, 
    \\
    & \norm {\nabla  \chi_{a,s}(\theta)^{(l)} }
      \leq \tilde{B}, \ \forall\theta\in\real^d, l\in[d],
      a\in\Ac,s\in\Sc. 
  \end{align*}
  %
  %
\end{assumption}

Assumptions~\ref{as:boundedness-reward}
and~\ref{as:differentiability-lipschitzness-policy} are standard in
the literature, cf.~\cite{KZ-AK-HZ-TB:20,QB-WUM-VA:24}. By the Policy
Gradient Theorem~\cite[Section 13.2]{RSS-AGB:18}, under
Assumption~\ref{as:differentiability-lipschitzness-policy},
%
%
the functions $\{ V_j \}_{j=0}^q$
in~\eqref{eq:safe-RL-problem-parameterized-policies} are
differentiable.  Moreover,
Lemma~\ref{lem:lipschitzness-value-functions} ensures that, for all
$j\in\{ 0 \} \cup [q]$, $\nabla V_j$ is globally Lipschitz on
$\real^d$ (we denote by $L_j$ its Lipschitz constant).  Additionally,
we let $L_{q+1} = 2\sqrt{C}$ be the Lipschitz constant of $V_{q+1}$ on
$\Theta = \setdef{\theta\in\real^d}{ \norm{\theta}^2 \leq C }$.

In what follows, all expectations, variances, and probabilities are
taken with respect to $s_0\sim\eta$,
$a_t^{n} \sim \zeta_n(\cdot|s_t^{n})$, for $t\in[T]$, and
$n\in[|\Jc_i|]$. We next characterize the mean, variance, and
tail probabilities of the value function estimates.

\begin{proposition}\longthmtitle{Value function
    estimates}\label{prop:value-function-estimates}
  Suppose Assumptions~\ref{as:importance-sampling-ratio-well-defined}
  and~\ref{as:boundedness-reward} hold.
  Let $i\in\mathbb{Z}_{>0}$ and assume that $\Jc_i$ contains
  $\bar{N}_i$ episodes generated with $\pi_{\theta_i}$ (without loss
  of generality, we label them as the first $\bar{N}_i$ episodes in
  $\Jc_i$).  Let
  \begin{align*}
    \tilde{N}_i = |\Jc_i|-\bar{N}_i , \;
    \phi_j = \frac{B_j(1-\gamma^{T+1})}{1-\gamma} , \;
    \bar{\phi}_j = \frac{B_j(1-\gamma^{T+1})}{(1-\gamma)\nu^{T+1}} .
  \end{align*}
  Then, for $j\in\{0\}\cup[q]$,
  %
  %
  \begin{enumerate}
  \item\label{it:value-function-unbiased}
    $\mathbb{E}[ \widehat{V}_j(\theta_i) ] = V_j(\theta_i)$ (unbiased
    function estimates);
  \item\label{it:value-function-variance}
    $\operatorname{Var}[ \widehat{V}_j(\theta_i) ] = \frac{ \bar{N}_i
      \phi_j^2 + \tilde{N}_i \bar{\phi}_j^2 }{ |\Jc_i|^2 }$ and
    $|\widehat{V}_j(\theta_i)| \leq \frac{ \bar{N}_i \phi_j +
      \tilde{N}_i \bar{\phi}_j }{ |\Jc_i| }$;
  \item\label{it:value-function-tail-probability}
    $\mathbb{P}( |\widehat{V}_j(\theta_i)-V_j(\theta_i)| \leq \epsilon
    ) \geq 1 - 2 \exp\big( -\frac{ \epsilon^2 |\Jc_i|^2 }{
      2\bar{N}_i\phi_j^2 + 2\tilde{N}_i \bar{\phi}_j^2 } \big)$.
  \end{enumerate}
  Further assume that $\chi_{a,s}$ is globally Lipschitz, uniformly in
  $a, s$, i.e., there exists $\tilde{L}>0$ such that
  \begin{align}\label{eq:extra-assumption-difference-log-policies}
    &|\chi_{a,s}(\theta) - \chi_{a,s}(\theta^{\prime})| \leq
      \tilde{L}\norm{\theta-\theta^{\prime}}, 
  \end{align}
  for all $ \theta, \theta^{\prime}\in\real^d$, $a\in\Ac$, and
  $s\in\Sc$, and that the policies in $\Lambda$ belong to
  $\{\pi_{\theta}\}$. Let $\{ \bar{\theta}_n \}_{n=1}^{|\Jc_i|}$
  denote the parameters that describe all the policies in $\Jc_i$ and
  define
  $\tilde{\phi}_{i,j,n} = \phi_j \exp(
  (T+1) \tilde{L}\norm{\theta_i-\bar{\theta}_n} )$.  Then,
  \begin{enumerate}
    \setcounter{enumi}{3}
  \item\label{it:value-function-variance-extra}
    $\VaR[ \widehat{V}_j(\theta_i) ] \leq \frac{ \sum_{n=1}^{|\Jc_i|}
      \tilde{\phi}_{i,j,n}^2 }{ |\Jc_i|^2 }$ and
    $|\widehat{V_j}(\theta_i)| \leq \frac{ \sum_{n=1}^{|\Jc_i|}
      \tilde{\phi}_{i,j,n} }{ |\Jc_i| }$;
    
  \item\label{it:value-function-tail-probability-extra}
    $\mathbb{P}( |\widehat{V}_j(\theta_i)-V_j(\theta_i)| \leq \epsilon
    ) \geq 1 - 2\exp\big( -\frac{ \epsilon^2 |\Jc_i|^2 }{
      2\sum_{n=1}^{|\Jc_i|} \tilde{\phi}_{i,j,n}^2 } \big)$.
  \end{enumerate}
\end{proposition}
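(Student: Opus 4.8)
The plan is to write $\widehat{V_j}(\theta_i)$ as a normalized sum of independent per-episode terms and exploit its importance-sampling structure. For each $n\in[|\Jc_i|]$, let $w_n = \prod_{t=0}^T \pi_{\theta_i}(a_t^n|s_t^n)/\zeta_n(a_t^n|s_t^n)$ be the importance weight and $G_j^n = \sum_{t=0}^T \gamma^t R_j(s_t^n,a_t^n,s_{t+1}^n)$ the discounted return, so that $\widehat{V_j}(\theta_i) = \frac{1}{|\Jc_i|}\sum_{n=1}^{|\Jc_i|} X_n$ with $X_n = \sigma_j w_n G_j^n$. Under the stated conditioning (fixed $\theta_i$ and behavior policies $\zeta_n$), the episodes are mutually independent, which is what lets variances add and Hoeffding apply.

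For part~\ref{it:value-function-unbiased} I would establish the standard importance-sampling identity. Writing the density of a trajectory generated by $\zeta_n$ as $\eta(s_0)\prod_{t=0}^T \zeta_n(a_t|s_t)P(s_t,a_t,s_{t+1})$, multiplication by $w_n$ cancels the $\zeta_n$ factors and replaces them by $\pi_{\theta_i}$, so $\mathbb{E}[w_n G_j^n] = \mathbb{E}_{\pi_{\theta_i}}[\sum_{t=0}^T \gamma^t R_j]$. For $j\in[q]$ this equals $V_j(\theta_i)$ and $\sigma_j=1$; for $j=0$ it equals $-V_0(\theta_i)$, which the choice $\sigma_0=-1$ corrects. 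Averaging over the $|\Jc_i|$ episodes gives $\mathbb{E}[\widehat{V_j}(\theta_i)]=V_j(\theta_i)$.

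For parts~\ref{it:value-function-variance} and~\ref{it:value-function-tail-probability} I would bound each term. By Assumption~\ref{as:boundedness-reward}, $|G_j^n|\le B_j\sum_{t=0}^T\gamma^t = \phi_j$. The first $\bar{N}_i$ episodes are on-policy ($\zeta_n=\pi_{\theta_i}$), so $w_n=1$ and $|X_n|\le\phi_j$; for the remaining $\tilde{N}_i$ episodes, Assumption~\ref{as:importance-sampling-ratio-well-defined} gives $\zeta_n(a|s)>\nu$ and $\pi_{\theta_i}(a|s)\le 1$, whence $w_n<\nu^{-(T+1)}$ and $|X_n|\le\bar{\phi}_j$. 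Independence then yields $\operatorname{Var}[\widehat{V_j}(\theta_i)] = \frac{1}{|\Jc_i|^2}\sum_n \operatorname{Var}[X_n]\le\frac{\bar{N}_i\phi_j^2+\tilde{N}_i\bar{\phi}_j^2}{|\Jc_i|^2}$ via $\operatorname{Var}[X_n]\le\mathbb{E}[X_n^2]\le M_n^2$, and the triangle inequality gives the stated uniform bound on $|\widehat{V_j}(\theta_i)|$. The tail bound follows from Hoeffding's inequality applied to $\sum_n X_n$: each $X_n$ ranges over an interval of length at most $2M_n$, so $\sum_n(2M_n)^2 = 4(\bar{N}_i\phi_j^2+\tilde{N}_i\bar{\phi}_j^2)$, and the factor $4$ cancels against the factor $2$ in Hoeffding's exponent to produce the claimed constant.

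For parts~\ref{it:value-function-variance-extra} and~\ref{it:value-function-tail-probability-extra}, the only change is a sharper per-episode bound. When $\zeta_n=\pi_{\bar{\theta}_n}\in\{\pi_\theta\}$, the uniform Lipschitz bound~\eqref{eq:extra-assumption-difference-log-policies} gives $|\log\pi_{\theta_i}(a|s)-\log\pi_{\bar{\theta}_n}(a|s)| = |\chi_{a,s}(\theta_i)-\chi_{a,s}(\bar{\theta}_n)|\le\tilde{L}\norm{\theta_i-\bar{\theta}_n}$, so each per-step ratio is at most $\exp(\tilde{L}\norm{\theta_i-\bar{\theta}_n})$ and $w_n\le\exp((T+1)\tilde{L}\norm{\theta_i-\bar{\theta}_n})$. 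Hence $|X_n|\le\tilde{\phi}_{i,j,n}$, and repeating the variance, boundedness, and Hoeffding arguments verbatim with $M_n=\tilde{\phi}_{i,j,n}$ delivers both claims. The conceptual crux of the whole proposition is the importance-sampling identity of part~\ref{it:value-function-unbiased} --- verifying that the product of per-step ratios reweights the trajectory law from $\zeta_n$ to $\pi_{\theta_i}$ exactly, with the reward-sign convention handled by $\sigma_j$; everything else is routine concentration, the only care being to track which episodes are on- versus off-policy and to confirm that independence holds under the stated conditioning so that the variances add.
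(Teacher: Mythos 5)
Your proposal is correct and follows essentially the same route as the paper's proof: the importance-sampling cancellation for unbiasedness, per-episode bounds $\phi_j$, $\bar{\phi}_j$ (and $\tilde{\phi}_{i,j,n}$ under the Lipschitz assumption) combined with independence for the variance, and Hoeffding for the tails, with your bookkeeping of the factor $4$ versus $2$ in the exponent matching the stated constants. The only cosmetic difference is that you bound $\operatorname{Var}[X_n]$ by $\mathbb{E}[X_n^2]\leq M_n^2$ where the paper invokes Popoviciu's inequality, which yields the identical bound (and, like yours, actually establishes the variance claim in~\ref{it:value-function-variance} as an inequality rather than the stated equality).
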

\begin{proof}
  \ref{it:value-function-unbiased}: Let
  $d\Omega = \prod_{n=1}^{|\Jc_i|} ds_{T+1}^n \prod_{t=0}^{T} ds_t^n
  da_t^n$, where $ds_t^n$ and $d a_t^n$ are the differential elements
  associated with the variables $s_t^n$ and $a_t^n$, respectively.
  %
  %
  Define, for $j\in\{ 0 \}\cup[q]$,
  \begin{align*}
    E_{j,n}
     = \sum_{t^{\prime}=0}^T \gamma^{t} R_j(
      s_{t}^n, a_{t}^n, s_{t+1}^n ), \ 
    \Gamma = \Sc^{ |\Jc_i|(T+2)} \times \Ac^{ |\Jc_i|(T+1) },
  \end{align*}
  Using~\eqref{eq:value-function-estimate},
  we have
  \begin{align*}
    &\mathbb{E} [ \widehat{V}_j(\theta_i) ] = \frac{\sigma_j}{|\Jc_i|}
      \bigintsss_{ \Gamma }
      \bigg( \eta(s_0) \sum_{n=1}^{\bar{N}_i} \prod\limits_{t=0}^T
      \pi_{\theta_i}(a_t^n |s_t^n) E_{j,n}
    \\
    &\qquad \qquad +
      \eta(s_0)\sum_{n=\bar{N}_i+1}^{|\Jc_i|} \prod\limits_{t=0}^T
      \frac{ \pi_{\theta_i}(a_t^n |s_t^n) }{ \zeta_n(a_t^n |s_t^n)
      }E_{j,n} \zeta_n(a_t^n|s_t^n) 
      \bigg) d\Omega  
    \\
    &= \frac{1}{ \bar{N}_i + \tilde{N}_i }( \bar{N}_i V_j(\theta_i) +
      \tilde{N}_i V_j(\theta_i) ) = V_j(\theta_i). 
  \end{align*}

  \ref{it:value-function-variance}: By
  Assumption~\ref{as:importance-sampling-ratio-well-defined},
  $\zeta_n(a_t^n|s_t^n) > \nu$ for all $n\in[|\Jc_i|]$ and $t\in[T]$.
  This implies that, for each $n\in[\bar{N}_i : |\Jc_i|]$,
  \begin{multline}\label{eq:bound-value-function-off-policy-term}
    \bigg\rvert \prod\limits_{t=0}^T \frac{ \pi_{\theta_i}(a_t^n
      |s_t^n) }{ \zeta_n(a_t^n |s_t^n) } \bigg( \sum_{t=0}^T \gamma^t
    R_j( s_t^n, a_t^n, s_{t+1}^n ) \bigg) 
    \bigg\rvert 
    \\
    \leq B_j \frac{1-\gamma^{T+1}}{1-\gamma}\frac{1}{\nu^{T+1}} =
    \bar{\phi}_j 
  \end{multline}
  By Popovicius' inequality~\cite[Corollary 1]{RB-CD:00},
  %
  %
  we have
  \begin{align*}
    \VaR\biggl[ \prod\limits_{t=0}^T \frac{
    \pi_{\theta_i}(a_t^n |s_t^n) }{ \zeta_n(a_t^n |s_t^n) } \bigg(
    \sum_{t=0}^T \gamma^t R_j( s_t^n, a_t^n, s_{t+1}^n ) \bigg) 
    \biggr] \leq \bar{\phi}_j^2.
  \end{align*}
  Since the random variables
  \begin{align*}
    \bigg\{ \prod\limits_{t=0}^T \frac{ \pi_{\theta_i}(a_t^n |s_t^n)
    }{ \zeta_n(a_t^n |s_t^n) } \bigg( \sum_{t=0}^T \gamma^t R_j(
    s_t^n, a_t^n, s_{t+1}^n ) \bigg) \bigg\}_
    {
    n\in[\bar{N}_i : |\Jc_i|]
    }
  \end{align*}
  are independent, it follows that 
  \begin{align*}
    \VaR\biggl[ \sum_{n=\bar{N}_i+1}^{|\Jc_i|}
    \prod\limits_{t=0}^T \frac{ \pi_{\theta_i}(a_t^n |s_t^n) }{
    \zeta_n(a_t^n |s_t^n) } \bigg( \! \sum_{t=0}^T \gamma^t R_j(
    s_t^n, a_t^n, s_{t+1}^n ) \! \bigg) \biggr] \! \leq \! \tilde{N}_i
    \bar{\phi}_j^2. 
  \end{align*}
  On the other hand, note that
  \begin{align}\label{eq:bound-value-function-on-policy-term}
    \bigg\rvert \sum_{t=0}^T \gamma^t R_j(s_t^n,a_t^n,s_{t+1}^n)
    \bigg\rvert \leq B_j \frac{1-\gamma^{T+1}}{1-\gamma} = \phi_j. 
  \end{align}
  By Popovicius' inequality~\cite[Corollary 1]{RB-CD:00}, 
  \begin{align*}
    \VaR\biggl[ \sum_{\bar{n}=1}^{\bar{N}_i} \sum_{t=0}^T
    \gamma^t R_j(s_t^n,a_t^n,s_{t+1}^n) \biggr] \leq \bar{N}_i
    \phi_j^2, 
  \end{align*}
  from where the bound on the variance follows.  Note also
  that~\eqref{eq:bound-value-function-off-policy-term}
  and~\eqref{eq:bound-value-function-on-policy-term} imply that
  $\widehat{V_j}(\theta_i)$ is uniformly upper bounded by
  $\frac{\bar{N}_i \phi_j + \tilde{N}_i \bar{\phi}_j }{ |\Jc_i| }$.

  \ref{it:value-function-tail-probability}: This follows from
  Hoeffding's inequality~\cite{WH:63-hoeffding}
  %
  %
  using~\eqref{eq:bound-value-function-off-policy-term}
  and~\eqref{eq:bound-value-function-on-policy-term}.

  \ref{it:value-function-variance-extra}:
  Under~\eqref{eq:extra-assumption-difference-log-policies}, we have
  \begin{align*}
    \frac{ \pi_{\theta}(a |s) }{ \pi_{\theta^{\prime}}(a |s) } \leq
    \exp\big( \tilde{L}\norm{\theta-\theta^{\prime} } \big),
  \end{align*}
  for any $\theta,\theta^{\prime} \in\real^d$. Therefore,
  \begin{multline}\label{eq:bound-value-function-off-policy-extra-assumption}
    \bigg\rvert \prod\limits_{t=0}^T \frac{ \pi_{\theta_i}(a_t^n
      |s_t^n) }{ \pi_{\bar{\theta}_n}(a_t^n |s_t^n) } \bigg(
    \sum_{t=0}^T \gamma^t R_j( s_t^n, a_t^n, s_{t+1}^n ) \bigg) 
    \bigg\rvert
    \\
    \leq B_j \frac{1-\gamma^{T+1}}{1-\gamma}\exp( (T+1)
    \tilde{L}\norm{ \theta_i - \bar{\theta}_n } ) =
    \tilde{\phi}_{i,j,n}. 
  \end{multline}
  By Popovicius' inequality~\cite[Corollary 1]{RB-CD:00}, this implies
  \begin{align*}
    \VaR\biggl[ \prod\limits_{t=0}^T \frac{ \pi_{\theta_i}(a_t^n
    |s_t^n) }{ \pi_{\bar{\theta}_n}(a_t^n |s_t^n) } \bigg(
    \sum_{t=0}^T \gamma^t R_j( s_t^n, a_t^n, s_{t+1}^n ) \bigg)
    \biggr] \leq \tilde{\phi}_{i,j,n}^2. 
  \end{align*}
  The result now follows by noting that the random variables
  \begin{align*}
    \bigg\{ \prod\limits_{t=0}^T \frac{ \pi_{\theta_i}(a_t^n |s_t^n) }{
    \pi_{\bar{\theta}_n}(a_t^n |s_t^n) } \bigg( \sum_{t=0}^T \gamma^t
    R_j( s_t^n, a_t^n, s_{t+1}^n ) \bigg) \bigg\}_{n\in[|\Jc_i|]}    
  \end{align*}
  are independent.  Note also
  that~\eqref{eq:bound-value-function-off-policy-extra-assumption}
  implies that $\widehat{V_j}(\theta_i)$ is uniformly upper bounded by
  $\frac{\sum_{n=1}^{|\Jc_i|} \tilde{\phi}_{i,j,n}}{|\Jc_i|}$.

  \ref{it:value-function-tail-probability-extra}: This follows from
  Hoeffding's inequality~\cite{WH:63-hoeffding}
  using~\eqref{eq:bound-value-function-off-policy-extra-assumption}.
  %
  %
\end{proof}

In Proposition~\ref{prop:value-function-estimates}, the bounds
in~\ref{it:value-function-variance-extra}
and~\ref{it:value-function-tail-probability-extra} derived under the
additional
assumption~\eqref{eq:extra-assumption-difference-log-policies} are
tighter than the ones in~\ref{it:value-function-variance}
and~\ref{it:value-function-tail-probability}.  This is because
$\tilde{\phi}_{i,j,n}$, which appears
in~\ref{it:value-function-variance-extra}
and~\ref{it:value-function-tail-probability-extra}, depends on the
difference between the policy parameters $\bar{\theta}_n$ and
$\theta_i$, so it takes advantage of their proximity. Instead,
$\bar{\phi}_j$, which appears in~\ref{it:value-function-variance}
and~\ref{it:value-function-tail-probability}, is insensitive to this
proximity.  We next study the statistical properties of the gradients.

\begin{proposition}\longthmtitle{Gradient of value function
    estimates}\label{prop:gradient-value-function-estimates}
  Suppose
  Assumptions~\ref{as:importance-sampling-ratio-well-defined},~\ref{as:boundedness-reward},
  and~\ref{as:differentiability-lipschitzness-policy} hold.  Let
  $i\in\mathbb{Z}_{>0}$, $\bar{N}_i$ and $\tilde{N}_i$ as in
  Proposition~\ref{prop:value-function-estimates}, and
  \begin{align*}
    &\psi_j = \tilde{B}\sum_{t=0}^T \gamma^t \sum_{t^\prime = t}^T
      (\gamma^{t^{\prime}-t}B_j + \hat{B}),
    \\
    &\bar{\psi}_j = \frac{\tilde{B}}{\nu^{T+1}} \sum_{t=0}^T \gamma^t
      \sum_{t^\prime = t}^T (\gamma^{t^{\prime}-t}B_j + \hat{B}). 
  \end{align*}
  Then, for $j\in\{ 0 \} \cup [q]$
  \begin{enumerate}
  \item\label{it:gradient-unbiased}
    $\mathbb{E}[ \widehat{\nabla V}_j(\theta_i) ] = \nabla
    V_j(\theta_i)$ (unbiased gradient estimates);
  \item\label{it:gradient-variance}
    $\VaR[\widehat{\nabla V}_j(\theta_i)^{(l)} ] = \frac{ \bar{N}_i
      \psi_j^2 + \tilde{N}_i \bar{\psi}_j^2 }{ |\Jc_i|^2 }$ and
    $|\widehat{\nabla V}_j(\theta_i)^{(l)}| \leq \frac{ \bar{N}_i
      \psi_j + \tilde{N}_i \bar{\psi}_j }{ |\Jc_i| }$, for all
    $l\in[d]$;
  \item\label{it:gradient-tail-probability}
    $\mathbb{P}\big( \norm{\widehat{\nabla V}_j(\theta_i)-\nabla
      V_j(\theta_i)} \leq \epsilon \big) \geq 1 - 2 d \exp \big(
    -\frac{ \epsilon^2 |\Jc_i|^2 }{ 2d(\bar{N}_i\psi_j^2 + \tilde{N}_i
      \bar{\psi}_j^2) } \big)$.
  \end{enumerate}
  Further assume that $\chi_{a,s}$ is globally Lipschitz, uniformly in
  $a, s$, i.e.,~\eqref{eq:extra-assumption-difference-log-policies}
  holds, and that the policies in $\Lambda$ belong to
  $\{ \pi_{\theta} \}$.  Let $\{ \bar{\theta}_n \}_{n=1}^{|\Jc_i|}$
  denote the parameters that describe all the policies in $\Jc_i$ and
  define
  $\tilde{\psi}_{i,j,n} =
  \exp((T+1)\tilde{L}\norm{\theta_i-\bar{\theta}_n} )\psi_j$.
  %
  %
  Then,
  \begin{enumerate}
    \setcounter{enumi}{3}
  \item\label{it:gradient-variance-extra}
    $\VaR[\widehat{\nabla V}_j(\theta_i)^{(l)} ] \leq \frac{
      \sum_{n=1}^{|\Jc_i|} \tilde{\psi}_{i,j,n}^2 }{ |\Jc_i|^2 }$ and
    $|\widehat{\nabla V_j}(\theta_i)^{(l)}| \leq \frac{
      \sum_{n=1}^{|\Jc_i|} \tilde{\psi}_{i,j,n} }{ |\Jc_i| }$, for all
    $l\in[d]$;
    
  \item\label{it:gradient-tail-probability-extra}
    $\mathbb{P} \big( \norm{\widehat{\nabla V}_j(\theta_i)-\nabla
      V_j(\theta_i)} \leq \epsilon \big) \geq 1 - 2 d\exp\big( -\frac{
      \epsilon^2 |\Jc_i|^2 }{ 2\sum_{n=1}^{|\Jc_i|}
      \tilde{\psi}_{i,j,n}^2 } \big)$.
  \end{enumerate} 
\end{proposition}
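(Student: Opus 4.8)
The plan is to mirror the proof of Proposition~\ref{prop:value-function-estimates}, adapting each step to the vector-valued nature of the gradient estimates and to the presence of the score function $\nabla\chi_{a_t^n,s_t^n}$ and the baseline $b$. For the unbiasedness claim~\ref{it:gradient-unbiased}, I would first use the importance sampling ratios to rewrite the expectation of each summand as an expectation under the on-policy distribution $\pi_{\theta_i}$, exactly as in part~\ref{it:value-function-unbiased} of Proposition~\ref{prop:value-function-estimates}: the product $\prod_{t=0}^T\pi_{\theta_i}(a_t^n|s_t^n)/\zeta_n(a_t^n|s_t^n)$ cancels the sampling density $\zeta_n$ and replaces it with $\pi_{\theta_i}$. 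Once every episode is expressed as an on-policy expectation, I would invoke the Policy Gradient Theorem~\cite[Section 13.2]{RSS-AGB:18} to identify this expectation with $\nabla V_j(\theta_i)$. The one subtlety is the baseline term: I would show separately that it contributes zero bias, using the standard fact that the expected score vanishes, $\mathbb{E}_{a\sim\pi_{\theta_i}(\cdot|s)}[\nabla\chi_{a,s}(\theta_i)]=0$, so that conditioning on $s_t^n$ gives $\mathbb{E}[\nabla\chi_{a_t^n,s_t^n}(\theta_i)\,b(s_t^n)]=0$ for each $t$.

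For the variance and uniform bounds in~\ref{it:gradient-variance}, I would bound each scalar summand componentwise. Using Assumption~\ref{as:differentiability-lipschitzness-policy} to bound $|\nabla\chi_{a,s}(\theta)^{(l)}|\leq\tilde{B}$, Assumption~\ref{as:boundedness-reward} to bound $|R_j|<B_j$, and the hypothesis $|b|\leq\hat{B}$, I would obtain $|D_{j,t}^n|\leq\sum_{t'=t}^T\gamma^{t'-t}B_j+\hat{B}$, whence the $l$-th component of each on-policy summand is bounded in absolute value by $\psi_j$ and each off-policy one by $\bar{\psi}_j$ (the extra factor $\nu^{-(T+1)}$ coming from Assumption~\ref{as:importance-sampling-ratio-well-defined}). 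Applying Popoviciu's inequality~\cite[Corollary 1]{RB-CD:00} to each summand and invoking the independence of the episodes then yields the stated variance bound, while the uniform bound follows directly from the triangle inequality.

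For the tail probability~\ref{it:gradient-tail-probability}, I would apply Hoeffding's inequality~\cite{WH:63-hoeffding} componentwise: for each coordinate $l\in[d]$, the per-episode bounds above give a concentration estimate for $|\widehat{\nabla V}_j(\theta_i)^{(l)}-\nabla V_j(\theta_i)^{(l)}|$. A union bound over the $d$ coordinates, together with the observation that $\norm{\widehat{\nabla V}_j(\theta_i)-\nabla V_j(\theta_i)}\leq\epsilon$ whenever every coordinate deviates by at most $\epsilon/\sqrt{d}$, produces the factor $d$ in front of the exponential and the factor $d$ in the denominator of its exponent.

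Finally, parts~\ref{it:gradient-variance-extra} and~\ref{it:gradient-tail-probability-extra} follow the same template but replace the crude off-policy bound $\nu^{-(T+1)}$ with the tighter estimate afforded by~\eqref{eq:extra-assumption-difference-log-policies}: when all behavior policies belong to $\{\pi_{\theta}\}$, the product of importance sampling ratios satisfies $\prod_{t=0}^T\pi_{\theta_i}(a_t^n|s_t^n)/\pi_{\bar{\theta}_n}(a_t^n|s_t^n)\leq\exp((T+1)\tilde{L}\norm{\theta_i-\bar{\theta}_n})$, so each summand's $l$-th component is bounded by $\tilde{\psi}_{i,j,n}$. Re-running the Popoviciu and Hoeffding arguments with these per-episode bounds gives the claimed tighter variance and tail estimates. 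I expect the main obstacle to be the careful verification of unbiasedness in~\ref{it:gradient-unbiased}, specifically ensuring that the importance sampling correction and the Policy Gradient Theorem combine correctly and that the baseline is rigorously shown to be bias-free; the remaining parts are essentially structural repetitions of the corresponding steps for the value function estimates.
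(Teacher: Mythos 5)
Your proposal is correct and follows essentially the same route as the paper's proof: importance-sampling cancellation plus the Policy Gradient Theorem for unbiasedness, componentwise Popoviciu bounds with independence across episodes for the variance, and componentwise Hoeffding combined with a union/Fr\'echet bound over the $d$ coordinates for the tail estimate. The only cosmetic difference is that you prove the baseline is bias-free explicitly via the vanishing expected score, whereas the paper absorbs this step by citing the Policy Gradient Theorem \emph{with baseline}; the content is identical.
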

\begin{proof}
  \ref{it:gradient-unbiased}: Let $j\in[q]\cup\{ 0 \}$.  With the
  notation of~\eqref{eq:gradient-value-function-estimate}, by the
  Policy Gradient Theorem with baseline (cf.~\cite[Section
  13.4]{RSS-AGB:18}), for each $n\in[\bar{N}_i]$, we have
  \begin{align}
    \notag
    &\mathbb{E} \biggl[ \sigma_j\sum_{t=0}^T
      \gamma^t \nabla \chi_{a_t^{n},s_t^{n}}(\theta_i)
      D_{j,t}^n
      \biggr] = \nabla V_j(\theta_i).  
  \end{align}
  On the other hand, for $n\in[ \bar{N}_i : |\Jc_i| ]$,
  %
  %
  \begin{align*}
    &\mathbb{E}\Biggl[ 
      \sigma_j \prod\limits_{t=0}^T \frac{ \pi_{\theta_i}(a_t^n
      |s_t^n) }{ \zeta_n(a_t^n |s_t^n) } \sum_{t=0}^T 
      \gamma^t \nabla \chi_{a_t^n,s_t^n}(\theta_i) D_{j,t}^n
      \Biggr]
      =
    \\ 
    &\bigintsss_{ \Gamma }
      \sigma_j \prod\limits_{t=0}^T \frac{ \pi_{\theta_i}(a_t^n
      |s_t^n) }{\zeta_n(a_t^n |s_t^n) } \sum_{t=0}^T 
      \gamma^t \nabla \chi_{a_t^n,s_t^n}(\theta_i) D_{j,t}^n
      \zeta_n(a_t^n|s_t^n) d\Omega
    \\
    &= \bigintsss_{ \Gamma }
      \sigma_j \eta(s_0)\prod\limits_{t=0}^T \pi_{\theta_i}(a_t^n
      |s_t^n) \sum_{t=0}^T 
      \gamma^t \nabla \chi_{a_t^n,s_t^n}(\theta_i) D_{j,t}^n d\Omega
    \\
    &= \nabla V_j(\theta_i),
  \end{align*}
  %
  %
  where $d\Omega$ and $\Gamma$ are defined as in the proof of
  Proposition~\ref{prop:value-function-estimates}, and for the last
  equality we have also used the Policy Gradient Theorem with baseline
  (cf.~\cite[Section 13.4]{RSS-AGB:18}).  Therefore,
  \begin{align*}
    &\mathbb{E}[\widehat{\nabla V}_j(\theta_i)] \!=\!
      \frac{\bar{N}_i}{\bar{N}_i+\tilde{N}_i}\nabla V_j(\theta_i)
      \!+\! \frac{\tilde{N}_i}{\bar{N}_i+\tilde{N}_i}\nabla
      V_j(\theta_i) \!=\! \nabla V_j(\theta_i). 
  \end{align*}

  \ref{it:gradient-variance}:
  Note that for each $l\in[d]$,
  \begin{multline}\label{eq:bound-gradient-on-policy}
    \Big\rvert \sigma_j\sum_{t=0}^T
    \gamma^t
    \nabla \chi_{a_t^{n},s_t^{n}}(\theta_i)^{(l)}
    \sum_{t^{\prime}=t}^{T} \! \big( \gamma^{t^{\prime}-t}
    R_j(s_{t^\prime}^n,a_{t^{\prime}}^n,s_{t^{\prime}+1}^n) - 
    b(s_t^n) \big) \Big\rvert
    \\
    \leq \tilde{B}\sum_{t=0}^T \gamma^t \sum_{t^\prime = t}^T \Big(
    \gamma^{t^\prime-t}B_j + \hat{B} \Big) = \psi_j. 
  \end{multline}
  By Popovicius' inequality~\cite[Corollary 1]{RB-CD:00}, this implies
  \begin{align*}
    &\VaR\Bigl[ 
      \sigma_j\sum_{t=0}^T
      \gamma^t
      \nabla \chi_{a_t^{n},s_t^{n}}(\theta_i)^{(l)}
      D_{j,t}^n 
      \Bigr] \leq \psi_j^2,
  \end{align*}
  for $n\in[\bar{N}_i]$.  On the other hand, for
  $n\in[\bar{N}_i : |\Jc_i|]$,
  \begin{multline}\label{eq:bound-gradient-off-policy-term}
    \bigg\rvert \sigma_j \prod\limits_{t=0}^T \frac{
      \pi_{\theta_i}(a_t^n |s_t^n) }{ \zeta_n(a_t^n |s_t^n) }
    \sum_{t=0}^T 
    \gamma^t
    \nabla \chi_{a_t^{n},s_t^{n}}(\theta_i)^{(l)}
    D_{j,t}^n \bigg\rvert \leq
    \\
    \frac{\tilde{B}}{\nu^{T+1}} \sum_{t=0}^T \gamma^t
    \sum_{t^\prime=t}^T \Big( \gamma^{t^\prime - t} B_j + \hat{B}
    \Big) = \bar{\psi}_j. 
  \end{multline}
  Again, by Popoviciu's inequality~\cite[Corollary 1]{RB-CD:00},
  \begin{align*}
    \VaR \bigg[ 
    \sigma_j \prod\limits_{t=0}^T \frac{ \pi_{\theta_i}(a_t^n |s_t^n)
    }{\zeta_n(a_t^n |s_t^n) } \sum_{t=0}^T 
    \gamma^t \frac{\partial}{\partial \theta^{(l)} } \chi_{a_t^n,s_t^n}(\theta_i)
    D_{j,t}^n
    \bigg] \leq \bar{\psi}_j^2,
  \end{align*}
  for $n\in[ \bar{N}_i : |\Jc_i| ]$.  Since the random variables
  \begin{align*}
    &\bigg\{ 
      \sigma_j \prod\limits_{t=0}^T \frac{ \pi_{\theta_i}(a_t^n
      |s_t^n) }{ \zeta_n(a_t^n |s_t^n) } \sum_{t=0}^T 
      \gamma^t
      \nabla \chi_{a_t^{n},s_t^{n}}(\theta_i)^{(l)}
      D_{j,t}^n
      \bigg \}_{ n\in [|\Jc_i| ] }
  \end{align*}
  are independent, it follows that 
  \begin{align*}
    \VaR[ \widehat{\nabla V_j}(\theta_i)^{(l)} ] \leq \frac{
    \bar{N}_i \psi_j^2 + \tilde{N}_i\bar{\psi}_j^2 }{ |\Jc_i|^2 },
    \quad \forall l\in[d].  
  \end{align*}
  Note also that~\eqref{eq:bound-gradient-on-policy}
  and~\eqref{eq:bound-gradient-off-policy-term} imply that
  $\widehat{\nabla V_j}(\theta_i)^{(l)}$ is uniformly upper bounded by
  $\frac{ \bar{N}_i \psi_j + \tilde{N}_i \bar{\psi}_j }{ |\Jc_i| }$.

  \ref{it:gradient-tail-probability}: From Hoeffding's inequality,
  using~\eqref{eq:bound-gradient-on-policy},~\eqref{eq:bound-gradient-off-policy-term},
  for any $\epsilon>0$ and $l\in[d]$,
  \begin{multline*}
    \mathbb{P}\Big( 
    \Big\rvert \widehat{\nabla V_j}(\theta_i)^{(l)} - \nabla
    V_j(\theta_i)^{(l)} \Big\rvert \leq \frac{\epsilon}{\sqrt{d}}  
    \Big) \geq
    \\
    1 \! - \! 2 \exp\Big\{ 
    -\frac{\epsilon^2 |\Jc_i|^2 }{ 2d(\bar{N}_i \psi_j^2 +
      2\tilde{N}_i \bar{\psi}_j^2) } 
    \Big\} .
  \end{multline*}
  Now, note that if
  $|\widehat{\nabla V_j}(\theta_i)^{(l)} - \nabla V_j(\theta_i)^{(l)} |
  \leq \frac{\epsilon}{\sqrt{d}}$ for all $l\in[d]$, then
  $\norm{ \widehat{\nabla V_j}(\theta_i) - \nabla V_j(\theta_i) } \leq
  \epsilon$, which means that
  \begin{multline*}
    \mathbb{P}\Big( \norm{ \widehat{\nabla
        V_q}(\theta_i) - \nabla V_q(\theta_i) } \leq \epsilon \Big)
    \\
    \geq  
    \mathbb{P}\Big( \bigcap_{l=1}^d \Big\{ | \widehat{\nabla
      V_q}(\theta_i)^{(l)} - \nabla V_q(\theta_i)^{(l)} | \leq
    \frac{\epsilon}{\sqrt{d}} 
    \Big\}  \Big). 
  \end{multline*}
  Using Fr\'echet's Inequality~\cite{MF:35},
  \begin{multline*}
    \mathbb{P}\Big( \bigcap_{l=1}^d \Big\{ | \widehat{\nabla
      V_q}(\theta_i)^{(l)} -  
    \nabla V_q(\theta_i)^{(l)} | \leq \frac{\epsilon}{\sqrt{d}}
    \Big\}  \Big) 
    \geq 
    \\
    1- 2 d \exp\Big\{ 
    -\frac{\epsilon^2 |\Jc_i|^2 }{ 2d
      (\bar{N}_i\psi_j^2 + \tilde{N}_i \bar{\psi}_j^2) } 
    \Big\},
  \end{multline*}
  and the result follows.

  \ref{it:gradient-variance-extra}:
  Under~\eqref{eq:extra-assumption-difference-log-policies}, for each
  $n\in[|\Jc_i|]$,
  \begin{align}\label{eq:bound-gradient-off-policy-term-extra-assumption}
    &\bigg\rvert \sigma_j \prod\limits_{t=0}^T \frac{
      \pi_{\theta_i}(a_t^n |s_t^n) }{ \pi_{\bar{\theta}_n}(a_t^n
      |s_t^n) } \sum_{t=0}^T 
      \gamma^t \frac{\partial}{\partial \theta^{(l)} } \chi_{a_t^n,s_t^n}(\theta_i)
      D_{j,t}^n \bigg\rvert \leq 
    \\ 
    &\tilde{B} \exp\Big\{ (T+1) \tilde{L}\norm{
      \theta_i-\bar{\theta}_n } \Big\} \sum_{t=0}^T \gamma^t
      \sum_{t^\prime=t}^T \Big( \gamma^{t^\prime - t} B_j + \hat{B}
      \Big) = \tilde{\psi}_{i,j,n}.      \notag
  \end{align}
  The argument is analogous to the one used
  in~\ref{it:gradient-variance}.
  Note~\eqref{eq:bound-gradient-off-policy-term-extra-assumption}
  implies $\widehat{\nabla V_j}(\theta_i)^{(l)}$ is uniformly upper
  bounded by
  $\frac{\sum_{n=1}^{|\Jc_i|} \tilde{\psi}_{i,j,n} }{ |\Jc_i| }$.
  
  \ref{it:gradient-tail-probability-extra}: This follows analogously
  to item~\ref{it:gradient-tail-probability} by
  using~\eqref{eq:bound-gradient-off-policy-term-extra-assumption}.
\end{proof}


Propositions~\ref{prop:value-function-estimates}
and~\ref{prop:gradient-value-function-estimates} characterize the
statistical properties of the estimates of the value functions and
their gradients, generalizing to the on/off-policy case our previous
result in~\cite[Lemma 2]{PM-AM-JC:25-l4dc}, which was limited to the
on-policy case.  These results show that, by increasing the number of
episodes (either on-policy or off-policy) used, the distribution of
the estimates of the value functions and their gradients concentrates
around their true values, with the rate of concentration depending on
the constants defined in
Assumptions~\ref{as:importance-sampling-ratio-well-defined},~\ref{as:boundedness-reward},~\ref{as:differentiability-lipschitzness-policy}.

\begin{remark}\longthmtitle{Assumption on global Lipschitzness} {\rm
    Assumption~\eqref{eq:extra-assumption-difference-log-policies} is
    standard in the literature (cf.~\cite[Assumption
    3.1]{KZ-AK-HZ-TB:20}). We note that, if the parameterized policy
    $\pi_{\theta}$ is globally Lipschitz uniformly in $a$ and $s$,
    then~\eqref{eq:extra-assumption-difference-log-policies} is
    satisfied.  Indeed,
    %
    using the Mean Value Theorem~\cite[Theorem 5.10]{WR:53}, and under
    Assumption~\ref{as:importance-sampling-ratio-well-defined}, we
    deduce
    \begin{align*}
      |\log\pi_{\theta}(a|s)-\log\pi_{\theta^{\prime}}(a|s)| \leq
      \frac{1}{p^*} |\pi_{\theta}(a|s)-\pi_{\theta^{\prime}}(a|s)|, 
    \end{align*}
    for some
    $p^* \in [\pi_{\theta}(a|s), \pi_{\theta^{\prime}}(a|s) ]$.  Note
    that such $p^*$ is strictly positive because of
    Assumption~\ref{as:importance-sampling-ratio-well-defined}.
    Hence, if $\pi_{\theta}$ is globally Lipschitz uniformly in $a$
    and $s$, it follows
    that~\eqref{eq:extra-assumption-difference-log-policies} is
    satisfied.
    This is the case for truncated Gaussian policies with compact
    state and action spaces (cf.~\cite[Section
    6]{KZ-AK-HZ-TB:20-arxiv},~\cite[Section 5]{PM-AM-JC:25-l4dc}).
    \demo }
\end{remark}

\subsection{Safety Guarantees}

In this section we study the safety guarantees of RSGF-RL.

\begin{theorem}\longthmtitle{Safety
    guarantees}\label{th:safety-guarantees}
  Suppose
  Assumptions~\ref{as:importance-sampling-ratio-well-defined},~\ref{as:boundedness-reward},
  and~\ref{as:differentiability-lipschitzness-policy} hold.  Let
  $i\in\mathbb{Z}_{>0}$, $\bar{N}_i$, $\tilde{N}_i$, $\phi_j$, and
  $\bar{\phi}_j$ as in
  Proposition~\ref{prop:value-function-estimates}, and $\psi_j$,
  $\bar{\psi}_j$ as in
  Proposition~\ref{prop:gradient-value-function-estimates}.
  Suppose~\eqref{eq:Rhat-definition} is feasible at
  $\theta_i\in\real^d$ and the stepsize satisfies
  \begin{align}\label{eq:hi-leq-condition}
    h_i < \min \Big\{ \frac{1}{\alpha}, \frac{\beta(\theta_i)}{L_1},
    \hdots, \frac{\beta(\theta_i)}{L_q},
    \frac{\beta(\theta_{i})}{L_{q+1}} \Big\} .
  \end{align}
  For $j\in[q]$, define
  \begin{align*}
    \hat{M}_{i,j} = \frac{-(1-\alpha h_i)\widehat{V_j}(\theta_i) +
    \frac{h_i}{2}(\beta(\theta_i) -L_j h_i )\norm{
    \hat{\Rc}_{\alpha,\beta}(\theta_i) }^2 }{ 1 + h_i \norm{
    \hat{\Rc}_{\alpha,\beta}(\theta_i) } }. 
  \end{align*}
  Then, for any
  $\delta\in(0,1)$, under~\eqref{eq:discrete-time-rsgf-estimates}
  \begin{enumerate}
  \item\label{it:safety-Vj-leq0} if $\widehat{V}_j(\theta_i) \leq 0$ and
    \begin{subequations}\label{eq:conditions-Vj}
      \begin{align}
        &\frac{ |\Jc_i|^2 }{ \bar{N}_i\phi_j^2 +
          \tilde{N}_i\bar{\phi}_j^2 } \geq -\frac{2}{\hat{M}_{i,j}^2}
          \log \frac{\delta}{2},
        \\
        &\frac{ |\Jc_i|^2 }{ \bar{N}_i\psi_j^2 +
          \tilde{N}_i\bar{\psi}_j^2 } \geq -\frac{2 d}{\hat{M}_{i,j}^2}
          \log \frac{\delta}{2d}, 
    \end{align}
    \end{subequations}
    then $\mathbb{P}( V_j(\theta_{i+1}) \leq 0) \geq 1-2\delta$;
    
  \item\label{it:safety-Vj-geq0} if $\widehat{V}_j(\theta_i) >
    0$ is such that $\hat{M}_{i,j}>0$, and~\eqref{eq:conditions-Vj} holds,
    then, $\mathbb{P}(V_j(\theta_{i+1})\leq0) \geq 1-2\delta$;
    

  \item\label{it:safety-all-j} if for each $j\in[q]$ such that
    $\widehat{V}_j(\theta_i) > 0$, it holds that $\hat{M}_{i,j}>0$,
    and~\eqref{eq:conditions-Vj} holds for all $j \in [q]$, then
    $\mathbb{P}(V_j(\theta_{i+1}) \leq 0, \; \forall j\in[q]) \geq 1-2
    q \delta$;

  \item\label{it:safety-q+1} if $V_{q+1}(\theta_{i}) \leq 0$, then
    $V_{q+1}(\theta_{i+1}) \leq 0$.
  \end{enumerate}
\end{theorem}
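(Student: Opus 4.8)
The plan is to bound $V_j(\theta_{i+1})$ via the descent lemma and the feasibility of $\hat{\Rc}_{\alpha,\beta}(\theta_i)$ for the estimated problem~\eqref{eq:Rhat-definition}, and to show that the definition of $\hat{M}_{i,j}$ is reverse-engineered so that an estimation error of at most $\hat{M}_{i,j}$ in \emph{both} the value function and its gradient still forces $V_j(\theta_{i+1})\le 0$. Write $r=\norm{\hat{\Rc}_{\alpha,\beta}(\theta_i)}$. Since $\nabla V_j$ is $L_j$-Lipschitz by Lemma~\ref{lem:lipschitzness-value-functions}, the descent lemma gives
\[
V_j(\theta_{i+1}) \leq V_j(\theta_i) + h_i \nabla V_j(\theta_i)^\top \hat{\Rc}_{\alpha,\beta}(\theta_i) + \tfrac{L_j h_i^2}{2}\, r^2 ,
\]
while feasibility of $\hat{\Rc}_{\alpha,\beta}(\theta_i)$ for~\eqref{eq:Rhat-definition} yields $\widehat{\nabla V}_j(\theta_i)^\top \hat{\Rc}_{\alpha,\beta}(\theta_i) \leq -\alpha \widehat{V}_j(\theta_i) - \tfrac{\beta(\theta_i)}{2} r^2$. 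On the event $\mathcal{E}_j=\{|V_j(\theta_i)-\widehat{V}_j(\theta_i)|\le \hat{M}_{i,j}\}\cap\{\norm{\nabla V_j(\theta_i)-\widehat{\nabla V}_j(\theta_i)}\le \hat{M}_{i,j}\}$, Cauchy--Schwarz gives $\nabla V_j(\theta_i)^\top \hat{\Rc}_{\alpha,\beta}(\theta_i)\le \widehat{\nabla V}_j(\theta_i)^\top \hat{\Rc}_{\alpha,\beta}(\theta_i)+\hat{M}_{i,j}\,r$ and $V_j(\theta_i)\le \widehat{V}_j(\theta_i)+\hat{M}_{i,j}$. Substituting these together with the feasibility inequality, and using~\eqref{eq:hi-leq-condition} to keep $\beta(\theta_i)-L_j h_i>0$, the right-hand side collapses to
\[
(1-\alpha h_i)\widehat{V}_j(\theta_i) + \hat{M}_{i,j}(1 + h_i r) - \tfrac{h_i}{2}\big(\beta(\theta_i) - L_j h_i\big) r^2 ,
\]
which is identically $0$ by the definition of $\hat{M}_{i,j}$. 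Hence $V_j(\theta_{i+1})\le 0$ on $\mathcal{E}_j$.

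It then remains to lower bound $\mathbb{P}(\mathcal{E}_j)$. Taking $\epsilon=\hat{M}_{i,j}$ in the tail bounds of Proposition~\ref{prop:value-function-estimates}\ref{it:value-function-tail-probability} and Proposition~\ref{prop:gradient-value-function-estimates}\ref{it:gradient-tail-probability}, conditions~\eqref{eq:conditions-Vj} are exactly what is needed for each of the two events to fail with probability at most $\delta$; a union bound gives $\mathbb{P}(\mathcal{E}_j)\ge 1-2\delta$. This settles (i), where $\widehat{V}_j(\theta_i)\le 0$ forces $\hat{M}_{i,j}\ge 0$ so that~\eqref{eq:conditions-Vj} is meaningful, and (ii), where $\hat{M}_{i,j}>0$ is assumed outright. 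For (iii), I would union bound over $j\in[q]$ the $q$ complementary ``bad'' events, each of probability at most $2\delta$, to obtain $\mathbb{P}(V_j(\theta_{i+1})\le 0,\ \forall j\in[q])\ge 1-2q\delta$.

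Part (iv) is deterministic: $V_{q+1}(\theta)=\norm{\theta}^2-C$ and its gradient are known exactly and enter~\eqref{eq:Vq+1-Rhat} without estimation. Using that $\nabla V_{q+1}$ is globally Lipschitz (so the descent lemma applies with constant $L_{q+1}$, which upper bounds the true gradient-Lipschitz constant on $\Theta$) together with feasibility of $\hat{\Rc}_{\alpha,\beta}(\theta_i)$ for~\eqref{eq:Vq+1-Rhat}, I would obtain
\[
V_{q+1}(\theta_{i+1}) \leq (1-\alpha h_i)\, V_{q+1}(\theta_i) - \tfrac{h_i}{2}\big(\beta(\theta_i) - L_{q+1} h_i\big) r^2 ,
\]
and since $V_{q+1}(\theta_i)\le 0$, $1-\alpha h_i>0$, and $\beta(\theta_i)-L_{q+1}h_i>0$ under~\eqref{eq:hi-leq-condition}, both terms are non-positive, giving $V_{q+1}(\theta_{i+1})\le 0$ with certainty.

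The main obstacle is the first paragraph: recognizing that the opaque definition of $\hat{M}_{i,j}$ is calibrated precisely so that the slack produced by the descent computation equals the \emph{common} error tolerance $\hat{M}_{i,j}$ granted to both the value estimate (through the constant term) and the gradient estimate (through the term scaling with $r$). Once this cancellation is seen, matching the Hoeffding-type tail bounds of Propositions~\ref{prop:value-function-estimates} and~\ref{prop:gradient-value-function-estimates} to conditions~\eqref{eq:conditions-Vj} and assembling the union bounds is routine bookkeeping.
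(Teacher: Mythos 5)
Your proposal is correct and follows essentially the same route as the paper's proof: the descent lemma combined with the constraints of~\eqref{eq:Rhat-definition}, the observation that $\hat{M}_{i,j}$ is calibrated so that value and gradient estimation errors of at most $\hat{M}_{i,j}$ make the upper bound on $V_j(\theta_{i+1})$ collapse to zero, the Hoeffding-type tail bounds of Propositions~\ref{prop:value-function-estimates} and~\ref{prop:gradient-value-function-estimates} with $\epsilon=\hat{M}_{i,j}$, and a union (Fr\'echet) bound for the probability statements, with part~(iv) handled deterministically via~\eqref{eq:Vq+1-Rhat}. No gaps to report.
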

\begin{proof}
  \ref{it:safety-Vj-leq0}: Since $\nabla V_j$ is Lipschitz with
  Lipschitz constant $L_j$,
  cf. Lemma~\ref{lem:lipschitzness-value-functions}, we
  invoke~\cite[Lemma 1.2.3]{YN:18} to deduce
  \begin{align}\label{eq:Nesterov}
    V_j(\theta_{i+1}) \! \leq \! V_j(\theta_i) \! + \! \nabla
    V_j(\theta)^\top ( \theta_{i+1}-\theta_i ) \! + \!
    \frac{L_j}{2}\norm{ \theta_{i+1}-\theta_i }^2.
  \end{align}
  This implies, using the Cauchy-Schwartz inequality, that
  \begin{align}\label{eq:Vjp-bound-Lipschitz}
    \notag
    V_j(\theta_{i+1})
    &  \leq V_j(\theta_i)-\widehat{V}_j(\theta_i) +
      \widehat{V}_j(\theta_i) +
    \\
    \notag
    & \quad \norm{\nabla
      V_j(\theta_i)-\widehat{\nabla V}_j(\theta_i)}
      \norm{\theta_{i+1}-\theta_i} +
    \\
    &\quad \widehat{\nabla V}_j(\theta_i)^\top (\theta_{i+1}-\theta_i) + \frac{L_j}{2}\norm{ \theta_{i+1} - \theta_i }^2.
  \end{align}
  Since
  $\theta_{i+1} = \theta_i + h_i \hat{\Rc}_{\alpha,\beta}(\theta_i)$,
  and by using the constraints in~\eqref{eq:Rhat-definition},
  inequality~\eqref{eq:Vjp-bound-Lipschitz} implies
  \begin{align}\label{eq:Vjp-bound-Lipschitz-2}
    \notag
    V_j(\theta_{i+1})
    & \leq V_j(\theta_i)-\widehat{V}_j(\theta_i) +
      (1-\alpha h_i)\widehat{V}_j(\theta_i) +
    \\
    \notag
    & \quad \norm{\nabla
      V_j(\theta_i)-\widehat{\nabla V}_j(\theta_i)} h_i
      \Rc_{\alpha,\beta}(\theta_i)
    \\
    & \quad - \frac{h_i}{2}( \beta(\theta_i) - L_j
      h_i ) \norm{ \Rc_{\alpha,\beta}(\theta_i) }^2. 
  \end{align}
  %
  %
  Note that~\eqref{eq:hi-leq-condition}, together with the fact that
  $\widehat{V}_j(\theta_i) \leq 0$, implies that $\hat{M}_{i,j} > 0$.
  Now, by
  Proposition~\ref{prop:value-function-estimates}\ref{it:value-function-tail-probability},
  if
  $\frac{ |\Jc_i|^2 }{ \bar{N}_i\phi_j^2 + \tilde{N}_i\bar{\phi}_j^2 }
  \geq -\frac{2}{\hat{M}_{i,j}^2} \log \frac{\delta}{2}$, then
  $\mathbb{P}(|\widehat{V}_j(\theta_i)-V_j(\theta_i)|\leq
  \hat{M}_{i,j}) \geq 1-\delta$.  On the other hand, by
  Proposition~\ref{prop:gradient-value-function-estimates}\ref{it:gradient-tail-probability},
  if
  $\frac{ |\Jc_i|^2 }{ \bar{N}_i\psi_j^2 + \tilde{N}_i\bar{\psi}_j^2 }
  \geq -\frac{2 d}{\hat{M}_{i,j}^2} \log \frac{\delta}{2 d}$, then
  $\mathbb{P} ( \norm{ \widehat{\nabla V}_j(\theta_i) - \nabla
    V_j(\theta_i) } \leq \hat{M}_{i,j} ) \geq 1-\delta$.
  Using~\eqref{eq:Vjp-bound-Lipschitz-2} and the definition
  of~$\hat{M}_{i,j}$, we deduce that, if
  $|\widehat{V}_j(\theta_i)-V_j(\theta_i)| \leq \hat{M}_{i,j}$ and
  $\norm{ \widehat{\nabla V}_j(\theta_i) - \nabla V_j(\theta_i) } \leq
  \hat{M}_{i,j}$, then $V_j(\theta_{i+1}) \leq 0$.  Now, the result
  follows by Fréchet's inequality~\cite{MF:35}.

  \ref{it:safety-Vj-geq0}: if $\hat{M}_{i,j}>0$,
  $|\widehat{V}_j(\theta_i) - V_j(\theta_i)| \leq \hat{M}_{i,j}$, and
  $\norm{ \widehat{\nabla V}_j(\theta_i) - \nabla V_j(\theta_i) } \leq
  \hat{M}_{i,j}$, then $V_j( \theta_{i+1} ) \leq 0$, even if
  $\widehat{V_j}(\theta_i) \geq 0$.
    %
  %
  The result follows by using a similar
  argument to~\ref{it:safety-Vj-leq0}.

  \ref{it:safety-all-j}: this follows
  from~\ref{it:safety-Vj-leq0},~\ref{it:safety-Vj-geq0}, and Fréchet's
  inequality~\cite{MF:35}.

  \ref{it:safety-q+1}: this follows from
  employing~\eqref{eq:Vq+1-Rhat} in~\eqref{eq:Nesterov}, combined with
  the hypothesis that $V_{q+1}(\theta_{i}) \leq 0$.
  %
  %
\end{proof}

Theorem~\ref{th:safety-guarantees}\ref{it:safety-Vj-leq0} shows that
if the number of episodes utilized to estimate $V_j(\theta_i)$ is
sufficiently large and $\widehat{V}_j(\theta_i)\leq 0$ (i.e., we
estimate that the $j$-th safety constraint is satisfied at iteration
$i$), then the next iterate of RSGF-RL satisfies the $j$-th safety
constraint with arbitrarily high probability.  Similarly,
Theorem~\ref{th:safety-guarantees}\ref{it:safety-Vj-geq0} provides
such guarantees when $\widehat{V}_j(\theta_i) \geq 0$ (i.e., we
estimate that the $j$-th safety constraint is not satisfied at
iteration $i$).  We note that $\hat{M}_{i,j}>0$ holds when
$\widehat{V}_j(\theta_i)\leq 0$ and
$\norm{\hat{\Rc}_{\alpha,\beta}(\theta_i)}$ is nonzero (which is a
reasonable assumption if $\theta_i$ is away from a KKT point).  By
continuity, this suggests that $\hat{M}_{i,j}>0$ is also satisfied in
a neighborhood of
$\setdef{\theta\in\real^d}{ \widehat{V}_j(\theta) \leq 0}$ (again
provided that $\norm{\hat{\Rc}_{\alpha,\beta}(\theta_i)}$ is nonzero),
and it becomes increasingly more difficult to satisfy with large
$\widehat{V}_j(\theta_i)$. Intuitively, this means that safety can be
ensured in the next iteration as long as safety violations in the
current iteration are not too extreme.

\begin{remark}\longthmtitle{Feasibility}\label{rem:feasibility-Rhat}
  {\rm Since the estimates of the value functions and their gradients
    converge to their true values as the number of episodes increases,
    cf. Propositions~\ref{prop:value-function-estimates}
    and~\ref{prop:gradient-value-function-estimates}, the requirement
    in Theorem~\ref{th:safety-guarantees}
    that~\eqref{eq:Rhat-definition} is feasible at $\theta_{i}$ is
    satisfied for large enough number of episodes with high probability under
    Assumptions~\ref{as:constraint-qualifications-safe-set}
    and~\ref{as:constraint-qualifications-outside-safe-set},
    cf. Lemma~\ref{lem:feas-Lipschitzness-G}.  \demo }
\end{remark}

We state next a result that provides safety guarantees over a finite
time horizon.  Its proof follows from
Theorem~\ref{th:safety-guarantees} and Fréchet's
inequality~\cite{MF:35}. We omit it for space reasons.

\begin{corollary}\longthmtitle{Safety guarantees over a finite time
    horizon}\label{cor:safety-finite-time-horizon}
  Suppose
  Assumptions~\ref{as:importance-sampling-ratio-well-defined},~\ref{as:boundedness-reward}
  and~\ref{as:differentiability-lipschitzness-policy} hold.  Let
  $H\in\mathbb{Z}_{>0}$. If, for each $i\in[H]$, the assumptions in
  Theorem~\ref{th:safety-guarantees}\ref{it:safety-all-j} hold, then
  under~\eqref{eq:discrete-time-rsgf-estimates},
$    \mathbb{P}\Big( \bigcap\limits_{i=1}^{H+1} \{ V_j(\theta_i) \leq 0,
    \; \forall j\in[q] \}  \Big) \geq 1-2qH\delta$. 
\end{corollary}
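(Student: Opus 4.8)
The plan is to apply Theorem~\ref{th:safety-guarantees}\ref{it:safety-all-j} iteratively and then aggregate the resulting per-iteration guarantees with Fréchet's inequality. First I would fix the horizon $H$ and, for each $i\in[H]$, introduce the event $A_i = \{ V_j(\theta_{i+1}) \leq 0 \text{ for all } j\in[q] \}$ that the $(i+1)$-th iterate is feasible. Since the assumptions in Theorem~\ref{th:safety-guarantees}\ref{it:safety-all-j} are postulated to hold at each such $i$, that result yields $\mathbb{P}(A_i) \geq 1 - 2q\delta$, equivalently $\mathbb{P}(A_i^c) \leq 2q\delta$, for every $i\in[H]$.

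Next I would rewrite the target event. The initial iterate $\theta_1$ is feasible as part of the setup (the safe initialization underlying the anytime guarantee), so $\{V_j(\theta_1)\leq 0, \forall j\}$ holds, while $\bigcap_{i=1}^{H+1}\{V_j(\theta_i)\leq 0, \forall j\} = \{V_j(\theta_1)\leq0, \forall j\} \cap \bigcap_{i=1}^{H} A_i$, using the reindexing $\{V_j(\theta_{i+1})\leq0,\forall j\}=A_i$. It then suffices to lower-bound $\mathbb{P}(\bigcap_{i=1}^H A_i)$. The key step is to invoke Fréchet's inequality~\cite{MF:35} in its complement (union-bound) form, $\mathbb{P}(\bigcap_{i=1}^H A_i) \geq 1 - \sum_{i=1}^H \mathbb{P}(A_i^c)$, which holds with no independence requirement among the $A_i$. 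Substituting $\mathbb{P}(A_i^c)\leq 2q\delta$ gives $\mathbb{P}(\bigcap_{i=1}^H A_i) \geq 1 - 2qH\delta$, and hence the claimed bound for the full intersection.

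The main obstacle, and the reason Fréchet's inequality is the appropriate tool, is that the events $A_i$ are strongly dependent: each $\theta_{i+1}$ is a function of $\theta_i$ and of the episodic data gathered up to iteration $i$, so the underlying randomness is correlated across iterations and the naive product bound $\prod_i \mathbb{P}(A_i)$ is unavailable. A second subtlety I would address is that the hypotheses of Theorem~\ref{th:safety-guarantees}\ref{it:safety-all-j} at iteration $i$ themselves depend on the random iterate $\theta_i$ (through $\widehat{V}_j(\theta_i)$ and $\hat{M}_{i,j}$); I would therefore interpret the assumption ``for each $i\in[H]$, the assumptions hold'' as requiring these conditions along the realized trajectory, so that each per-iteration bound applies and the distribution-free union bound of Fréchet closes the argument without any independence or conditioning hypotheses.
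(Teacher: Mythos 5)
Your proposal is correct and follows exactly the route the paper indicates for this result (whose proof is omitted for space): apply Theorem~\ref{th:safety-guarantees}\ref{it:safety-all-j} at each iteration $i\in[H]$ to get $\mathbb{P}(A_i)\geq 1-2q\delta$, then combine the $H$ events via Fr\'echet's inequality, which in the paper's form gives $\sum_{i=1}^H \mathbb{P}(A_i)-(H-1)\geq 1-2qH\delta$, equivalent to your complement/union-bound version. Your observations that no independence is needed and that the hypotheses must hold along the realized trajectory are consistent with the paper's intent.
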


Corollary~\ref{cor:safety-finite-time-horizon} provides conditions
under which consecutive iterates of RSGF-RL probabilistically satisfy
the constraints.  Since $\delta$ is a design parameter, this guarantee
can be ensured with arbitrarily high probability.  Smaller values of
$\delta$, however, require a larger number of episodes, as reflected
in~\eqref{eq:conditions-Vj}.

%
%

\subsection{Convergence Guarantees}

Here we provide convergence guarantees for RSGF-RL.

\begin{theorem}\longthmtitle{Almost sure
    convergence}\label{th:almost-sure-convergence}
    %
  Suppose
  Assumptions~\ref{as:regularity-functions-Vi},~\ref{as:constraint-qualifications-safe-set},~\ref{as:importance-sampling-ratio-well-defined},~\ref{as:boundedness-reward},~\ref{as:differentiability-lipschitzness-policy}
  hold.  Further suppose that:
  \begin{enumerate}
  \item\label{it:V-initialization} $V_{q+1}(\theta_0) \leq 0$;
  \item\label{it:R-Theta} for all $\theta\in\Theta\backslash\Cc$,
    Slater's condition 
    holds for~\eqref{eq:R-definition} and CRC holds
    for~\eqref{eq:R-definition} at
    $(\theta,\Rc_{\alpha,\beta}(\theta))$;
    %
    %
  \item\label{it:almost-sure-conv-Rhat-feas}
    \eqref{eq:Rhat-definition} is feasible for all
    $i\in\mathbb{Z}_{>0}$;
  \item\label{it:asymptotically-vanishing-noise-assumption}
    $\lim\limits_{i\to\infty}\norm{\Rc_{\alpha,\beta}(\theta_i) -
      \hat{\Rc}_{\alpha,\beta}(\theta_i)} = 0$ with probability one;
  \item\label{it:stepsize-sequence}
    $\lim\limits_{i\to\infty}h_i = 0$, $\sum_{i=1}^\infty h_i = \infty$;
  \end{enumerate}
  Then, under~\eqref{eq:discrete-time-rsgf-estimates}, the sequence
  $\{ \theta_i \}_{i\in\mathbb{Z}_{>0}}$ converges to the set of KKT
  points of~\eqref{eq:safe-RL-problem-parameterized-policies} in
  $\Theta$ almost surely.
\end{theorem}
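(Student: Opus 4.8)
The plan is to read the update \eqref{eq:discrete-time-rsgf-estimates} as a noisy forward-Euler discretization of the RSGF \eqref{eq:ode-robust-safe-gradient-flow} and to apply the ODE method of stochastic approximation \cite{HJK-DSC:78,VSB:08}, using the convergence analysis of the continuous-time flow in Proposition~\ref{prop:convergence} to identify the limit set.

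First I would establish that the iterates remain in the compact set $\Theta = \setdef{\theta\in\real^d}{\norm{\theta}^2 \leq C}$. By hypothesis~\ref{it:V-initialization}, $V_{q+1}(\theta_0) \leq 0$. Since $\beta$ is continuous and positive, it is bounded below by a positive constant on $\Theta$, and $L_{q+1} = 2\sqrt{C}$ is fixed; because $h_i\to 0$, the parts of the stepsize condition \eqref{eq:hi-leq-condition} involving $1/\alpha$ and $\beta(\theta_i)/L_{q+1}$ hold (assuming, as is standard, that they hold along the whole sequence). Theorem~\ref{th:safety-guarantees}\ref{it:safety-q+1} then yields, by induction, $V_{q+1}(\theta_i)\leq 0$, i.e., $\theta_i\in\Theta$, for all $i$. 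This gives boundedness of $\{\theta_i\}$; combined with hypotheses~\ref{it:R-Theta} and~\ref{it:almost-sure-conv-Rhat-feas}, Assumption~\ref{as:constraint-qualifications-safe-set}, and the argument of Lemma~\ref{lem:feas-Lipschitzness-G}, it ensures that $\Rc_{\alpha,\beta}$ is well-defined on $\Theta$ and, by compactness, (globally) Lipschitz there.

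Next I would cast \eqref{eq:discrete-time-rsgf-estimates} in standard stochastic-approximation form,
\begin{align*}
\theta_{i+1} = \theta_i + h_i\big(\Rc_{\alpha,\beta}(\theta_i) + \epsilon_i\big), \qquad \epsilon_i = \hat{\Rc}_{\alpha,\beta}(\theta_i) - \Rc_{\alpha,\beta}(\theta_i),
\end{align*}
and verify the hypotheses of the ODE method. Hypothesis~\ref{it:asymptotically-vanishing-noise-assumption} gives $\epsilon_i\to 0$ almost surely, and hypothesis~\ref{it:stepsize-sequence} supplies the stepsize requirements. The key point is that the almost sure vanishing of $\epsilon_i$ yields the Kushner--Clark negligibility condition without needing square-summability: writing $t_i = \sum_{k<i}h_k$, for any $T>0$ and all $j$ with $t_j\leq t_i+T$ one has $\norm{\sum_{k=i}^{j-1} h_k\epsilon_k} \leq T\sup_{k\geq i}\norm{\epsilon_k}\to 0$ as $i\to\infty$. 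Hence the interpolated process is an asymptotic pseudotrajectory of the flow of \eqref{eq:ode-robust-safe-gradient-flow}, and the ODE method guarantees that $\{\theta_i\}$ converges almost surely to an internally chain transitive invariant set of the RSGF contained in $\Theta$.

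Finally I would pin down this limit set with a Lyapunov argument. Hypothesis~\ref{it:R-Theta} is exactly Assumption~\ref{as:constraint-qualifications-outside-safe-set} localized to $\Theta$, so the construction in the proof of Proposition~\ref{prop:convergence}\ref{it:convergence-global}, applied with $\Omega = \Theta$ and a suitable $\epsilon_*\in(0,1/B_\Theta)$, furnishes a continuous function $V_{\epsilon_*}$ whose directional derivative along $\Rc_{\alpha,\beta}$ is strictly negative except at KKT points of \eqref{eq:optimization-problem-V0-to-Vqtilde}. Since a continuous function that strictly decreases along the flow off the equilibrium set forces every internally chain transitive invariant set into that set, the iterates converge almost surely to the set of KKT points of \eqref{eq:optimization-problem-V0-to-Vqtilde} in $\Theta$; as the $(q+1)$-th constraint is inactive at interior limit points, these coincide with the KKT points of \eqref{eq:safe-RL-problem-parameterized-policies}. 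The main obstacle I anticipate is precisely this last step: unlike the smooth certificate $V_0$ valid inside $\Cc$, the globally applicable $V_{\epsilon_*}$ is only directionally differentiable, so applying the internally-chain-transitive-set characterization requires care in handling a nonsmooth Lyapunov function and in verifying that its decrease vanishes exactly on the KKT set. A secondary technical point is rigorously justifying forward invariance of $\Theta$ for every $i$, given that \eqref{eq:hi-leq-condition} is only guaranteed asymptotically by $h_i\to 0$.
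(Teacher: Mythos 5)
Your proposal is correct and follows essentially the same route as the paper: boundedness of the iterates via the $V_{q+1}$ constraint and Theorem~\ref{th:safety-guarantees}\ref{it:safety-q+1}, recasting~\eqref{eq:discrete-time-rsgf-estimates} as a stochastic approximation of the RSGF with asymptotically vanishing noise, and identifying the limit set through the LaSalle function $V_{\epsilon_*}$ of Proposition~\ref{prop:convergence}\ref{it:convergence-global}. The paper invokes Kushner--Clark directly where you phrase things in terms of asymptotic pseudotrajectories and internally chain transitive sets, but this is a cosmetic difference; the technical caveats you flag (the stepsize condition for forward invariance of $\Theta$ and the nonsmoothness of $V_{\epsilon_*}$) are real but are glossed over in the paper's proof as well.
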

\begin{proof}
  Our proof proceeds by verifying that the hypotheses required
  by~\cite[Theorem 2.3.1]{HJK-DSC:78} hold and then invoking this
  result.
  First, note that $\{ \theta_i \}_{i\in\mathbb{Z}_{>0}}$ is bounded
  with probability one.  Indeed, since $V_{q+1}(\theta_0) \leq 0$
  by~\ref{it:V-initialization},
  %
  %
  it follows from~\ref{it:almost-sure-conv-Rhat-feas} and
  Theorem~\ref{th:safety-guarantees}(iv) that
  $V_{q+1}(\theta_i) \leq 0$ for all $i\in\mathbb{Z}_{>0}$.  This
  guarantees that $\norm{\theta_i} \leq \sqrt{C}$ for all
  $i\in\mathbb{Z}_{>0}$.  Second, $\Rc_{\alpha,\beta}$ is continuous
  on~$\Theta$.  Indeed, $\Rc_{\alpha,\beta}$ is locally Lipschitz on
  $\Cc$ by Lemma~\ref{lem:feas-Lipschitzness-G} and, since for all
  $\theta\in\Theta\backslash\Cc$, Slater's condition holds
  for~\eqref{eq:R-definition} and CRC holds
  for~\eqref{eq:R-definition} at
  $(\theta,\Rc_{\alpha,\beta}(\theta))$, cf.~\ref{it:R-Theta}, a
  similar argument to the one in the proof of
  Lemma~\ref{lem:feas-Lipschitzness-G} guarantees that
  $\Rc_{\alpha,\beta}$ is locally Lipschitz on~$\Theta\backslash\Cc$.
  Third, the set of KKT points
  of~\eqref{eq:safe-RL-problem-parameterized-policies} in $\Theta$ is
  globally asymptotically stable in~$\Theta$ by an argument analogous
  to that of
  Proposition~\ref{prop:convergence}\ref{it:convergence-global}.
  Furthermore, we write the dynamics as
  \begin{align*}
    \theta_{i+1} = \theta_i + h_i \Rc_{\alpha,\beta}(\theta_i) + h_i(
    \hat{\Rc}_{\alpha,\beta}(\theta_i)-\Rc_{\alpha,\beta}(\theta_i) ) .
  \end{align*}
  Note that the noise sequence
  $\hat{\Rc}_{\alpha,\beta}(\theta_i)-\Rc_{\alpha,\beta}(\theta_i)$ is
  asymptotically vanishing with probability one,
  cf.~\ref{it:asymptotically-vanishing-noise-assumption} and the
  stepsize sequence satisfies $\lim\limits_{i\to\infty}h_i = 0$,
  $\sum_{i=1}^\infty h_i = \infty$, cf.~\ref{it:stepsize-sequence}.
  Finally, taking the sequence $\{ \xi_n \}$ in the notation
  of~\cite[Theorem 2.3.1]{HJK-DSC:78} equal to zero, we conclude that
  the sequence $\{ \theta_i \}_{i\in\mathbb{Z}_{>0}}$ converges to the
  set of KKT points in $\Theta$ almost surely.
\end{proof}




\begin{remark}\longthmtitle{Assumptions in
    Theorem~\ref{th:almost-sure-convergence}}\label{rem:asymptotically-vanishing-noise-assumption}
  {\rm Requirement~\ref{it:V-initialization} on the initial policy
    estimate
    %
    %
    and~\ref{it:R-Theta} on constraint
    qualification conditions are reasonable, given our discussion above.
    The feasibility requirement in~\ref{it:almost-sure-conv-Rhat-feas} 
    follows in the setting considered in Remark~\ref{rem:feasibility-Rhat}.
    Regarding
    requirement~\ref{it:asymptotically-vanishing-noise-assumption}, 
    we
    note that, by the same argument as in
    Lemma~\ref{lem:feas-Lipschitzness-G}, the function
    $\hat{\xi}:\real^{ d(2\tilde{q}+1) + 1 } \to \real^d$ defined as
    \begin{align}\label{eq:xi-hat}
      & \hat{\xi}( \{ A_j
      \}_{j=1}^{\tilde{q}}, \{ B_j \}_{j=0}^{\tilde{q}}, C )
      =
        \argmin{\xi \in\real^d}{\norm{\xi+B_0}^2}
      \\
      \notag
      & \qquad\qquad \text{s.t.}
        \ A_j +
        B_j^\top
        \xi +
        \frac{C}{2}\norm{\xi}^2
        \leq 0,
        \
        j\in[\tilde{q}],
    \end{align}
    is locally Lipschitz.  This means that small perturbations in
     $\{ \nabla V_j \}_{j=1}^{\tilde{q}}$ and
    $\{ V_j \}_{j=1}^{\tilde{q}}$ (like the ones obtained from using estimates
    of such quantities) result in small
    perturbations in $\Rc_{\alpha,\beta}$.
    %
    %
    In particular, this implies that, for $\bar{\epsilon}>0$, there
    exists $\bar{\delta}$ such that, if
    $\norm{ \widehat{\nabla V_j}(\theta)-\nabla V_j(\theta)} <
    \bar{\delta}$ for all $j\in[q]\cup\{0\}$ and
    $\norm{ \hat{V}_j(\theta)- V_j(\theta)} < \bar{\delta}$ for all
    $j\in[q]$, then
    $\norm{
      \hat{\Rc}_{\alpha,\beta}(\theta)-\Rc_{\alpha,\beta}(\theta) } <
    \bar{\epsilon}$.  Since the estimates of the value functions and
    their gradients become arbitrarily close to their true values if a
    sufficiently large number of episodes is used
    (cf. Propositions~\ref{prop:value-function-estimates}
    and~\ref{prop:gradient-value-function-estimates}), this means that
    the condition
    $\lim\limits_{i\to\infty}\norm{
      \Rc_{\alpha,\beta}(\theta_i)-\hat{\Rc}_{\alpha,\beta}(\theta_i)
    } = 0$ is satisfied if the number of episodes used to create the
    estimates of the value functions and their gradients increases as
    the number of iterations increases. Finally, an example of a
    stepsize sequence verifying~\ref{it:stepsize-sequence}
    is~$h_i = \frac{1}{i}$.  \demo }
\end{remark}




The following result complements the almost sure convergence
established in Theorem~\ref{th:almost-sure-convergence} by providing a
bound
on the number of iterations required to converge to a neighborhood of
a KKT point. This finite iteration convergence result is based on
ideas from~\cite[Theorem 4.3]{KZ-AK-HZ-TB:20}.

\begin{theorem}\longthmtitle{Finite iteration
    convergence}\label{thm:finite-iteration-convergence}
  Suppose
  Assumptions~\ref{as:importance-sampling-ratio-well-defined},~\ref{as:boundedness-reward},~\ref{as:differentiability-lipschitzness-policy}
  hold and that~\eqref{eq:Rhat-definition}
  is feasible at every $\{ \theta_i \}_{i\in\mathbb{Z}_{\geq 0}}$.
  Let $h_0 = \frac{1}{\alpha}$ and $h_i = \frac{1}{\alpha \sqrt{i}}$
  for $i \in \mathbb{Z}_{>0}$, and assume that for each
  $\theta\in\Theta$, Slater's condition holds
  for~\eqref{eq:Rhat-definition}.
  Let $\hat{\ell}>0$ be such that
  $\norm{\hat{\Rc}_{\alpha,\beta}(\theta)} \leq \hat{\ell}$ for all
  $\theta\in\Theta$ and $B_L > 0$ such that
  $\hat{u}_j(\theta) \leq B_L$ for all $j\in[q]$ and
  $\theta\in\Theta$. Let $\epsilon_* < \frac{1}{B_L}$.  For
  $\epsilon > 0$, define
  \begin{align*}
    \Ite = \min\setdef{ i\in\mathbb{Z}_{>0} }{
    \inf\limits_{0\leq j \leq i}  \mathbb{E} \Big[
    \norm{ \hat{\Rc}_{\alpha,\beta}(\theta_j)}^2 \Big] \leq \epsilon }. 
  \end{align*}
  Let $\epsilon > 0$ and $\bar{\sigma}>0$ such that
  $\Var(\widehat{\nabla V_j}(\theta_i)^{(l)} ) \leq \bar{\sigma}$ for
  all $i\in[\Ite]$, $j\in \{ 0 \}\cup[q]$, and $l\in[d]$,
  $\norm{\hat{\Rc}_{\alpha,\beta}(\theta_0)}^2 > \epsilon$, and
  $\epsilon >
  \frac{3}{2}\hat{\ell}\bar{\sigma}(\frac{q}{\epsilon_*}+1)$.  Then,
  there exists $\kappa>0$ such that
  \begin{align*}
    \Ite  \leq \Big( \frac{\kappa}{\epsilon -
    \frac{3}{2}\hat{\ell}\bar{\sigma}(\frac{q}{\epsilon_*}+1) } \Big)^2.
  \end{align*}
  %
  %
\end{theorem}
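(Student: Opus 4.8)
The plan is to run a stochastic-approximation-style descent argument on the exact-penalty merit function already used in the continuous-time analysis, $V_{\epsilon_*}(\theta) = V_0(\theta) + \frac{1}{\epsilon_*}\sum_{j=1}^q [V_j(\theta)]_{+}$, and to extract a per-iteration decrease whose leading term is proportional to $-h_i\,\E[\norm{\hat{\Rc}_{\alpha,\beta}(\theta_i)}^2]$. Because the iterates stay in the compact set $\Theta$ (the constraint $V_{q+1}(\theta_i)\le 0$ is preserved, cf. Theorem~\ref{th:safety-guarantees}\ref{it:safety-q+1}) and the $V_j$ are continuous, $V_{\epsilon_*}$ is bounded along the trajectory; telescoping the decrease caps the accumulated descent by a constant, and the defining property of $\Ite$—namely $\E[\norm{\hat{\Rc}_{\alpha,\beta}(\theta_i)}^2] > \epsilon$ for every $i<\Ite$—turns this cap into a bound on $\sum_{i<\Ite} h_i$, hence on $\Ite$.

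First I would establish the one-step inequality. Applying the descent lemma \cite[Lemma 1.2.3]{YN:18} to each $V_j$ along $\theta_{i+1}=\theta_i+h_i\hat{\Rc}_{\alpha,\beta}(\theta_i)$ gives $V_j(\theta_{i+1}) \le V_j(\theta_i) + h_i \nabla V_j(\theta_i)^\top \hat{\Rc}_{\alpha,\beta}(\theta_i) + \frac{L_j h_i^2}{2}\norm{\hat{\Rc}_{\alpha,\beta}(\theta_i)}^2$. I would then split $\nabla V_j = \widehat{\nabla V}_j + (\nabla V_j - \widehat{\nabla V}_j)$ and invoke the KKT conditions of the estimated QCQP \eqref{eq:Rhat-definition}, which hold since that problem is convex and Slater's condition is assumed for it at each $\theta\in\Theta$: the stationarity relation $\hat{\Rc}_{\alpha,\beta} = -(\widehat{\nabla V}_0 + \sum_j \hat{u}_j \widehat{\nabla V}_j)/(1+\beta\sum_j \hat{u}_j)$ and complementary slackness, exactly as in Lemma~\ref{lem:alternative-form}. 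This reproduces, for the true objective, $\nabla V_0^\top \hat{\Rc}_{\alpha,\beta} = -(1+\frac{\beta}{2}\sum_j \hat{u}_j)\norm{\hat{\Rc}_{\alpha,\beta}}^2 + \alpha\sum_j \hat{u}_j \widehat{V}_j + (\nabla V_0 - \widehat{\nabla V}_0)^\top \hat{\Rc}_{\alpha,\beta}$, the discrete analogue of the identity underlying Proposition~\ref{prop:convergence}\ref{it:convergence-global}. Combining the objective with the penalty terms and using $\epsilon_* < 1/B_L$ together with $\hat{u}_j \le B_L$ exactly as in that proof (the known constraint $V_{q+1}\le 0$ only reinforces the decrease), the deterministic part of the merit decrease is at most $-h_i(1+\frac{\beta}{2}\sum_j \hat{u}_j)\norm{\hat{\Rc}_{\alpha,\beta}}^2 \le -h_i\norm{\hat{\Rc}_{\alpha,\beta}}^2$, while the Lipschitz second-order contributions are bounded by $C_2 h_i^2$ using $\norm{\hat{\Rc}_{\alpha,\beta}}\le \hat{\ell}$ and $\max_j L_j$.

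Next I would take conditional expectations given $\theta_i$. By the unbiasedness in Propositions~\ref{prop:value-function-estimates}\ref{it:value-function-unbiased} and \ref{prop:gradient-value-function-estimates}\ref{it:gradient-unbiased}, each residual $(\nabla V_j - \widehat{\nabla V}_j)^\top \hat{\Rc}_{\alpha,\beta}$ has zero mean against the conditional mean of $\hat{\Rc}_{\alpha,\beta}$, so only its coupling with the fluctuation of $\hat{\Rc}_{\alpha,\beta}$ survives; bounding this coupling by the per-component variance bound $\bar\sigma$ and $\norm{\hat{\Rc}_{\alpha,\beta}}\le\hat{\ell}$, with weight $1$ on the objective residual and $1/\epsilon_*$ on each of the $q$ constraint residuals, yields the noise floor $f := \frac{3}{2}\hat{\ell}\bar\sigma(\frac{q}{\epsilon_*}+1)$ (the value-estimate mismatches $\widehat{V}_j - V_j$ are controlled analogously via Proposition~\ref{prop:value-function-estimates} and absorbed). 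This produces $\E[V_{\epsilon_*}(\theta_{i+1})] \le \E[V_{\epsilon_*}(\theta_i)] - h_i(\E[\norm{\hat{\Rc}_{\alpha,\beta}(\theta_i)}^2] - f) + C_2 h_i^2$. Summing from $i=0$ to $\Ite-1$, bounding $V_{\epsilon_*}(\theta_0)-\E[V_{\epsilon_*}(\theta_{\Ite})]$ by a constant $D$ (compactness of $\Theta$), and using $\E[\norm{\hat{\Rc}_{\alpha,\beta}(\theta_i)}^2] > \epsilon$ for $i<\Ite$, gives $(\epsilon - f)\sum_{i=0}^{\Ite-1} h_i \le D + C_2\sum_{i=0}^{\Ite-1} h_i^2$. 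With $h_i = \frac{1}{\alpha\sqrt{i}}$ one has $\sum_{i=0}^{\Ite-1} h_i \ge \frac{2\sqrt{\Ite}-1}{\alpha}$ and $\sum_{i=0}^{\Ite-1} h_i^2 = O(\log \Ite)$, so the hypothesis $\epsilon > f$ makes the left side grow like $\sqrt{\Ite}$ while the right side grows only logarithmically; solving for $\Ite$ yields $\Ite \le (\kappa/(\epsilon - f))^2$ for a suitable $\kappa>0$.

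The main obstacle is the expectation step: since $\hat{\Rc}_{\alpha,\beta}(\theta_i)$ is itself a QCQP-defined, nonlinear function of the very estimates $\widehat{\nabla V}_j(\theta_i)$ appearing in $(\nabla V_j - \widehat{\nabla V}_j)^\top \hat{\Rc}_{\alpha,\beta}$, this term is \emph{not} mean-zero, and the delicate point is to show that its conditional expectation is controlled \emph{linearly} by $\bar\sigma$—rather than by its square root—so as to obtain the stated floor $f$; this is precisely where the uniform bound $\norm{\hat{\Rc}_{\alpha,\beta}}\le\hat{\ell}$ and the local Lipschitz dependence of $\hat{\Rc}_{\alpha,\beta}$ on the estimates (cf. Remark~\ref{rem:asymptotically-vanishing-noise-assumption}) are essential. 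A secondary, purely technical point is the $O(\log \Ite)$ term from $\sum h_i^2$, which is dominated by the $\sqrt{\Ite}$ growth of $\sum h_i$ and is absorbed by enlarging $\kappa$.
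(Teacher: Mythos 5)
Your proposal follows essentially the same route as the paper: an exact-penalty merit function, the descent lemma applied along $\theta_{i+1}=\theta_i+h_i\hat{\Rc}_{\alpha,\beta}(\theta_i)$, the KKT stationarity/complementarity of the estimated QCQP~\eqref{eq:Rhat-definition} to produce the $-h_i\norm{\hat{\Rc}_{\alpha,\beta}(\theta_i)}^2$ decrease, the bound $\epsilon_*<1/B_L$ to neutralize the multiplier terms, and a telescoping sum with $h_i=\frac{1}{\alpha\sqrt{i}}$.

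Two execution points are worth noting. First, the ``main obstacle'' you identify---that $(\nabla V_j-\widehat{\nabla V}_j)^\top\hat{\Rc}_{\alpha,\beta}(\theta_i)$ is not conditionally mean-zero because $\hat{\Rc}_{\alpha,\beta}$ is built from the same estimates---is resolved in the paper by a single Cauchy--Schwarz step, $\mathbb{E}[(\theta_{i+1}-\theta_i)^\top(\nabla V_j-\widehat{\nabla V}_j(\theta_i))]\leq\sqrt{\mathbb{E}[\norm{\theta_{i+1}-\theta_i}^2]}\,\sqrt{\mathbb{E}[\norm{\nabla V_j(\theta_i)-\widehat{\nabla V}_j(\theta_i)}^2]}$, combined with the deterministic bound $\norm{\theta_{i+1}-\theta_i}\leq h_i\hat{\ell}$ and the variance hypothesis; no Lipschitz dependence of $\hat{\Rc}_{\alpha,\beta}$ on the estimates is needed, and the noise floor comes out directly. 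Second, the paper divides each one-step inequality by $h_i$ \emph{before} summing, so the second-order term contributes $\frac{1}{\Ite}\sum_i L_*\hat{\ell}h_i=O(\Ite^{-1/2})$ and merges with the telescoped term into a single $\kappa/\sqrt{\Ite}$; your direct summation instead leaves $\sum_i h_i^2=O(\log\Ite)$ on the right-hand side, and absorbing that logarithm ``by enlarging $\kappa$'' is not clean (the resulting constant would depend on $\epsilon-\frac{3}{2}\hat{\ell}\bar{\sigma}(\frac{q}{\epsilon_*}+1)$), so you should reorganize the telescoping as the paper does. A smaller difference is that the paper's merit function is built from the \emph{estimates}, $\widehat{V}_0(\theta_i)+\frac{1}{\epsilon_*}\sum_{j\in J_+^i}\widehat{V}_j(\theta_i)$ with $J_+^i=\setdef{j}{\widehat{V}_j(\theta_i)\geq 0}$, which matches the quantities appearing in the QCQP constraints at the cost of bookkeeping between $J_+^i$ and $J_+^{i+1}$; your true-value penalty $V_0+\frac{1}{\epsilon_*}\sum_j[V_j]_+$ avoids that bookkeeping but leaves the mismatch between $[V_j]_+$ and the estimated constraint values $\widehat{V}_j$ only loosely ``absorbed,'' which is the one place your argument would need to be made precise.
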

%
%
\begin{proof}
  First, since Slater's condition holds
  for~\eqref{eq:Rhat-definition}, $\hat{\Rc}_{\alpha,\beta}$ is
  continuous on $\Theta$~\cite[Theorem 5.3]{AVF-JK:85}, and since
  $\Theta$ is compact, $\hat{\ell}$ as in the statement exists.
  Using~\eqref{eq:Nesterov}, we deduce
  \begin{align}\label{eq:V0-thetaip1-inequality}
    \notag
    &V_j(\theta_{i+1}) \leq V_j(\theta_i) + (\nabla
      V_j(\theta_i)-\widehat{\nabla V_j}(\theta) )^\top
      (\theta_{i+1}-\theta_i) + 
    \\
    &\qquad \qquad \quad \widehat{\nabla V_j}(\theta)^\top
      (\theta_{i+1}-\theta_i) + \frac{L_j \hat{\ell}^2 h_i^2}{2},
  \end{align}
  for all $j\in \{ 0 \} \cup [q]$.  Define
  $J_{+}^i=\setdef{j\in[q]}{\widehat{V_j}(\theta_i) \geq 0}$.  Let
  $\epsilon_* > 0$ as in the statement and define
  $ V_{\epsilon_*}^i = \widehat{V_0}(\theta_i) +
  \frac{1}{\epsilon_*}\sum_{j\in J_{+}^i} \widehat{V_j}(\theta_i)$.
  Equivalently, we write
  \begin{align*}
    V_{\epsilon_*}^{i+1}
    & = 
      \widehat{V_0}(\theta_{i+1})-V_0(\theta_{i+1}) +
      \frac{1}{\epsilon_*} \!\! \sum_{j\in J_{+}^{i+1}} \!\! \big(
      \widehat{V_j}(\theta_{i+1}) - V_j(\theta_{i+1}) \big)
    \\ 
    & + V_0(\theta_{i+1}) + \frac{1}{\epsilon_*} \sum_{j\in
      J_{+}^{i+1}} V_j(\theta_{i+1})
  \end{align*}
  Using~\eqref{eq:V0-thetaip1-inequality}, we have
  \begin{align}\label{eq:Vepsilon-ineq-1}
    \notag
    &V_{\epsilon_*}^{i+1} \leq \widehat{V_0}(\theta_{i+1}) - V_0(\theta_{i+1}) \! + \!
      \frac{1}{\epsilon_*}\sum_{j\in J_{+}^{i+1}}  \big(
      \widehat{V_j}(\theta_{i+1})-V_j(\theta_{i+1}) \big)
    \\
    \notag
    &\qquad + V_0(\theta_{i}) + \frac{1}{\epsilon_*}\sum_{j\in
      J_{+}^{i+1}} V_j(\theta_{i}) 
    \\
    \notag
    &\qquad + (\nabla V_0(\theta_i) -\widehat{\nabla V_0}(\theta_i)
      )^\top (\theta_{i+1}-\theta_i) 
    \\
    \notag
    &\qquad + \frac{1}{\epsilon_*} \sum_{j\in J_{+}^{i+1}} (\nabla
      V_j(\theta_i)-\widehat{\nabla V_j}(\theta_i) )^\top
      (\theta_{i+1}-\theta_i)
    \\
    \notag
    &\qquad + \widehat{\nabla V_0}(\theta_i)^\top
      (\theta_{i+1}-\theta_i) + \frac{1}{\epsilon_*} \sum_{j\in
      J_{+}^{i+1}} \widehat{\nabla V_j}(\theta_i)^\top
      (\theta_{i+1}-\theta_i)
    \\
    &\qquad + \bigg( \frac{L_0}{2} + \frac{\sum_{j=1}^q L_j }{2
      \epsilon_*} \bigg) 
      \hat{\ell} h_i^2 .
  \end{align}
  Equivalently,
  \begin{align}\label{eq:Vepsilon-ineq-2}
    \notag
    &V_{\epsilon_*}^{i+1} \leq \widehat{V_0}(\theta_{i+1}) - V_0(\theta_{i+1}) \! + \!
      \frac{1}{\epsilon_*} \! \sum_{j\in J_{+}^{i+1}}  \big(
      \widehat{V_j}(\theta_{i+1})-V_j(\theta_{i+1}) \big)
    \\
    \notag
    &\qquad + V_0(\theta_{i})-\widehat{V_0}(\theta_i) +
      \frac{1}{\epsilon_*}\sum_{j\in J_{+}^{i+1}} \! (V_j(\theta_{i}) -
      \widehat{V_j}(\theta_i) ) 
    \\
    \notag
    &\qquad + \widehat{V_0}(\theta_i) + \frac{1}{\epsilon_*}\sum_{j\in
      J_{+}^{i+1} \cap J_{+}^i } \widehat{V_j}(\theta_i) +
      \frac{1}{\epsilon_*}\sum_{j\in J_{+}^{i+1} \backslash J_{+}^i }
      \widehat{V_j}(\theta_i)
    \\ 
    \notag
    &\qquad + (\nabla V_0(\theta_i) -\widehat{\nabla V_0}(\theta_i)
      )^\top (\theta_{i+1}-\theta_i) 
    \\
    \notag
    &\qquad + \frac{1}{\epsilon_*} \sum_{j\in J_{+}^{i+1} } (\nabla
      V_j(\theta_i)-\widehat{\nabla V_j}(\theta_i) )^\top
      (\theta_{i+1}-\theta_i)
    \\
    \notag
    &\qquad + \widehat{\nabla V_0}(\theta_i)^\top
      (\theta_{i+1}-\theta_i) + \frac{1}{\epsilon_*} \sum_{j\in
      J_{+}^{i+1}} \widehat{\nabla V_j}(\theta_i)^\top
      (\theta_{i+1}-\theta_i)
    \\
    &\qquad + \bigg( \frac{L_0}{2} + \frac{\sum_{j=1}^q L_j }{2
      \epsilon_*} \bigg) 
      \hat{\ell} h_i^2 .
  \end{align}
  Using an argument analogous to the one in the proof of
  Proposition~\ref{prop:convergence}\ref{it:convergence-C} to obtain
  equation~\eqref{eq:dV0-dt}, but now with the estimates and the
  definition~\eqref{eq:Rhat-definition} of the approximated RSGF, one
  can derive, for all $i\in\mathbb{Z}_{>0}$,
  \begin{align}\label{eq:nablaV0-R}
    \notag
    &\widehat{\nabla V}_0(\theta_i)^\top \hat{\Rc}_{\alpha,\beta}(\theta_i)
      = -\Big( 1+
      \frac{\beta(\theta_i)}{2} \sum_{j=1}^q
      \hat{u}_i(\theta_i) \Big) 
      \norm{\hat{\Rc}_{\alpha,\beta}(\theta_i)}^2
    \\
    &+ \sum_{j=1}^q \alpha \hat{u}_j(\theta_i)\widehat{V_j}(\theta_i)
      \leq -\norm{\hat{\Rc}_{\alpha,\beta}(\theta_i)}^2 + \sum_{j=1}^q
      \alpha \hat{u}_j(\theta_i)\widehat{V_j}(\theta_i), 
  \end{align}
  where $\hat{u}_j(\theta)$ denotes the Lagrange multiplier associated
  to constraint $j$ in~\eqref{eq:Rhat-definition}.
  Furthermore, from the constraints in~\eqref{eq:Rhat-definition},
  \begin{align}\label{eq:nablaVj-R}
    \widehat{\nabla V}_j(\theta_i)^\top \hat{\Rc}_{\alpha,\beta}(\theta_i)
    &\leq -\alpha \widehat{V_j}(\theta_i),
  \end{align}
  for all $j\in[q]$.
  Substituting~\eqref{eq:nablaV0-R} and~\eqref{eq:nablaVj-R}
  into~\eqref{eq:Vepsilon-ineq-2}, we get
  \begin{align}\label{eq:Vepsilon-ineq-3}
    \notag
    &V_{\epsilon_*}^{i+1} \! \leq \! \widehat{V_0}(\theta_{i+1}) \! - \!
      V_0(\theta_{i+1}) \! + \! 
      \frac{1}{\epsilon_*} \! \! \sum_{j\in J_{+}^{i+1}}  \big(
      \widehat{V_j}(\theta_{i+1})-V_j(\theta_{i+1}) \big)
    \\
    \notag
    &\qquad + V_0(\theta_{i})-\widehat{V_0}(\theta_i) +
      \frac{1}{\epsilon_*} \sum_{j\in J_{+}^{i+1}} (V_j(\theta_{i}) -
      \widehat{V_j}(\theta_i) ) 
    \\
    \notag
    &\qquad + \widehat{V_0}(\theta_i) + \frac{1}{\epsilon_*}\sum_{j\in
      J_{+}^{i+1} \cap J_{+}^i } \widehat{V_j}(\theta_i) +
      \frac{1}{\epsilon_*}\sum_{j\in J_{+}^{i+1} \backslash J_{+}^i }
      \widehat{V_j}(\theta_i)
    \\ 
    \notag
    &\qquad + (\nabla V_0(\theta_i) -\widehat{\nabla V_0}(\theta_i)
      )^\top (\theta_{i+1}-\theta_i) 
    \\
    \notag
    &\qquad + \frac{1}{\epsilon_*} 
      \sum_{j\in J_{+}^{i+1} } (\nabla
      V_j(\theta_i)-\widehat{\nabla V_j}(\theta_i) )^\top
      (\theta_{i+1}-\theta_i)
    \\
    \notag
    &\qquad -h_i \norm{\hat{\Rc}_{\alpha,\beta}(\theta_i)}^2 +
      \sum_{j=1}^q \alpha h_i
      \hat{u}_j(\theta_i)\widehat{V_j}(\theta_i)
    \\ 
    &\qquad -  \!  \frac{1}{\epsilon_*} \! \! \sum_{j\in
      J_{+}^{i+1} } \alpha h_i 
      \widehat{V_j}(\theta_i) + \bigg( \frac{L_0}{2} +
      \frac{\sum_{j=1}^q L_j }{2 
      \epsilon_*} \bigg) 
      \hat{\ell} h_i^2 .
  \end{align}
  Now note that by an argument analogous to that
  of~\cite[Lemma D.1]{AA-JC:24-tac}, 
  there exists $B_L > 0$ as in the statement.
  Since the iterates $\{ \theta_i \}_{i\in\mathbb{Z}_{>0}}$
  remain bounded in $\Theta$, we have 
  $\hat{u}_j(\theta_i) \leq B_L$ for all $j\in[q]$ and $i\in\mathbb{Z}_{>0}$.
  Since $\widehat{V_j}(\theta_i) \leq 0$ for $j\notin J_{+}^i$,
  $\sum_{j=1}^q \alpha h_i \hat{u}_j(\theta_i) \widehat{V_j}(\theta_i)
  \leq \sum_{j\in J_{+}^i} \alpha h_i \hat{u}_j(\theta_i)
  \widehat{V_j}(\theta_i)$ and using $\epsilon_* < \frac{1}{B_L}$
  it follows that
  \begin{align}\label{eq:epsilon-star-bound-2}
    \notag
    &\sum_{j=1}^q \alpha h_i \hat{u}_j(\theta_i)\widehat{V_j}(\theta_i) -
    \sum_{j\in J_{+}^{i+1}} \frac{\alpha h_i}{\epsilon_*}
      \widehat{V_j}(\theta_i) \leq
    \\
    \notag
    &\sum_{ j\in J_{+}^i } \alpha h_i \hat{u}_j(\theta_i)\widehat{V_j}(\theta_i) -
    \sum_{j\in J_{+}^{i+1}} \frac{\alpha h_i}{\epsilon_*}
      \widehat{V_j}(\theta_i) \leq
    \\
    \notag
    &\sum_{j\in J_{+}^i \backslash J_{+}^{i+1} } \alpha h_i
      \hat{u}_j(\theta_i) \widehat{V_j}(\theta_i) - \sum_{j\in
      J_{+}^{i+1}\backslash J_{+}^i } \frac{\alpha h_i
      }{\epsilon_*}\widehat{V_j}(\theta_i) \leq
    \\
    &\sum_{j\in J_{+}^i \backslash J_{+}^{i+1} } \frac{\alpha
      h_i}{\epsilon_*} \widehat{V_j}(\theta_i) - \sum_{j\in
      J_{+}^{i+1}\backslash J_{+}^i } \frac{\alpha h_i }{\epsilon_*}
      \widehat{V_j}(\theta_i). 
  \end{align}
  Using the fact that
  $\sum_{j\in J_{+}^i \backslash J_{+}^{i+1} } \frac{
    \widehat{V_j}(\theta_i) }{\epsilon_*} + \sum_{j\in J_{+}^i \cap
    J_{+}^{i+1} } \frac{ \widehat{V_j}(\theta_i) }{\epsilon_*} =
  \sum_{j\in J_{+}^i } \frac{ \widehat{V_j}(\theta_i) }{\epsilon_*}$
  along with~\eqref{eq:epsilon-star-bound-2} and $\alpha h_i < 1$, we
  get
  \begin{align}\label{eq:Vepsilon-ineq-4}
    \notag
    &V_{\epsilon_*}^{i+1} \leq \widehat{V_0}(\theta_{i+1}) \! - \!
      V_0(\theta_{i+1}) + 
      \frac{1}{\epsilon_*} \! \! \sum_{j\in J_{+}^{i+1}}  \big(
      \widehat{V_j}(\theta_{i+1})-V_j(\theta_{i+1}) \big)
    \\
    \notag
    &\qquad + V_0(\theta_{i})-\widehat{V_0}(\theta_i) +
      \frac{1}{\epsilon_*}\sum_{j\in J_{+}^{i+1}} (V_j(\theta_{i}) -
      \widehat{V_j}(\theta_i) ) 
    \\
    \notag
    &\qquad + \widehat{V_0}(\theta_i) + \frac{1}{\epsilon_*}\sum_{j\in
      J_{+}^{i} } \widehat{V_j}(\theta_i)
    \\
    \notag
    &\qquad + (\nabla
      V_0(\theta_i)-\widehat{\nabla V_0}(\theta_i) )^\top
      (\theta_{i+1}-\theta_i) +
    \\
    \notag
    &\qquad \frac{1}{\epsilon_*} \sum_{j\in J_{+}^{i+1} } (\nabla
      V_j(\theta_i)-\widehat{\nabla V_j}(\theta_i) )^\top
      (\theta_{i+1}-\theta_i)
    \\
    &\qquad -h_i \norm{\hat{\Rc}_{\alpha,\beta}(\theta_i)}^2 + \bigg(
      \frac{L_0}{2} + \frac{\sum_{j=1}^q L_j }{2 
      \epsilon_*} \bigg) 
      \hat{\ell} h_i^2.
  \end{align}
  Taking expectations on both sides of~\eqref{eq:Vepsilon-ineq-4} with
  respect to the $\sigma$-algebra generated by
  $( \{ \theta_j \}_{j\in[\Ite]}, \Ic_0, \{ \Jc_j \}_{j\in[\Ite-1]}
  )$, we get
  %
  %
  %

  %
  %
  %
  %
  \begin{align}\label{eq:Vepsilon-ineq-expectation}
    \notag
    &\mathbb{E}[ V_{\epsilon_*}^{i+1} ] \leq
      \mathbb{E}[ V_{\epsilon_*}^{i} ] - h_i \mathbb{E}[
      \norm{\hat{\Rc}_{\alpha,\beta}(\theta_i)}^2 ]
    \\ 
    \notag
    &\qquad \qquad + \mathbb{E}[ (\nabla V_0(\theta_i) - \widehat{\nabla
      V_0}(\theta_i) )^\top
      (\theta_{i+1}-\theta_i) ]
    \\
    \notag
    &\qquad \qquad + \mathbb{E}[ \frac{1}{\epsilon_*}\sum_{j\in J_{+}^{i+1} }
      ( \nabla V_j(\theta_i) - \widehat{\nabla V_j}(\theta_i) )^\top
      (\theta_{i+1}-\theta_i) ]
    \\
    &\qquad \qquad + \Big( \frac{L_0}{2} + \frac{\sum_{j=1}^q L_j }{2
      \epsilon_*} \Big) \hat{\ell} h_i^2 .
  \end{align}
  Let $V_*$ be such that $V_{\epsilon_*}^{i} \geq
  V_*$ for all
  $i\in\mathbb{Z}_{>0}$ (note that such value exists because the value
  function estimates are uniformly bounded as shown in
  Proposition~\ref{prop:value-function-estimates}).
  Define $U_i= V_{\epsilon_*}^i-V_*$ and $L_* = \frac{L_0}{2} +
  \frac{\sum_{j=1}^q L_j }{2
    \epsilon_*}$.  Summing~\eqref{eq:Vepsilon-ineq-expectation} for
  $i\in\{ 0 \}\cup[\Ite]$,
  \begin{align}\label{eq:summation-0}
    \notag
    &\sum_{i=0}^{\Ite} \mathbb{E}\bigl[
      \norm{\hat{\Rc}_{\alpha,\beta}(\theta_i)}^2
      \bigr] \leq
      \sum_{i=0}^{\Ite} \Big( \frac{ \mathbb{E}[U_i] }{h_i} -
      \frac{\mathbb{E}[U_{i+1} ]}{h_i} \Big)
    \\
    \notag
    &\quad + \sum_{i=0}^{\Ite} L_* \hat{\ell} h_i +
      \sum_{i=0}^{\Ite} \frac{ 
      \mathbb{E}[ (\theta_{i+1}-\theta_i)^\top ( \nabla
      V_0(\theta_i)-\widehat{\nabla V_0}(\theta_i) ) ] }{h_i}
    \\
    &\quad + \sum_{i=0}^{\Ite} \sum_{j\in J_{+}^{i+1} }
      \frac{\mathbb{E}[ (\nabla
      V_j(\theta_i) - \widehat{\nabla V_j}(\theta_i) )^\top
      (\theta_{i+1}-\theta_i) ]}{ \epsilon_* h_i }.  
  \end{align}
  %
  %
  Note that 
  \begin{multline*}
    \sum_{i=0}^{\Ite} \Big( \frac{ \mathbb{E}[U_i] }{h_i} -
    \frac{\mathbb{E}[U_{i+1} ]}{h_i} \Big) =
    \\
    \frac{\mathbb{E}[U_0]}{h_0} +  \sum_{i=1}^{\Ite}
    \Big(\frac{1}{h_i}-\frac{1}{h_{i-1}} \Big) \mathbb{E}[ U_i ] -
    \frac{\mathbb{E}[U_{\Ite+1}] }{ h_{\Ite} }. 
  \end{multline*}
  Since $\{ U_i
  \}_{i\in\mathbb{Z}_{>0}}$ is positive and uniformly upper bounded
  (cf. Proposition~\ref{prop:value-function-estimates}), by letting
  $B_u$ be such that $|U_i| \leq B_u$ for all
  $i\in\mathbb{Z}_{>0}$, and noting that $\frac{1}{h_i} \ge
  \frac{1}{h_{i-1}}$ for all $i\in[\Ite]$,
  \begin{align*}
    &\sum_{i=0}^{\Ite} \Big( \frac{ \mathbb{E}[U_i] }{h_i} -
      \frac{\mathbb{E}[U_{i+1} ]}{h_i} \Big) \leq  
      \frac{B_u}{h_0} +  \sum_{i=1}^{\Ite}
      \Big(\frac{1}{h_i}-\frac{1}{h_{i-1}} \Big) B_u \! = \!
      \frac{B_u}{h_{\Ite}}. 
  \end{align*}
  On the other hand, by the Cauchy-Schwartz inequality,
  \begin{multline*}
    \mathbb{E}[ (\theta_{i+1}-\theta_i)^\top ( \nabla
    V_j(\theta_i)-\widehat{\nabla V_j}(\theta_i) ) ] \leq
    \\
    \sqrt{ \mathbb{E}[ \norm{\theta_{i+1}-\theta_i}^2 ] }
    \sqrt{ \mathbb{E}[ \norm{ \nabla V_j(\theta_i)-\widehat{\nabla
          V_j}(\theta_i) }^2 ] }, 
  \end{multline*}
  for all $j\in\{ 0 \}\cup[q]$.  Moreover, since
  $\norm{ \hat{\Rc}_{\alpha,\beta}(\theta_i) } \leq \hat{\ell}$ for
  all $i\in\mathbb{Z}_{>0}$, and since
  $\max\limits_{i\in[K_{\epsilon}]}\Var(\widehat{\nabla
    V_j}(\theta_i)) \leq \bar{\sigma}$ for all $j\in\{ 0 \} \cup [q]$,
  we have from~\eqref{eq:summation-0} that
  \begin{align*}
    \frac{1}{\Ite}\sum_{i=0}^{\Ite} \mathbb{E}\bigl[
    \norm{\hat{\Rc}_{\alpha,\beta}(\theta_i)}^2
    \bigr]
    &\leq \frac{B_u}{\Ite h_{\Ite} } +
      \frac{\sum_{i=1}^{\Ite} L_* 
      \hat{\ell} h_i}{ \Ite }
    \\
    &\quad + \hat{\ell} \bar{\sigma} \big(\frac{q}{\epsilon_*}+1 \big)
      \frac{\Ite + 1}{\Ite}.  
  \end{align*}
  %
  %
  Using the fact that $\sum_{i=1}^{ \Ite } i^{-a} \leq {\Ite}^{1-a} -
  1$ (cf.~\cite[page 31]{KZ-AK-HZ-TB:20-arxiv})
  %
  %
  for any $a\in(0,1)$, substituting $h_i =
  \frac{1}{\alpha\sqrt{i}}$, 
  and using $\frac{\Ite+1}{\Ite} \leq \frac{3}{2}$ (because 
  $\norm{\hat{\Rc}_{\alpha,\beta}(\theta_0)} > \epsilon$),
  we obtain
  \begin{align}\label{eq:average-upper-bound}
    \notag
    \frac{1}{\Ite}\sum_{i=0}^{\Ite} \mathbb{E}\bigl[
    \norm{\hat{\Rc}_{\alpha,\beta}(\theta_i)}^2 \bigr] \!
    &\leq \! 
      \frac{B_u \alpha }{ \sqrt{\Ite} } +
      \frac{L_*\hat{\ell}}{\alpha}(\frac{1}{\sqrt{\Ite}}-\frac{1}{\Ite})
    \\
    & \quad + \frac{3}{2}\hat{\ell}\bar{\sigma}\big(\frac{q}{\epsilon_*}+1\big).
  \end{align}
  By definition of $\Ite$, $ \mathbb{E} \bigl[
  \norm{\hat{\Rc}_{\alpha,\beta}(\theta_i)}^2 \bigr] >
  \epsilon$, for all
  $i\in\{0\}\cup[\Ite-1]$, and therefore
  from~\eqref{eq:average-upper-bound} by taking $\kappa = B_u \alpha +
  \frac{L_* \hat{\ell} }{\alpha}$,
  \begin{align*}
    \epsilon \leq \frac{1}{\Ite}\sum_{i=0}^{\Ite-1}
    \mathbb{E}\bigl[ \norm{\hat{\Rc}_{\alpha,\beta}(\theta_i)}^2
    \bigr]  
    \leq \frac{\kappa}{\sqrt{\Ite}} +
    \frac{3}{2}\hat{\ell}\bar{\sigma}\big(\frac{q}{\epsilon_*}+1\big),
  \end{align*}
  from where the result follows.
\end{proof}


%
%


We note the result in Theorem~\ref{thm:finite-iteration-convergence}
ensures the existence of $j\in[\Ite]$ such that
$\mathbb{E}[\norm{ \hat{\Rc}_{\alpha,\beta}(\theta_j) }^2] \leq
\epsilon$, but does not imply that the convergence in expectation of
the norm of $\hat{\Rc}_{\alpha,\beta}$ is monotonic.  This is akin to
the convergence results obtained for policy gradient methods
(cf.~\cite[Theorem 4.3]{KZ-AK-HZ-TB:20}).  We also point out that the
iteration number $\Ite$ is defined in terms of
$\hat{\Rc}_{\alpha,\beta}$, instead of $\Rc_{\alpha,\beta}$.  As
justified in
Remark~\ref{rem:asymptotically-vanishing-noise-assumption}, by using a
sufficiently large number of episodes when estimating the value
functions and their gradients, $\hat{\Rc}_{\alpha,\beta}$ and
$\Rc_{\alpha,\beta}$ can be made arbitrarily close at any point with
high probability.
  %
  %
This means that if the estimates of all policies obtained for
$i\in[\Ite]$ are computed with a sufficiently large number of
episodes, Theorem~\ref{thm:finite-iteration-convergence} provides a
bound for the number of iterations needed to reach a KKT point with
high probability.

\begin{remark}\longthmtitle{Assumptions in 
    Theorem~\ref{thm:finite-iteration-convergence}}\label{rem:restrictiveness-hypothesis-Theorem-finite-iteration-convergence}
  {\rm The argument in the proof of
    Theorem~\ref{thm:finite-iteration-convergence} is valid for any
    sequence $h_i = \frac{i^{-a}}{\alpha}$ for $a\in(0,1)$, but by
    following an argument similar to that of~\cite[Theorem
    4.3]{KZ-AK-HZ-TB:20}, the optimal rate is $a=1/2$, which is the
    one adopted in the statement.  Moreover,
    Proposition~\ref{prop:gradient-value-function-estimates} provides
    a way to compute the number of episodes necessary to ensure that
    the condition
    $\Var(\widehat{\nabla V_j}(\theta_i)^{(l)}) \leq \bar{\sigma}$ is
    satisfied for all $j\in\{ 0 \}\cup[q]$, $i\in[\Ite]$ and
    $l\in[d]$.  Finally, since $\Theta$ is compact, if
    $\Rc_{\alpha,\beta}$ is locally Lipschitz on $\Theta$ (e.g., if
    for all $\theta\in\Theta$, Slater's condition holds
    for~\eqref{eq:R-definition} and CRC holds
    for~\eqref{eq:R-definition} at
    $(\theta,\Rc_{\alpha,\beta}(\theta))$, cf.
    Lemma~\ref{lem:feas-Lipschitzness-G}), then $\Rc_{\alpha,\beta}$
    is bounded in $\Theta$.  Hence, $\hat{\ell}$ exists provided that
    the value function and gradient estimates are taken so that
    $\norm{\Rc_{\alpha,\beta}-\hat{\Rc}_{\alpha,\beta}}$ is bounded.
    This holds, for example, under the asymptotically vanishing noise
    assumption discussed in
    Remark~\ref{rem:asymptotically-vanishing-noise-assumption}.
    \demo}
\end{remark}

\section{Simulations}
%

Here we test RSGF-RL in two scenarios: a robot solving a navigation
task in a 2D environment and a cart-pole system seeking to keep the
pole upright by moving the cart. We compare its performance against
other approaches\footnote{The interested reader can find
  in~\cite{PM-AM-JC:25-l4dc} a comparative analysis of on-policy
  RSGF-RL with primal-dual
  approaches~\cite{SP-MCF-LFOC-AR:23,DD-KZ-TB-MJ:20}.}  that also seek
to solve constrained Markov decision processes in an anytime fashion:
constrained policy optimization (CPO)
algorithm~\cite{JS-FW-PD-AR-OK:17} and on-policy
RSGF-RL~\cite{PM-AM-JC:25-l4dc}.

\textbf{Navigation 2D}: We test the algorithm in a 2D environment,
where a robot with single-integrator dynamics navigates to a target
point while avoiding unknown obstacles
(cf. Figure~\ref{fig:navigation2d-policy-evolution}). The obstacles'
location approximates a simplified real-world floorplan.
The state space is given by $s = (x,y) \in [0,10]^2$, representing the
position of the agent, and the action space is continuous, with
$a\in[-5,5]^2$ representing the velocity in the $x$ and $y$
directions. The target point is $s^*=(8.5,8)$.  The reward is
$R_0(s,a) = -\|s-s^*\|$ and the constraint reward is
\begin{align}\label{eq:constraint-reward}
  R_1(s,a) = 
  \begin{cases}
    \varepsilon (e^{d(s)}-1), & \text{if } s \in \Cc
    \\
    1-\varepsilon, & \text{otherwise}
  \end{cases}
\end{align}
where $\varepsilon = 0.01$, $d(s)$ is the distance between $s$ and the
closest obstacle border, and the safe set $\Cc$ is the obstacle-free
region inside $[0,10]^2$. We use the family of Gaussian policies
$\pi_{\theta}(a|s) \sim \mathcal{N}(\mu_{\theta}(s),\Sigma)$, where
$\Sigma = 0.5\textbf{I}_2$ and the mean function is defined by radial
basis functions (RBF) kernels,
\begin{equation}
  \mu_\theta(s) = \sum_{i=1}^{N_c} \tanh(\theta_i) 
  \exp\left(-\frac{\|s - c_i\|^2}{2\sigma^2}\right)
\end{equation}
Here, $\tanh$ is applied element-wise, $\{c_i\}_{i=1}^{N_c}$ are the
RBF centers, and $\{\theta_i\}_{i=1}^{N_c} \subset \real^2$ are the
training parameters.  We choose the centers to be evenly spaced points
over the state space. To make a fair comparison, all algorithms
collect the same amount of episodes per iteration and perform the same
number of iterations. Table~\ref{tab:hyperparameters_comparison}
summarizes the training setup and the hyperparameters. For RSGF-RL,
the estimators are constructed using data from both the current and
the immediately preceding policies.  To mitigate the high
variance of the estimators during the training process, we clip the
values of the importance sampling weights between $0.8$ and $1.2$.

\begin{figure*}[htb!]
  \centering
  \includegraphics[width=1\linewidth]{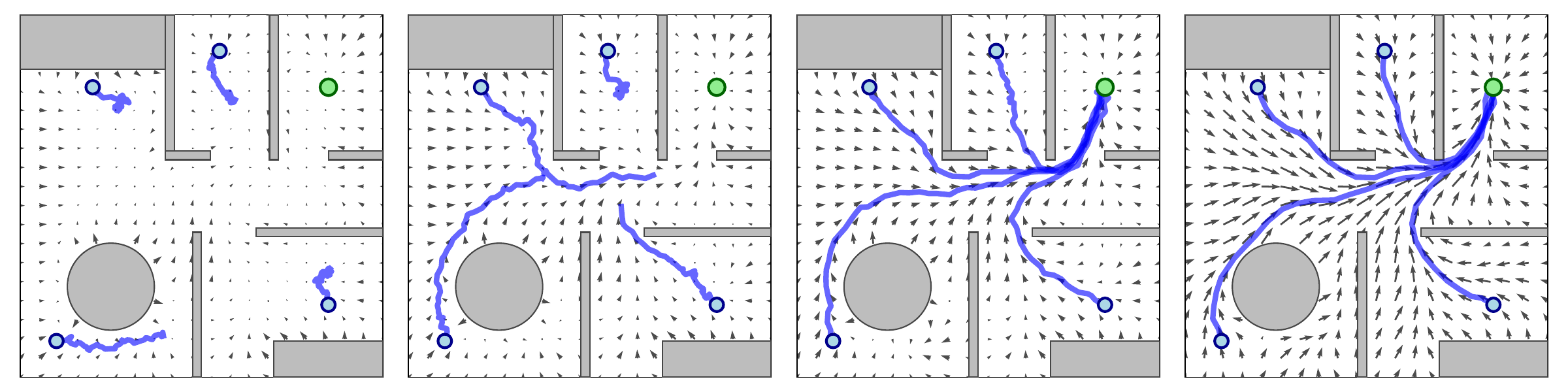}
  \caption{Policy evolution under RSGF-RL in the Navigation 2D
    example.  Obstacles are depicted in gray.  Target point in green
    and different robot initial conditions in light blue.  Initial
    policy on the left, final policy on the right, with intermediate
    policies obtained during the algorithm evolution in the middle. }
  \label{fig:navigation2d-policy-evolution}
\end{figure*}

Figure~\ref{fig:navigation2d-policy-evolution} shows the policy
evolution under the RSGF-RL algorithm starting from an initial safe
policy. The anytime nature of the algorithm is reflected in the fact
that intermediate iterations remain safe.
Figure~\ref{fig:navigation2d-results} shows the evolution of the
performance ($V_0$) and safety ($V_1$) metrics for the different
strategies.
One can see that both on- and off-policy RSGF-RL outperform CPO, while
remaining safe during the whole training procedure or recovering from
an initial unsafe state. Interestingly, off-policy RSGF-RL without
clipping the importance sampling weights performs similarly to
on-policy RSGF-RL, while RSGF-RL with clipping significantly
outperforms both in terms of sampling efficiency, converging with
fewer iterations. This suggests off-policy data can improve the
training process but introduces a high variance on the estimators that
needs to be compensated by variance reduction techniques.

\begin{figure}[htb!]
  \centering
  \includegraphics[width=1\linewidth]{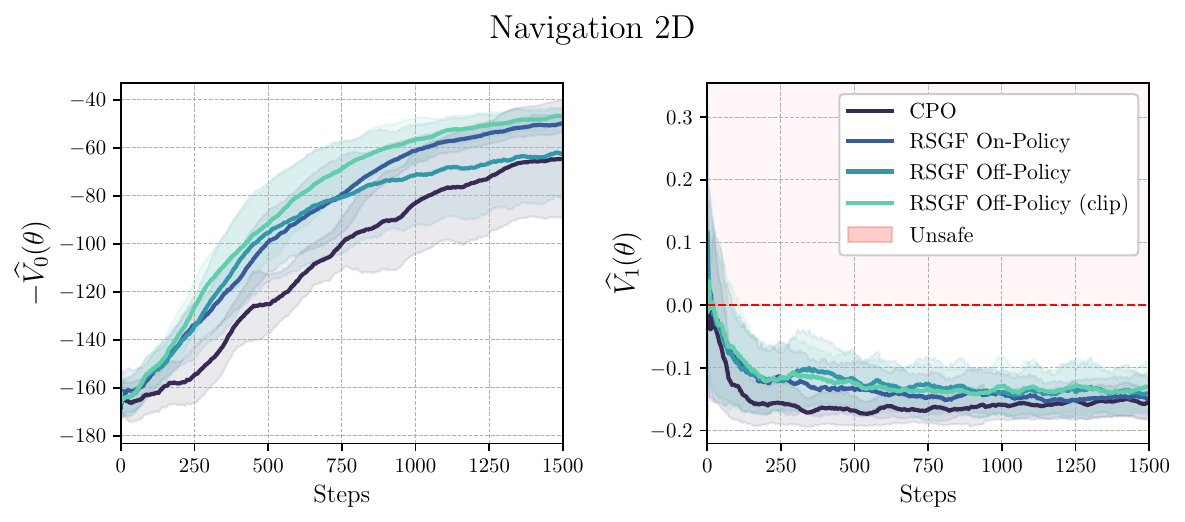}
  \caption{Comparison between CPO and different RSGF-RL training
    strategies in the Navigation 2D environment. Left plot shows the
    average $V_0( \theta)$ as a performance metric, while the right
    plot shows the average $V_1(\theta)$ as a safety metric. Averages
    are computed over 5 seeds and the shaded area represents the
    standard deviation.}
  \label{fig:navigation2d-results}
    \vspace*{-2ex}
\end{figure}

\begin{table}[htb!]
\centering
\caption{Hyperparameters for the simulation environments}
\begin{tabular}{l c c}
\toprule
\textbf{Parameter} & \textbf{Navigation 2D} & \textbf{Cart-pole}\\
\midrule
\textbf{Environment} & & \\
Time horizon ($T$) & 50 & 200\\
Discount factor ($\gamma$) & 0.98 & 0.995\\
Reward ($R_0$) & $-||s-s^*||$ & 1\\
Constraint reward ($R_1$) & Eq.~\eqref{eq:constraint-reward} & Eq.~\eqref{eq:constraint-reward}\\
$\varepsilon$ & 0.01 & 0.1\\

  \textbf{Policy} & & \\
  Type & Gaussian & Gaussian\\
  Centers evenly spaced over & $[0,10]^2$
                                            &
                                              \parbox[t]{2.5cm}{
                                              $[-3,3]\times[-\frac{\pi}{4},\frac{\pi}{4}]\times[-1,1]\times[-1.5,1.5]$}
  \\
  Number of centers ($N_c$) & 400 & 1000\\
  RBF centers variance & 0.5 & 0.5\\
  Policy variance ($\Sigma$) & $0.5\textbf{I}_2$ & $0.5\textbf{I}_4$\\

\textbf{Common elements} & & \\
Number of iterations ($k$) & 1500 & 300\\
Episodes per iteration ($N_i$) & 100 & 30\\
$V_0$ baseline ($b_0(s)$) & 0 & neural network\\
$V_1$ baseline ($b_1(s)$) & 0 & neural network\\

\textbf{CPO} & & \\
$\delta$ & 0.15 & $4\times10^{-4}$\\

\textbf{On-policy RSGF-RL} & & \\
Step size ($h$) & 0.1 & $10^{-3}$\\
$\alpha$ & 9 & 0.1\\

\textbf{Off-policy RSGF-RL} & & \\
  Step size ($h$)
                   & 0.1 &
   $\min\{10^{-3},\frac{0.02}{\|\hat{\mathcal{R}}_{\alpha,\beta}\|}\}$
  \\  
Episodes available ($|\mathcal{J}_i|$) & 200 & 15\\
Updates per iteration & 1 & 2\\
$\alpha$ & 9 & 0.1\\
$\beta$ (constant) & 1 & 1\\

\bottomrule
\end{tabular}
\label{tab:hyperparameters_comparison}
\end{table}

\textbf{Cart-pole}: We also evaluate RSGF-RL on the Gymnasium
\emph{Inverted Pendulum-v4} environment~\cite{MT-OGY:24}, where the
objective is to learn a policy that keeps a pole upright by applying
forces to a cart while avoiding hitting a wall.
The state is $s = (x, \theta, \dot{x}, \dot{\theta}) \in \real^4$,
where $x$ is the cart position, $\theta$ is the pole angle (relative
to vertical), $\dot{x}$ is the cart velocity, and $\dot{\theta}$ is
the pole angular velocity. The action space is continuous, with
$a \in [-3, 3]$ representing the force applied to the cart. The reward
is $R_0(s,a) = 1$, encouraging the pole to remain upright as long as
possible. The wall is at $x=0.5$ and hence the safe set is
$\Cc = \{ s = [x,\theta,\dot{x},\dot\theta]\in\real^4 : x < 0.5 \}$.




We define the constraint reward as in~\eqref{eq:constraint-reward},
with $\varepsilon = 0.1$ and $d(s) = [1,0,0,0]^\top s - 0.5$ and use
the same Gaussian policy with centers uniformly distributed over the
state space.  Table~\ref{tab:hyperparameters_comparison} gathers the
training details. To showcase the flexibility of the off-policy
approach, we update the policy twice per iteration using two
minibatches, instead of relying on previous trajectories (i.e., we run
steps 5-8 twice in Algorithm~\ref{alg:rl-rsgf}).

Figure~\ref{fig:InvertedPendulum-results} shows the evolution of the
performance ($V_0$) and safety ($V_1$) metrics for the different
algorithms. All approaches maintain the safety constraint below 0,
remaining safe during training. Both on-policy and off-policy RSGF-RL
outperform CPO, and RSGF-RL with clipping of the importance sampling
weights significantly outperforms all the other approaches.

\begin{figure}[htb!]
  \centering
  \includegraphics[width=1\linewidth]{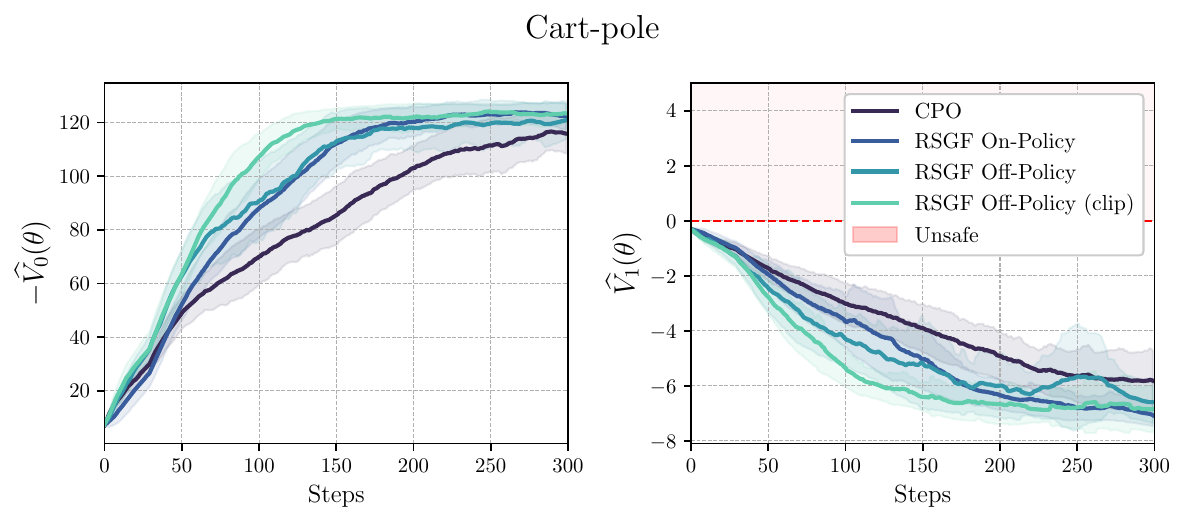}
  \caption{Comparison between CPO and different RSGF-RL training
    strategies in the Cart-pole environment. Left plot shows the
    average $V_0( \theta)$ as a performance metric, while the right
    plot shows the average $V_1(\theta)$ as a safety metric. The max
    reward is 126.61. Averages are computed over 5 seeds and the
    shaded area represents the standard deviation.}
  \label{fig:InvertedPendulum-results}
  \vspace*{-2ex}
\end{figure}


\section{Conclusions}\label{se:conclusions}

We have introduced the Robust Safe Gradient Flow-based Reinforcement
Learning (RSGF-RL) algorithm for constrained reinforcement learning
with anytime safety guarantees.  RSGF-RL's design is based on the
Robust Safe Gradient Flow, a continuous-time algorithm for anytime
constrained optimization whose forward invariance and asymptotic
stability properties we have also characterized.  At every iteration,
RSGF-RL uses off-policy episodic data to construct estimates of the
value functions defining the constrained RL problem, as well as of
their gradients.  We have rigorously characterized the statistical
properties of such estimates. Building on this, we have determined the
number of episodes needed to ensure, with a user-specified
reliability, that safe policies remain safe at the next iteration or,
alternatively, that the algorithm returns to safety from an unsafe
policy. Leveraging the theory of stochastic approximation, we have
also shown that RSGF-RL converges to a KKT point almost surely, and we
have provided a bound on the number of iterations required for
convergence. Simulations have compared the performance of RSGF-RL with
the state of the art.  Future work will focus on extensions to other
safety constraints commonly used in safe RL, such as probabilistic or
conditional value-at-risk. We also plan to explore schemes that
adaptively tune algorithm parameters
for improved convergence, safety, and memory allocation requirements.
Finally, we will extend the framework to actor-critic methods and
perform tests in physical hardware.


\bibliography{../bib/alias,../bib/JC,../bib/Main-add,../bib/Main}
\bibliographystyle{IEEEtran}

\appendix

\section{Lipschitz constants}\label{sec:appendix-Lipschitz-constants}

\begin{lemma}\longthmtitle{Lipschitzness of gradient of value
    functions}\label{lem:lipschitzness-value-functions}
  Suppose Assumptions~\ref{as:boundedness-reward}
  and~\ref{as:differentiability-lipschitzness-policy} hold. Let
  $j\in[q]\cup\{ 0 \}$.  Then, $\nabla V_j$ is Lipschitz with constant
  \begin{align*}
    &B_j L \Big( \frac{1-\gamma^T}{1-\gamma} \Big)^2 + 
      2 B_j\tilde{B}^2 \gamma \frac{ 1-(T+1)\gamma^T + T\gamma^{T+1}
      }{ (1-\gamma)^2 } +
    \\
    &B_q \tilde{B}^2 \Big( \frac{1-\gamma^T}{1-\gamma} \Big)^2.
  \end{align*}
\end{lemma}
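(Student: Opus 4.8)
The plan is to prove global Lipschitzness of $\nabla V_j$ by bounding the operator norm of the Hessian $\nabla^2 V_j$ uniformly on $\real^d$ by the stated constant; the mean value inequality then yields the claim. (Only the $C^{1,1}$ regularity of $\chi_{a,s}$ is assumed in Assumption~\ref{as:differentiability-lipschitzness-policy}, so strictly speaking one either invokes the a.e.\ Hessian together with Rademacher's theorem, or---equivalently---carries out the same estimates directly on the difference $\nabla V_j(\theta)-\nabla V_j(\bar\theta)$; the Hessian route makes the origin of the three summands most transparent, and it is exact for the smooth policies used in the simulations.)

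First I would invoke the Policy Gradient Theorem~\cite[Section~13]{RSS-AGB:18} to write $\nabla V_j(\theta)=\E_{\tau\sim\pi_\theta}[\sum_{t}\gamma^t\nabla\chi_{a_t,s_t}(\theta)\,\rho_{j,t}]$, with discounted reward-to-go $\rho_{j,t}=\sum_{k\ge t}\gamma^{k-t}R_j(s_k,a_k,s_{k+1})$; the baseline $b$ may be dropped here since it does not alter the true gradient, and the factor $\gamma^t$ weighting each score is what will keep the constant horizon-uniform. Differentiating under the integral---legitimate by dominated convergence, as Assumptions~\ref{as:boundedness-reward} and~\ref{as:differentiability-lipschitzness-policy} bound the integrand and its $\theta$-derivative---and using $\nabla_\theta P_\theta(\tau)=P_\theta(\tau)\sum_r\nabla\chi_{a_r,s_r}(\theta)$, the Hessian separates into (I) the term where the explicit scores are differentiated, involving $\nabla^2\chi$, and (II) the term produced by the sensitivity of the trajectory measure, involving outer products $\nabla\chi\,\nabla\chi^\top$.

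Next I would bound the two pieces. Piece (I) is controlled by $\norm{\nabla^2\chi}\le L$ and $|R_j|\le B_j$, producing the first summand $B_jL(\frac{1-\gamma^T}{1-\gamma})^2$. Piece (II) is the crux: I would use the zero-conditional-mean property of the score, $\E[\nabla\chi_{a_r,s_r}\mid \mathcal{F}_{<r}]=0$ (scores form a martingale-difference sequence), together with its second-order companion $\E[\nabla^2\chi+\nabla\chi\,\nabla\chi^\top\mid\mathcal{F}_{<r}]=0$, to cancel all contributions of score and score-Hessian terms occurring strictly after each reward index. This cancellation collapses the full, undiscounted sum $\sum_r\nabla\chi_r$ into the partial score $\Phi_k=\sum_{t\le k}\nabla\chi_t$ paired with $\gamma^kR_j^k$, and---because the surviving objects are partial sums of a martingale-difference sequence---lets me bound $\E[\Phi_k\Phi_k^\top]$ by its diagonal $\sum_{t\le k}\E[\nabla\chi_t\nabla\chi_t^\top]$ rather than by its full $(k+1)^2$-term expansion. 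Collecting these with $\norm{\nabla\chi}\le\tilde B$ and summing over $k$ against $\gamma^k$ yields, via elementary geometric-series identities (the exact formula $\sum_{k=0}^T k\gamma^k=\gamma\frac{1-(T+1)\gamma^T+T\gamma^{T+1}}{(1-\gamma)^2}$ and a Cauchy-product bound condensing $\sum_k(k+1)\gamma^k$ into $(\frac{1-\gamma^T}{1-\gamma})^2$), the remaining summands $B_j\tilde B^2(\frac{1-\gamma^T}{1-\gamma})^2$ and $2B_j\tilde B^2\gamma\frac{1-(T+1)\gamma^T+T\gamma^{T+1}}{(1-\gamma)^2}$. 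Adding the three gives the asserted constant.

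The hard part is piece (II): a naive estimate keeps the full score $\sum_{r=0}^T\nabla\chi_r$ and the complete double sum in $\Phi_k\Phi_k^\top$, which forces the constant to grow polynomially in the horizon $T$. Obtaining instead a constant assembled solely from convergent geometric series hinges on (a) correctly identifying the filtration so that $R_j^k$ and $\Phi_k$ are $\mathcal{F}_{<r}$-measurable for $r>k$, making the martingale-difference and Fisher-information cancellations rigorous, and (b) the subsequent discount bookkeeping that condenses the surviving sums into the stated closed forms. This is precisely the mechanism underlying the smoothness estimate of~\cite[Theorem~4.3]{KZ-AK-HZ-TB:20}, and it specializes the on-policy computation in~\cite[Lemma~2]{PM-AM-JC:25-l4dc}. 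A minor secondary point is the differentiation-under-the-integral and measurability justification already noted above.
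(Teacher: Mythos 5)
Your overall architecture --- Policy Gradient Theorem, a decomposition into an ``$L$-term'' coming from differentiating the scores and a ``$\tilde B^2$-term'' coming from the sensitivity of the trajectory measure, followed by geometric-series bookkeeping --- matches the paper's proof, which works directly with the difference $\nabla V_j(\theta_1)-\nabla V_j(\theta_2)$ (following the steps of \cite[Lemma~3.2]{KZ-AK-HZ-TB:20}) rather than with a Hessian. The problem is the step you yourself single out as the crux. The quantity you must control is not $\E[\Phi_k\Phi_k^\top]$ but $\E\big[\gamma^k R_j^k\,\Phi_k\Phi_k^\top\big]$, and the reward weight destroys the orthogonality of the martingale differences: for $t<s\le k$ the cross term $\E[R_j^k\,\nabla\chi_{a_t,s_t}\nabla\chi_{a_s,s_s}^\top]$ does not vanish, because $R_j^k$ is a function of the trajectory up to time $k\ge s$ and hence is not measurable with respect to the filtration at time $s^-$; conditioning therefore does not annihilate $\nabla\chi_{a_s,s_s}$. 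The same obstruction defeats the Fisher-information identity once the reward sits inside the expectation. So the claimed collapse of $\Phi_k\Phi_k^\top$ to its diagonal is not justified, and the proposal as written does not go through.

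The reduction is also unnecessary, and your diagnosis of the ``naive'' estimate is what led you astray: keeping the full $(k+1)^2$-term expansion of $\Phi_k\Phi_k^\top$ does \emph{not} force a constant growing polynomially in $T$, because each block still carries the discount $\gamma^k$. Re-indexing by score index $t$ and offset $\tau=k-t$ turns the full count into $\sum_{t=0}^{T}\sum_{\tau=0}^{T}\gamma^{t+\tau}B_j\tilde B^2(t+\tau+1)$, which is precisely the bound the paper derives and which is then summed in closed form using $\sum_t\gamma^t$ and $\sum_t t\gamma^t$ --- no martingale or Fisher-information argument is needed anywhere. Your piece (I) and the final geometric-series identities are fine, and your caveat that Assumption~\ref{as:differentiability-lipschitzness-policy} only gives $C^{1,1}$ regularity of $\chi_{a,s}$ --- so that one should estimate differences of gradients rather than literally bound a Hessian --- is well taken and is in fact how the paper proceeds.
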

\begin{proof}
  By the Policy Gradient Theorem~\cite[Section~13.2]{RSS-AGB:18}, for
  any $\theta\in\real^d$,
  \begin{align*}
    \nabla V_j(\theta) \! = \! \sum_{t=0}^T \! \sum_{\tau=0}^{T}
    \int\limits_{\Ic} \! \! 
    \gamma^{t+\tau} R_j(s_{t+\tau}, \! a_{t+\tau} , \! s_{t+\tau+1})
    \nabla \chi_{a_t,s_t} p_{\theta} d\sigma, 
  \end{align*}
  where $\Ic = \Sc^{T+1}\times\Ac^{T+1}$,
  $d\sigma = ds_0 ds_1 \hdots ds_T da_0 da_1 \hdots da_T$, and
  \begin{align*}
    p_{\theta} = \bigg( \prod_{k=0}^{t+\tau} P(s_{k+1},s_k,a_k) \bigg)
    \bigg( \prod_{k=0}^{t+\tau}\pi_{\theta}(a_k|s_k) \bigg) \eta(s_0). 
  \end{align*}
  Now, by following the same steps as in the proof of~\cite[Lemma
  3.2]{KZ-AK-HZ-TB:20},
  \begin{multline*}
    \norm{\nabla V_j(\theta_1)-\nabla V_j(\theta_2)}
    \leq
      \sum_{t=0}^T
      \sum_{\tau=0}^T  
      \gamma^{t+\tau} B_j L\norm{\theta_1-\theta_2} +
    \\
    \sum_{t=0}^T \sum_{\tau=0}^T \gamma^{t+\tau} B_j \tilde{B}^2
      (t+\tau+1)\norm{\theta_1-\theta_2}  .
  \end{multline*}
  Now, by using the formulas
  \begin{align*}
    &\sum_{t=0}^T \gamma^t = \frac{1-\gamma^T}{1-\gamma}, \quad 
      \sum_{t=0}^T t \gamma^t = \gamma \frac{ 1-(T+1)\gamma^T +
      T\gamma^{T+1} }{ (1-\gamma)^2 }, 
  \end{align*}
  we get
  \begin{align*}
    &\norm{\nabla V_j(\theta_1)-\nabla V_j(\theta_2)} \leq B_j L
      \norm{\theta_1-\theta_2} \Big( \frac{1-\gamma^T}{1-\gamma}
      \Big)^2
    \\
    &\qquad \qquad + 2 B_j\tilde{B}^2 \norm{\theta_1-\theta_2} \gamma
      \frac{ 1-(T+1)\gamma^T + T\gamma^{T+1} }{ (1-\gamma)^2 }
    \\
    &\qquad \qquad + B_j \tilde{B}^2 \norm{\theta_1-\theta_2} \Big(
      \frac{1-\gamma^T}{1-\gamma} \Big)^2, 
  \end{align*}
  from where the result follows.
\end{proof}

\section{Slater's Condition}\label{sec:appendix-slaters-condition}

The following result provides a sufficient condition under which
Slater's condition holds for~\eqref{eq:R-definition} for each
$\theta\in\real^d\backslash\Cc$.

\begin{lemma}\longthmtitle{Slater's condition}\label{lem:Slaters}
  Let $\delta:\real^d\to\real_{+}$ be a continuous function, and suppose
  that for each $\theta\in\real^d\backslash\Cc$, there exists
  $\xi\in\real^d$ that satisfies
  $\alpha V_j(\theta)+\nabla V_j(\theta)^\top \xi < -\delta(\theta)$.
  Consider $\xi^*:\real^d\to\real^d$ defined as
  \begin{align}\label{eq:xi-star}
    \notag
    \xi^*(\theta) &= \argmin{\xi\in\real^d}{\norm{\xi}^2} \\
                  &\text{s.t.} \ \alpha V_j(\theta)+\nabla
                    V_j(\theta)^\top \xi  \leq 0, \ j\in[q]. 
  \end{align}
  Select a differentiable function $\beta$ such that
  \begin{align*}
    \frac{\beta(\theta)}{2}\norm{\xi^*(\{ V_j(\theta), \nabla
    V_j(\theta) \}_{j=1}^q ) } < \delta(\theta).
  \end{align*}
  Then, Slater's condition holds for~\eqref{eq:R-definition} for every
  $\theta\in\real^d\backslash\Cc$.
\end{lemma}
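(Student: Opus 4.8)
The plan is to solve the problem by exhibiting an explicit Slater point for the quadratically constrained program~\eqref{eq:R-definition}, i.e., a single $\hat\xi\in\real^d$ making every constraint strictly negative. Fix $\theta\in\real^d\backslash\Cc$. The hypothesis supplies a witness $\bar\xi$ that strictly satisfies the linearized constraints with a uniform margin, $\alpha V_j(\theta)+\nabla V_j(\theta)^\top\bar\xi < -\delta(\theta)$ for all $j\in[q]$. The core idea is that the extra quadratic term $\tfrac{\beta(\theta)}{2}\norm{\cdot}^2$ appearing in~\eqref{eq:R-definition} is nonnegative and, when $\beta$ is small, is dominated by this margin $\delta(\theta)$; evaluating the $j$-th constraint of~\eqref{eq:R-definition} at such a witness gives a value strictly below $-\delta(\theta)+\tfrac{\beta(\theta)}{2}\norm{\bar\xi}^2$, which is negative precisely when $\tfrac{\beta(\theta)}{2}\norm{\bar\xi}^2<\delta(\theta)$.

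Concretely, first I would record the hypothesis at the fixed $\theta$ and substitute the witness into each of the $q$ constraints of~\eqref{eq:R-definition}, using only that the quadratic term is nonnegative and that $\beta(\theta)>0$. The role of the minimum-norm point $\xi^*(\theta)$ defined in~\eqref{eq:xi-star} is twofold: it furnishes the witness of smallest possible norm, so that the requirement on $\beta$ is as mild as possible, and --- being the unique solution of a strictly convex parametric quadratic program with data $\{V_j(\theta),\nabla V_j(\theta)\}_{j=1}^q$ that varies continuously in $\theta$ --- it depends continuously on $\theta$. This continuity is what allows the selection of a \emph{differentiable} $\beta$ meeting the stated bound $\tfrac{\beta(\theta)}{2}\norm{\xi^*(\theta)}<\delta(\theta)$ uniformly over $\real^d\backslash\Cc$, e.g. by smoothing any continuous choice taken strictly below $2\delta(\theta)/(\norm{\xi^*(\theta)}+1)$. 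Combining the per-$\theta$ strict feasibility with this choice of $\beta$ then yields Slater's condition at every $\theta\in\real^d\backslash\Cc$.

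The main obstacle is that the minimum-norm point of the \emph{non-strict} feasible set $\{\xi:\alpha V_j(\theta)+\nabla V_j(\theta)^\top\xi\le0\}$ generically lies on the boundary --- some constraints are active, so $\xi^*(\theta)$ itself satisfies the corresponding inequalities only with equality and is therefore not strictly feasible for the constraints of~\eqref{eq:R-definition} once the positive term $\tfrac{\beta(\theta)}{2}\norm{\xi^*(\theta)}^2$ is added. The argument must therefore keep the strict margin $\delta(\theta)$ of the hypothesis witness intact while controlling its norm by $\xi^*(\theta)$; the clean way to do this is to work with the minimum-norm element of the strictly feasible set $\{\xi:\alpha V_j(\theta)+\nabla V_j(\theta)^\top\xi\le-\delta(\theta)\}$ (equivalently, a convex combination of $\xi^*(\theta)$ with the hypothesis witness that retains the margin), whose norm is no larger than that of any strict witness. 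The remaining care is purely in verifying the parametric continuity of this selection, so that the differentiability requirement on $\beta$ imposed by Assumption~\ref{as:regularity-functions-Vi} can be met.
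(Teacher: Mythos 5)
Your plan follows the same two-step skeleton as the paper's proof: (a) since~\eqref{eq:xi-star} satisfies Slater's condition for every $\theta\in\real^d\backslash\Cc$, the min-norm selection $\xi^*$ is continuous there (the paper cites \cite[Theorem 5.3]{AVF-JK:85}), so a differentiable $\beta$ obeying the stated bound exists; and (b) exhibit a strictly feasible point for~\eqref{eq:R-definition}. The divergence is in step (b), and it is to your credit. The paper's proof simply asserts that $\xi^*(\theta)$ itself is strictly feasible for~\eqref{eq:R-definition}. You correctly observe that this cannot be taken at face value: for $\theta\notin\Cc$ the origin is infeasible for~\eqref{eq:xi-star}, so $\xi^*(\theta)$ lies on the boundary of $\{\xi:\alpha V_j(\theta)+\nabla V_j(\theta)^\top\xi\le0\}$, at least one constraint is active there, and adding the strictly positive term $\frac{\beta(\theta)}{2}\norm{\xi^*(\theta)}^2$ makes that constraint of~\eqref{eq:R-definition} positive. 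Your repair --- evaluate the constraints of~\eqref{eq:R-definition} at a witness that retains the margin $\delta(\theta)$ supplied by the hypothesis, so that the quadratic term is absorbed whenever $\frac{\beta(\theta)}{2}\norm{\cdot}^2<\delta(\theta)$ --- is the argument the authors evidently intended, with~\eqref{eq:xi-star} read as having right-hand side $-\delta(\theta)$ rather than $0$.

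One loose end remains in your version, inherited from the statement rather than introduced by you. The lemma constrains $\beta$ through $\norm{\xi^*(\theta)}$ (first power, for the min-norm point of the \emph{unshrunk} set), whereas your argument needs $\frac{\beta(\theta)}{2}$ times the \emph{squared} norm of a point of the \emph{shrunk} set $\{\xi:\alpha V_j(\theta)+\nabla V_j(\theta)^\top\xi\le-\delta(\theta)\}$ to be below $\delta(\theta)$. Since the shrunk set is contained in the feasible set of~\eqref{eq:xi-star}, its min-norm element has norm at least $\norm{\xi^*(\theta)}$, so ``no larger than that of any strict witness'' is true but is not the comparison that closes the argument: the stated hypothesis on $\beta$ does not directly deliver the bound you need. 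To finish cleanly you must restate the condition on $\beta$ in terms of the min-norm point of the shrunk set (whose continuity follows by the same Slater argument you already invoke), or equivalently fix the right-hand side of~\eqref{eq:xi-star}. This does not change the strategy, only the quantitative reading of the hypothesis, and it is the same correction the paper's own one-line justification silently requires.
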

\begin{proof}
  Since~\eqref{eq:xi-star} satisfies Slater's condition for all
  $\theta\in\real^d\backslash\Cc$, $\xi^*$ is continuous at every
  $\theta\in\real^d\backslash\Cc$~\cite[Theorem 5.3]{AVF-JK:85}.
  Therefore, a differentiable $\beta$ as required in the statement
  exists.  Now, it follows that
  $\xi^*(\{ V_j(\theta), \nabla V_j(\theta) \}_{j=1}^q )$ is strictly
  feasible for~\eqref{eq:R-definition} for each
  $\theta\in\real^d\backslash\Cc$, and the result follows.
\end{proof}

In particular, if $\delta$ is uniformly lower bounded by a positive
constant and $\xi^*$ is uniformly upper bounded, there exists a
constant $\beta$ function that makes Slater's condition hold for
$\theta\in\real^d\backslash\Cc$.  We also note that the feasibility of
the linear inequalities
$\alpha V_j(\theta) + \nabla V_j(\theta)^\top \xi < -\delta(\theta)$
can be verified using Farkas' Lemma~\cite[Theorem 22.1]{RTR:70}.
Therefore, one can verify the feasibility of such linear inequalities
and select an appropriate $\beta$ to satisfy Slater's condition in
$\real^d\backslash\Cc$.

Next, we provide a condition for CRC to hold
for~\eqref{eq:R-definition}.

\begin{lemma}\longthmtitle{Constant rank condition}\label{lem:crc}
  Let $\tilde{q} = 1$, $\theta\in\Cc$ and
  suppose~\eqref{eq:optimization-problem-V0-to-Vqtilde} satisfies
  MFCQ.  Then,~\eqref{eq:R-definition} satisfies CRC at
  $(\theta,\Rc_{\alpha,\beta}(\theta))$.
\end{lemma}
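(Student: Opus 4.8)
The plan is to exploit that with $\tilde{q}=1$ the parametric problem~\eqref{eq:R-definition} has a single constraint, whose $\xi$-gradient is trivial to analyze. Write $g(\xi,\theta) = \alpha V_1(\theta) + \nabla V_1(\theta)^\top \xi + \frac{\beta(\theta)}{2}\norm{\xi}^2$, so that $\nabla_\xi g(\xi,\theta) = \nabla V_1(\theta) + \beta(\theta)\xi$. The active set $\tilde{I}_0$ at the base point $(\Rc_{\alpha,\beta}(\theta),\theta)$ is a subset of $\{1\}$, hence either empty or $\{1\}$, and CRC only constrains subfamilies indexed by subsets of $\tilde{I}_0$. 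First I would dispose of the inactive case: then $\tilde{I}_0 = \emptyset$, the only admissible subfamily is the empty one, which has rank $0$ at every point, and CRC holds trivially. The remaining work is entirely in the case where the constraint is active at the base point.

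In the active case, $\tilde{I}_0 = \{1\}$ and the only nontrivial subfamily is the single vector $\nabla_\xi g$, whose rank is $1$ wherever it is nonzero and $0$ where it vanishes. Thus CRC reduces to showing that $w := \nabla V_1(\theta) + \beta(\theta)\Rc_{\alpha,\beta}(\theta)$ is nonzero: continuity of $\nabla V_1$ and $\beta$ (Assumption~\ref{as:regularity-functions-Vi}) then guarantees that $(\xi,\theta') \mapsto \nabla V_1(\theta') + \beta(\theta')\xi$ stays nonzero on a neighborhood of $(\Rc_{\alpha,\beta}(\theta),\theta)$, so the rank of this one-element family is constantly $1$ there, while the empty subfamily remains rank $0$.

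The crux is therefore to rule out $w=0$, which I would do by contradiction. If $w=0$ then $\Rc_{\alpha,\beta}(\theta) = -\nabla V_1(\theta)/\beta(\theta)$, using $\beta>0$. Substituting this into the active-constraint equation $g(\Rc_{\alpha,\beta}(\theta),\theta)=0$ and simplifying yields
\begin{align*}
  \alpha V_1(\theta) = \frac{\norm{\nabla V_1(\theta)}^2}{2\beta(\theta)}.
\end{align*}
Since $\theta\in\Cc$ gives $V_1(\theta)\leq 0$, the left-hand side is nonpositive while the right-hand side is nonnegative; both must therefore vanish, forcing $V_1(\theta)=0$ and $\nabla V_1(\theta)=0$ simultaneously. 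But $V_1(\theta)=0$ makes the constraint active for~\eqref{eq:optimization-problem-V0-to-Vqtilde} at $\theta$, so MFCQ requires a direction $\xi$ with $\nabla V_1(\theta)^\top \xi < 0$, which is impossible when $\nabla V_1(\theta)=0$. This contradiction establishes $w\neq 0$.

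This nonvanishing argument is the main obstacle, and it is precisely where both hypotheses enter: feasibility $\theta\in\Cc$ fixes the sign of $\alpha V_1(\theta)$, and MFCQ excludes the degenerate coincidence $V_1(\theta)=0$ together with $\nabla V_1(\theta)=0$. The surrounding steps—the case split on activity and the continuity-based propagation of constant rank to a neighborhood—are routine. I would close by combining the two cases to conclude that CRC holds at $(\theta,\Rc_{\alpha,\beta}(\theta))$.
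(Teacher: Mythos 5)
Your proof is correct and follows essentially the same route as the paper's: split on whether the single constraint is active, reduce CRC to the nonvanishing of $\nabla V_1(\theta)+\beta(\theta)\Rc_{\alpha,\beta}(\theta)$, and rule out the degenerate case by substituting into the active-constraint equation, using $V_1(\theta)\leq 0$ to force $V_1(\theta)=0$, and invoking MFCQ for the contradiction. The only cosmetic difference is that you eliminate $\Rc_{\alpha,\beta}(\theta)$ in favor of $\nabla V_1(\theta)$ whereas the paper does the reverse; the two computations are identical.
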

\begin{proof}
  If the single constraint of~\eqref{eq:R-definition} is not active,
  then CRC trivially holds.  Suppose that it is active.  The gradient
  with respect to $\xi$ of the single constraint
  of~\eqref{eq:R-definition} evaluated at
  $\xi = \Rc_{\alpha,\beta}(\theta)$ is
  $g_{\theta} = \nabla V_1(\theta) +
  \beta(\theta)\Rc_{\alpha,\beta}(\theta)$.  Note that if
  $g_{\theta} \neq \mathbf{0}_d$, then there exists a neighborhood
  $\Nc$ of $(\theta,\Rc_{\alpha,\beta}(\theta))$ such that if
  $(\bar{\theta},\bar{\xi})\in\Nc$, then
  $\nabla V_1(\bar{\theta}) + \beta(\bar{\theta})\bar{\xi}$ has the
  same rank as $g_{\theta}$.  Alternatively, if $g_{\theta}=0$, then
  $0 = \alpha V_1(\theta) + \nabla V_1(\theta)^\top
  \Rc_{\alpha,\beta}(\theta) +
  \frac{\beta(\theta)}{2}\norm{\Rc{\alpha,\beta}(\theta)}^2 = \alpha
  V_1(\theta) -
  \frac{\beta(\theta)}{2}\norm{\Rc_{\alpha,\beta}(\theta)}^2$.  This
  implies that $V_1(\theta) = 0$ and
  $\Rc_{\alpha,\beta}(\theta) = \mathbf{0}_d$.  Since MFCQ holds
  for~\eqref{eq:optimization-problem-V0-to-Vqtilde} and
  $V_1(\theta) = 0$, then $\nabla V_1(\theta) \neq \mathbf{0}_d$
  necessarily. However, since
  $\Rc_{\alpha,\beta}(\theta) = \mathbf{0}_d$ this contradicts the
  fact that $g_{\theta}=\mathbf{0}_d$.
\end{proof}

Finally, we state a few inequalities from probability theory 
used along the paper.

\begin{lemma}\longthmtitle{Popoviciu's inequality~\cite[Corollary
    1]{RB-CD:00}}\label{lem:popovicius-inequality}
  Let $X$ be a real-valued random variable. Let
  $m, M\in\real$ be such that $m \! \leq \! X \! \leq \! M$ almost surely.  Then,
  $\text{Var}(X) \! \leq \! \frac{(M-m)^2}{4}$.
\end{lemma}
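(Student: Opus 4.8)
The plan is to exploit the variational characterization of the variance: for any constant $c\in\real$, one has $\VaR(X) \leq \mathbb{E}[(X-c)^2]$, with equality precisely when $c = \mathbb{E}[X]$. I would establish this by adding and subtracting $\mathbb{E}[X]$ inside the square, which yields the exact decomposition $\mathbb{E}[(X-c)^2] = \VaR(X) + (\mathbb{E}[X]-c)^2$; here the cross term vanishes because $\mathbb{E}[X-\mathbb{E}[X]] = 0$. The upshot is that the mean minimizes the mean-squared deviation, so \emph{any} other choice of $c$ furnishes a valid upper bound on $\VaR(X)$, and this freedom is what the argument will capitalize on.

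First I would instantiate this bound at the midpoint $c = \frac{m+M}{2}$ of the interval $[m,M]$. The reason for this particular choice is that the midpoint minimizes the worst-case distance to points of $[m,M]$: since $m \leq X \leq M$ almost surely, it follows that $|X - \frac{m+M}{2}| \leq \frac{M-m}{2}$ almost surely, and hence $(X - \frac{m+M}{2})^2 \leq \frac{(M-m)^2}{4}$ almost surely. This is the only place where the boundedness hypothesis enters.

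Taking expectations of the pointwise inequality gives $\mathbb{E}[(X - \frac{m+M}{2})^2] \leq \frac{(M-m)^2}{4}$. Combining this with the variational bound $\VaR(X) \leq \mathbb{E}[(X-\frac{m+M}{2})^2]$ obtained by setting $c = \frac{m+M}{2}$ in the first step yields $\VaR(X) \leq \frac{(M-m)^2}{4}$, which is the claim.

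Since this is a classical and elementary inequality, I do not anticipate any genuinely hard step. The only point requiring a moment of thought is the selection of the centering constant: one must use the midpoint $\frac{m+M}{2}$ rather than the mean $\mathbb{E}[X]$, because the mean is not assumed to be known and, more importantly, the almost-sure bound $(X-c)^2 \leq \frac{(M-m)^2}{4}$ is only available at the midpoint. Centering at the mean would instead reproduce the variance itself and yield no estimate.
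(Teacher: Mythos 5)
Your proof is correct and is the standard argument for Popoviciu's inequality: the variational identity $\mathbb{E}[(X-c)^2] = \VaR(X) + (\mathbb{E}[X]-c)^2$ gives $\VaR(X) \leq \mathbb{E}[(X-c)^2]$ for any $c$, and centering at the midpoint $c = \frac{m+M}{2}$ yields the almost-sure bound $(X-c)^2 \leq \frac{(M-m)^2}{4}$, from which the claim follows by taking expectations. The paper does not prove this lemma at all --- it simply cites \cite[Corollary 1]{RB-CD:00} --- so there is nothing to compare against; your argument is a complete and self-contained justification of the cited result.
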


\begin{lemma}\longthmtitle{Hoeffding's
    inequality~\cite{WH:63-hoeffding}}\label{lem:hoeffdings-inequality}
  Let $X_1, \hdots, X_n$ be independent random variables. Suppose
  there exist $a_i, b_i \in \real$ for $i\in[n]$ such that
  $a_i \leq X_i \leq b_i$ almost surely.  Let
  $S_n = X_1 + \hdots + X_n$.  Then, for any $\epsilon > 0$,
  \begin{align*}
    \mathbb{P}(|S_n - \mathbb{E}[S_n]| \geq \epsilon) \leq 2 \exp
    \Big( -\frac{2\epsilon^2}{ \sum_{i=1}^n (b_i-a_i)^2 } \Big). 
  \end{align*}
\end{lemma}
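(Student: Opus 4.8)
The plan is to establish the one-sided deviation bound via the exponential Chernoff method and then symmetrize. First I would fix $\lambda > 0$ and apply Markov's inequality to the nonnegative random variable $e^{\lambda(S_n - \mathbb{E}[S_n])}$, obtaining
\[
  \mathbb{P}(S_n - \mathbb{E}[S_n] \geq \epsilon) \leq e^{-\lambda \epsilon}\, \mathbb{E}\big[ e^{\lambda(S_n - \mathbb{E}[S_n])} \big].
\]
Because the $X_i$ are independent, the centered variables $X_i - \mathbb{E}[X_i]$ are independent as well, so the moment generating function factorizes as $\mathbb{E}[e^{\lambda(S_n - \mathbb{E}[S_n])}] = \prod_{i=1}^n \mathbb{E}[e^{\lambda(X_i - \mathbb{E}[X_i])}]$, reducing the problem to a bound on a single centered, bounded summand.

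The crux of the argument, and the step I expect to be the main obstacle, is controlling each individual factor, i.e., establishing Hoeffding's lemma: for a zero-mean random variable $Y$ with $Y \in [a,b]$ almost surely, $\mathbb{E}[e^{\lambda Y}] \leq e^{\lambda^2 (b-a)^2 / 8}$. I would prove this by analyzing the log-moment-generating function $\psi(\lambda) = \log \mathbb{E}[e^{\lambda Y}]$, noting $\psi(0) = 0$ and $\psi'(0) = \mathbb{E}[Y] = 0$, and then showing that $\psi''(\lambda)$ equals the variance of $Y$ under the exponentially tilted measure $d\mathbb{Q}_\lambda \propto e^{\lambda Y}\, d\mathbb{P}$. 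Since $Y$ still takes values in $[a,b]$ under $\mathbb{Q}_\lambda$, Popoviciu's inequality (cf. Lemma~\ref{lem:popovicius-inequality}) yields $\psi''(\lambda) \leq (b-a)^2/4$, and a second-order Taylor expansion with remainder then gives $\psi(\lambda) \leq \lambda^2 (b-a)^2/8$, which is precisely the claimed bound.

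Combining the two steps produces $\mathbb{P}(S_n - \mathbb{E}[S_n] \geq \epsilon) \leq \exp\big(-\lambda \epsilon + \tfrac{\lambda^2}{8}\sum_{i=1}^n (b_i - a_i)^2\big)$ for every $\lambda > 0$. I would then minimize the exponent over $\lambda$, the optimizer being $\lambda^\star = 4\epsilon / \sum_{i=1}^n (b_i - a_i)^2$, to obtain the one-sided tail bound $\exp\big(-2\epsilon^2 / \sum_{i=1}^n (b_i - a_i)^2\big)$. Finally, applying the identical argument to $-S_n$, whose summands $-X_i$ lie in the equally wide intervals $[-b_i, -a_i]$, controls the lower tail, and a union bound over the events $\{S_n - \mathbb{E}[S_n] \geq \epsilon\}$ and $\{S_n - \mathbb{E}[S_n] \leq -\epsilon\}$ supplies the factor of $2$, completing the proof.
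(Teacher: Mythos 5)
Your proof is correct. Note that the paper itself offers no proof of this lemma: it is stated in the appendix purely as a classical result with a citation to Hoeffding's 1963 paper, so there is no in-paper argument to compare against. Your route is the standard one --- Chernoff bounding via Markov's inequality applied to $e^{\lambda(S_n-\mathbb{E}[S_n])}$, factorization of the moment generating function by independence, Hoeffding's lemma for each bounded centered summand, optimization over $\lambda$ (your $\lambda^\star = 4\epsilon/\sum_{i=1}^n(b_i-a_i)^2$ and the resulting exponent $-2\epsilon^2/\sum_{i=1}^n(b_i-a_i)^2$ check out), and a union bound for the two-sided statement. Your derivation of Hoeffding's lemma through the second derivative of the log-moment-generating function, bounding the variance under the exponentially tilted measure by Popoviciu's inequality (Lemma~\ref{lem:popovicius-inequality}, which the paper also states), is a clean and self-contained way to get the constant $1/8$, and it correctly uses the fact that tilting does not enlarge the support of $Y$.
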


\begin{lemma}\longthmtitle{Fr\'echet's
    Inequality~\cite{MF:35}}\label{lem:Frechet-ineq}
  Let $\{ A_i \}_{i=1}^n$ be $n\in\mathbb{Z}_{>0}$ events.  Then,
  \begin{align*}
    \mathbb{P}\bigg( \bigcap_{i=1}^n A_i \bigg) \geq \max\Bigl\{ 0,
    \sum_{i=1}^n \mathbb{P}(A_i) - (n-1)  \Bigr\}. 
  \end{align*}
\end{lemma}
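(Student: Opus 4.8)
The plan is to prove the two quantities inside the maximum separately and then combine them. The bound $\mathbb{P}(\bigcap_{i=1}^n A_i) \geq 0$ is immediate from the nonnegativity of any probability measure, so it requires no work. The entire substance of the claim lies in the second bound, namely $\mathbb{P}(\bigcap_{i=1}^n A_i) \geq \sum_{i=1}^n \mathbb{P}(A_i) - (n-1)$; once this is in hand, the stated inequality with the maximum follows trivially by observing that $\mathbb{P}(\bigcap_{i=1}^n A_i)$ is then at least as large as each of the two expressions.

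To establish the nontrivial bound, I would pass to complements. Writing $A_i^c$ for the complement of $A_i$, De Morgan's law gives $\bigcap_{i=1}^n A_i = \left(\bigcup_{i=1}^n A_i^c\right)^c$, so that $\mathbb{P}(\bigcap_{i=1}^n A_i) = 1 - \mathbb{P}(\bigcup_{i=1}^n A_i^c)$. The one genuine ingredient is then the union bound (Boole's inequality), which yields $\mathbb{P}(\bigcup_{i=1}^n A_i^c) \leq \sum_{i=1}^n \mathbb{P}(A_i^c)$. Since $\mathbb{P}(A_i^c) = 1 - \mathbb{P}(A_i)$, the right-hand side equals $n - \sum_{i=1}^n \mathbb{P}(A_i)$. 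Substituting back produces $\mathbb{P}(\bigcap_{i=1}^n A_i) \geq 1 - \big(n - \sum_{i=1}^n \mathbb{P}(A_i)\big) = \sum_{i=1}^n \mathbb{P}(A_i) - (n-1)$, which is exactly what is needed.

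An equivalent route I would keep in reserve is induction on $n$. The base case $n=2$ follows from inclusion-exclusion together with $\mathbb{P}(A_1 \cup A_2) \leq 1$, giving $\mathbb{P}(A_1 \cap A_2) \geq \mathbb{P}(A_1) + \mathbb{P}(A_2) - 1$; for the inductive step one applies this two-event bound to the pair $B = \bigcap_{i=1}^{n-1} A_i$ and $A_n$, then invokes the induction hypothesis on $\mathbb{P}(B)$. In either formulation there is essentially no obstacle: the only nontrivial fact invoked is the union bound, which is standard, and the whole argument reduces to a two-line manipulation once complements are taken. I would favor the complement/union-bound derivation for its brevity and present it as the main proof.
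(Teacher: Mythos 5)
Your proof is correct. Note, however, that the paper does not prove this lemma at all: it is stated in the appendix as a classical result, with a citation to Fr\'echet's original work, and is used as an off-the-shelf tool (e.g., in the proofs of Theorem~\ref{th:safety-guarantees} and Corollary~\ref{cor:safety-finite-time-horizon}). Your complement/union-bound argument is the standard derivation of this Bonferroni-type inequality, and both it and your inductive alternative are valid; there is simply no proof in the paper to compare against.
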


\vspace*{-8ex}

\begin{IEEEbiography}[{\includegraphics[width=1in,height=1.2in,clip,keepaspectratio]{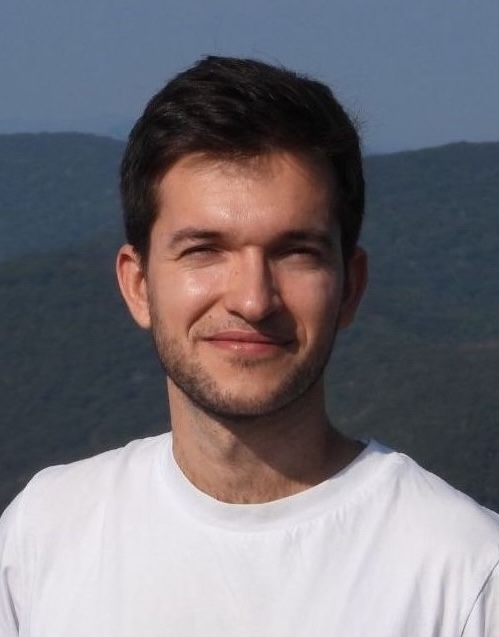}}]{Pol
    Mestres} received the Bachelor's degree in mathematics and the
  Bachelor's degree in engineering physics from the Universitat
  Polit\`{e}cnica de Catalunya, Barcelona, Spain, in 2020, and the
  Master's degree in mechanical engineering in 2021 from the
  University of California, San Diego, La Jolla, CA, USA, where he is
  currently a Ph.D candidate. His research interests include
  safety-critical control, optimization-based controllers, distributed
  optimization and motion planning.
\end{IEEEbiography}

\vspace*{-8ex}

\begin{IEEEbiography}[{\includegraphics[width=1in,height=1.2in,clip,keepaspectratio]{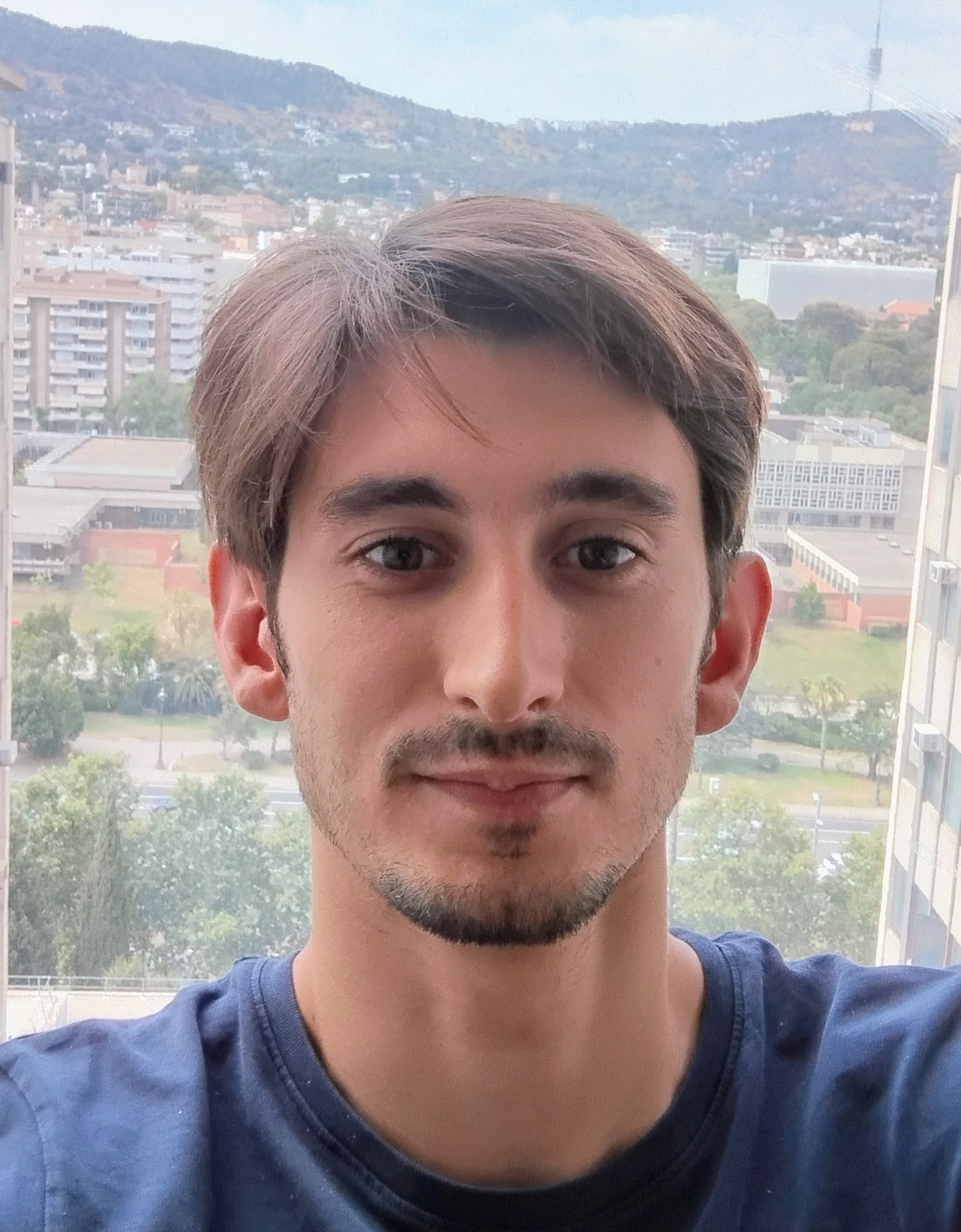}}]{Arnau
    Marzabal} received the Bachelor's degrees in Engineering Physics
  and in Industrial Engineering from the Universitat Polit\`{e}cnica
  de Catalunya, Barcelona, Spain, in 2025. He conducted his bachelor
  thesis at the University of California, San Diego, La Jolla, CA,
  USA, under the supervision of Prof. Jorge Cort{\'e}s.  His research
  interests include robotics, machine learning and
  data-driven control.
\end{IEEEbiography}

\vspace*{-8ex}

\begin{IEEEbiography}[{\includegraphics[width=1in,height=1.2in,clip,keepaspectratio]{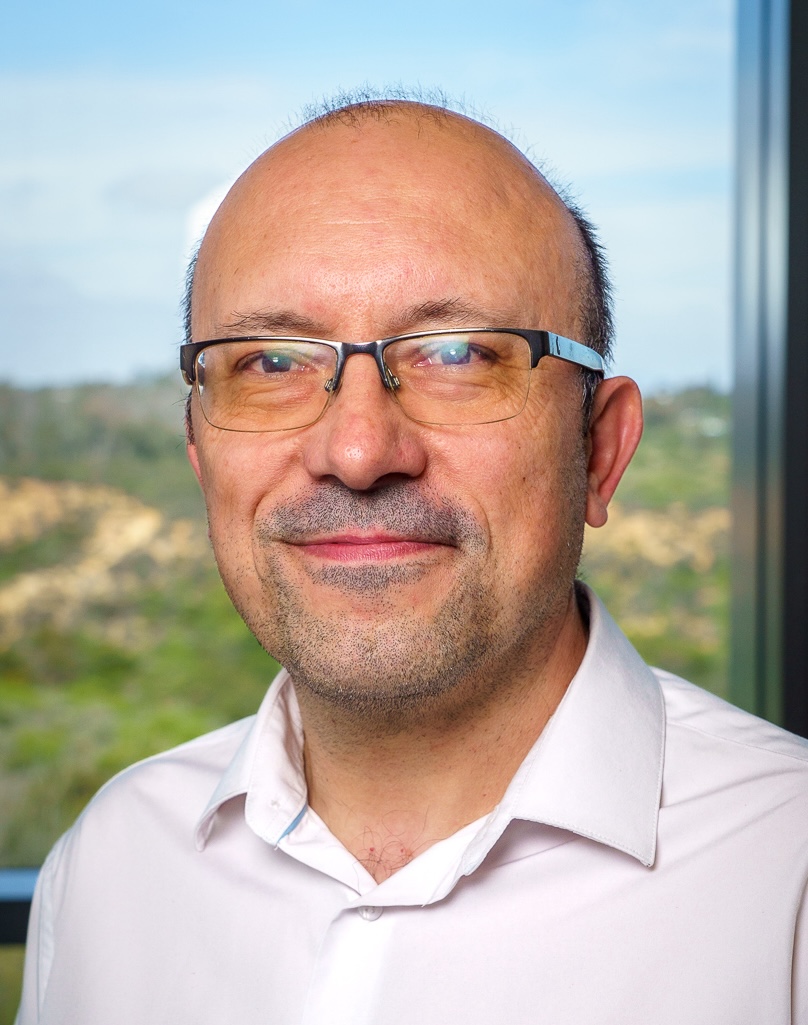}}]{Jorge
    Cort\'{e}s}(M'02, SM'06, F'14) received the Licenciatura degree in
  mathematics from Universidad de Zaragoza, Spain, in 1997, and the
  Ph.D. degree in engineering mathematics from Universidad Carlos III
  de Madrid, Spain, in 2001. He held postdoctoral positions with the
  University of Twente, Twente, The Netherlands, and the University of
  Illinois at Urbana-Champaign, Illinois, USA.
  He is a Professor and Cymer Corporation Endowed Chair in High
  Performance Dynamic Systems Modeling and Control at the Department
  of Mechanical and Aerospace Engineering, UC San Diego, California,
  USA.  He is a Fellow of IEEE, SIAM, and IFAC.  His research
  interests include distributed control and optimization, network
  science, nonsmooth analysis, reasoning and decision making under
  uncertainty, network neuroscience, and multi-agent coordination in
  robotic, power, and transportation networks.
\end{IEEEbiography}

\end{document}